
\documentclass[journal,10pt]{IEEEtran}

\usepackage{cite}
\usepackage{amsmath,amssymb,amsfonts}
\usepackage{algorithmic}
\usepackage{graphicx}
\usepackage{textcomp}
\usepackage{xcolor}
\usepackage{bm}
\usepackage{diagbox}
\usepackage{multirow}
\usepackage{siunitx}
\usepackage{mathrsfs}

\usepackage[justification=centering]{caption}

\usepackage{graphics,
           psfrag,
           epsfig,
           amsthm,
           cite,
           amssymb,
           url,
           dsfont,
           subfigure,
           algorithm,
           algorithmic,
           balance,
           enumerate,
           color,
           setspace
}

\usepackage{amsmath}               
 {
      \theoremstyle{plain}
      \newtheorem{assumption}{Assumption}

 }

\newtheorem{lemma}{Lemma}

\newtheorem{corollary}{Corollary}

\newtheorem{remark}{Remark}
\usepackage{color,soul}

\newcommand{\eref}[1]{(\ref{#1})}
\newcommand{\sref}[1]{Section~\ref{#1}}

\newcommand{\cref}[1]{Constraint~\ref{#1}}

\newcommand{\tref}[1]{Table~\ref{#1}}

\begin{document}

\title{Constrained Wrapped Least Squares: A Tool for High Accuracy GNSS Attitude Determination}

\author{Xing Liu,
        Tarig Ballal,~\IEEEmembership{Member, IEEE},
        Hui Chen,~\IEEEmembership{Member, IEEE},
        and~Tareq Y. Al-Naffouri,~\IEEEmembership{Senior Member, IEEE}
\thanks{Xing Liu, Tarig Ballal, and Tareq Y. Al-Naffouri are with the Division of Computer, Electrical and Mathematical Sciences, and Engineering, King Abdullah University of Science and Technology, Thuwal 23955-6900, Saudi Arabia (e-mail: xing.liu@kaust.edu.sa; tarig.ahmed@kaust.edu.sa; tareq.alnaffouri@kaust.edu.sa).}

\thanks{Hui Chen is with the Department of Electrical Engineering, Chalmers University of Technology, 41296 Gothenburg, Sweden (e-mail: hui.chen@chalmers.se).}}


\maketitle

\begin{abstract}
Attitude determination is a popular application of Global Navigation Satellite Systems (GNSS). Many methods have been developed to solve the attitude determination problem with different performance offerings. We develop a constrained wrapped least-squares (C-WLS) method for high-accuracy attitude determination. This approach is built on an optimization model that leverages prior information related to the antenna array and the integer nature of the carrier-phase ambiguities in an innovative way. The proposed approach adopts an efficient search strategy to estimate the vehicle's attitude parameters using ambiguous carrier-phase observations directly, without requiring prior carrier-phase ambiguity fixing. The performance of the proposed method is evaluated via simulations and experimentally utilizing data collected using multiple GNSS receivers. The simulation and experimental results demonstrate excellent performance, with the proposed method outperforming the ambiguity function method, the constrained LAMBDA and multivariate constrained LAMBDA methods, three prominent attitude determination algorithms.
\end{abstract}

\begin{IEEEkeywords}
GNSS, attitude determination, phase observations, integer ambiguity resolution, wrapped least squares
\end{IEEEkeywords}

\IEEEpeerreviewmaketitle

\section{Introduction}
\IEEEPARstart{P}{latform} attitude, crucial information in navigation and vehicle control, is defined as the orientation of a given body frame relative to a reference coordinate \cite{hofmann2012global}. Attitude determination via multiple GNSS antennas mounted on a moving vehicle is an important research field with a wide variety of applications \cite{8713879, 8781924,8972427, 5975219, ChiOct2014,6359888, 8869594}. Recent developments and interest in autonomous systems have resulted in even more attention drawn to this research area \cite{ bijjahalli2017novel,vetrella2016differential, wang2019research,liugnss2019}. 





GNSS attitude determination makes use of simultaneous pseudo-range and carrier-phase observations at multiple antennas from one or more GNSS frequencies \cite{hofmann2007gnss}. It is well known that carrier-phase observations are some orders of magnitude more precise than pseudo-range data \cite{giorgi2013low}. The latter cannot satisfy the needs of high-quality attitude estimation and localization applications. On the other hand, carrier-phase measurements are ambiguous by unknown integer numbers of cycles or wavelengths. Consequently, carrier-phase integer ambiguity resolution is a key and prime difficulty in precise attitude determination. Many ambiguity resolution methods have been proposed over the years, which can be categorized into motion-based methods \cite{crassidis1999global, ChuMay2001, wang2010motion, psiaki2006batch} and search-based methods \cite{Teunissen1995,ziebart2003leo,LiJul2004,PurApr2010, GunHenJul2012, ballal2014gnss}. Motion-based methods have no capacity for real-time solutions due to the requirement of motion information. In contrast, search-based methods can provide instantaneous attitude estimation using the observations from a single time point (or epoch) \cite{LiJul2004}.

Two classes of search-based attitude determination methods can be identified. In the first category, the search is conducted in the (float) attitude domain. Among these methods, the ambiguity function method (AFM) \cite{Counselman1981AFM,han1996improving,yang2016rotation} is the most popular. AFM-based algorithms generally minimize an objective function over a grid of possible attitude angles \cite{LiJul2004}. A considerable number of grid points in the search domain are required to obtain the global optimal, which results in high computational complexity \cite{teunissen2017springer}.

The second category of search-based method applies the search in the ambiguity domain over the space of all potential integer combinations. The most prominent approaches of this class are the LAMBDA method \cite{Teunissen1995} and its various modified versions \cite{6491499,monikes2001modified,chang2005mlambda,Wang2009,GunHenJul2012,teunissen2010integer}. The LAMBDA method searches for the optimal carrier-phase integer ambiguities based on the integer least-squares (ILS) principle with high efficiency. However, it does not consider the prior knowledge of the antenna array configuration. This prior information is beneficial for both enhancing the success rate of ambiguity resolution and the accuracy of attitude estimation \cite{liu2018integrated, Ahmed2020RieOpt, liu2018gnss}. Integer ambiguity resolution under the antenna-geometry constraints is a complex non-convex problem that is very difficult to solve, especially in real-time. The constraint LAMBDA (C-LAMBDA) method has been developed for single-baseline attitude determination, which integrates the baseline length into the optimization model \cite{teunissen2010integer}. In the same spirit, the multivariate constraint LAMBDA (MC-LAMBDA) method was proposed to improve 3-D attitude determination by incorporating information from multiple baselines \cite{teunissen2007general, giorgi2010testing}. These two approaches can significantly enhance the performance of ambiguity resolution at the cost of requiring a more sophisticated search algorithm to meet the nonlinear constraints on the antenna array geometry \cite{park2009integer, giorgi2013low}. Despite the resounding success of these LAMBDA variants, their computational efficiency might be inadequate in challenging environments with significant multipath effects or an adverse view of satellites, such as in urban canyon scenarios.

The main contribution of this paper lies in developing a novel GNSS attitude determination method based on a proposed \emph{constrained wrapped least-squares} (C-WLS) optimization criterion. The designed approach incorporates the antenna array information, the ambiguity integer characteristics, and residual phase constraints into the optimization model. The proposed approach belongs to the search-based category. The method considers the attitude parameters as the only unknown variables and rigorously assimilates all the available constraints. Compared with existing algorithms in the float domain \cite{Counselman1981AFM,han1996improving,yang2016rotation}, the computational complexity of the proposed method is kept low by reducing the search space to  a subset that most likely contains the correct solution. This is done without sacrificing any prior information or constraints. Instead of performing integer ambiguity resolution and attitude estimation in two separate steps, as is done in most traditional approaches \cite{li2002new,cheng2014direct}, we use the ambiguous carrier-phase measurements directly to estimate the attitude information. 
We demonstrate the effectiveness of our approach in different scenarios. As shown by our simulation and experimental results, the main benefit of the proposed method is improving the success rate and computational complexity, especially in challenging situations with a limited number of satellites or large measurement noise.



This paper is organized as follows. \sref{sec:observation} formulates the fundamental GNSS attitude determination problem using the double-difference observation model. \sref{sec:cls} discusses the standard optimization model to solve the attitude determination problem. \sref{sec:promodel} presents the proposed constraint wrapped least-squares method and theoretically analyzes various aspects of the proposed approach. \sref{sec:solution} describes the adopted search strategy to find the attitude solution. In \sref{sec:result}, simulation and experimental results are presented, demonstrating the feasibility of the proposed approach. In \sref{sec:conclusion}, we draw the conclusion of this work.

\section{Background} 
\label{sec:model}
\subsection{Observation Model} 
\label{sec:observation}




For a platform with $\mathcal{A}+1$ GNSS antennas tracking $\mathcal{S}+1$ satellites, the original observations collected at the $a$-th antenna ($a = 0, 1, 2, \cdots, \mathcal{A}$) from the $s$-th satellite's signal ($s = 0, 1, 2, \cdots, \mathcal{S}$) can be modelled as
\begin{equation}
\begin{aligned}
{\rho}_{a}^s &= P _{a}^s + I_{a}^s + T_{a}^s + \frac{c}{\lambda}( {\delta {t_a} - \delta {t^s}} ) + \varepsilon _{a}^s,\\
\psi _{a}^s &= P _{a}^s +  {{N}_{a}^s} - I_{a}^s + T_{a}^s + \frac{c}{\lambda}( {\delta {t_a} - \delta {t^s}} ) + \eta _{a}^s,
\end{aligned}
\label{eq2}
\end{equation}
with
\begin{equation}
P _{a}^s = \frac{1}{\lambda}{\left\| {\mathbf{L}^s - \mathbf{l}_a} \right\|_2}.
\label{eqtion3}
\end{equation}
The variables in (\ref{eq2})--(\ref{eqtion3}) are defined as  follows: $\rho_{a}^s$ and $\psi_{a}^s$ are the pseudo-range and carrier-phase observables (respectively), $P_{a}^s$ denotes the distance between the $s$-th satellite and the $a$-th antenna, $I_{a}^s$ is the ionospheric delay, $T_{a}^s$ is the tropospheric delay, ${N_{a}^s}$ is the integer ambiguity, $\lambda$ represents the wavelength for the given frequency, $c$ is the speed of light, $\delta {t}^s$ and $\delta {t}_{a}$ denote the satellite and receiver clock biases, respectively, $\mathbf{L}^s$ is the satellite location, $\mathbf{l}_a$ is the antenna position, and $\varepsilon_{a}^s$ and $\eta_{a}^s$ account for unmodelled errors and noise. 

For a configuration with multiple antennas, as shown in Fig.~\ref{fig:BodyFrame}, let us denote the baseline direction vector between antenna 0 and antenna $a$ represented in the reference coordinate by ${\mathbf{x}}_a \in {\mathbb{R}^{3}}$. To mitigate the atmospheric effect and satellite clock biases, we calculate the difference of observations at antenna $a$ and the reference antenna, say, antenna 0. This operation is usually denoted as \emph{single difference}, and it results in the the observation model 
\begin{equation}
\begin{aligned}
{\rho}_{a0}^s &= {\rho}_{a}^s - {\rho}_{0}^s 
=P _{a0}^s + \frac{c}{\lambda}\delta {t_{a0}} + \varepsilon _{a0}^s,\\
\psi _{a0}^s &= \psi _{a}^s -\psi _{0}^s 
= P _{a0}^s + {{N}_{a0}^s} + \frac{c}{\lambda}\delta {t_{a0}} + \eta _{a0}^s.
\end{aligned}
\end{equation}
In the attitude determination problem, the baseline length is usually quite short (meter-level or shorter) \cite{zhao2017real}. This makes the atmospheric delays for two antennas highly correlated and can almost be eliminated by the single-difference operation. Since the baseline length is far smaller than the distance from the satellite to the antennas, $P_{a0}^s$ can be represented as \cite{teunissen2012gps}
\begin{equation}
P_{a0}^s =P_{a}^s - P _{0}^s = \frac{1}{\lambda} \left({\bf{h}} ^s \right)^\text{T} {\mathbf{x}}_a,
\end{equation}
where ${\bf{h}} ^s \in {\mathbb{R}^{3}}$ is the unit line-of-sight vector between the receiver and satellite $s$, and ${\text{T}}$ denotes the matrix transpose. 


Another difference operation is carried out on the single-difference observations over pairs of satellites to cancel out the receiver clock bias. By designating satellite 0 as a reference, the difference process results in the so-called \emph{double-difference} model, which is given by
\begin{equation}
\begin{aligned}
\rho_{a0}^{s0} &= {\rho}_{a0}^{s} - {\rho}_{a0}^{0} 
={\frac{1}{\lambda}\!\left({\bf{h}} ^s-{\bf{h}}^{0} \right)^\text{T} {\mathbf{x}}_a} + \varepsilon_{a0}^{s0},\\
 \psi _{a0}^{s0} &=  \psi _{a0}^{s} - \psi _{a0}^{0}
= {\frac{1}{\lambda} \left({\bf{h}} ^s - {\bf{h}} ^{0}\right)^\text{T} {\mathbf{x}}_a} +  {{N}} _{a0}^{s0} + \eta _{a0}^{s0}.
 \end{aligned}
 \label{dd:eq2}
\end{equation}
The double-difference operation eliminates most of the common errors such as atmospheric delay, ephemeris errors, and clock errors. 
\begin{figure}[tbp]
\centering
\includegraphics[width=0.31\textwidth]{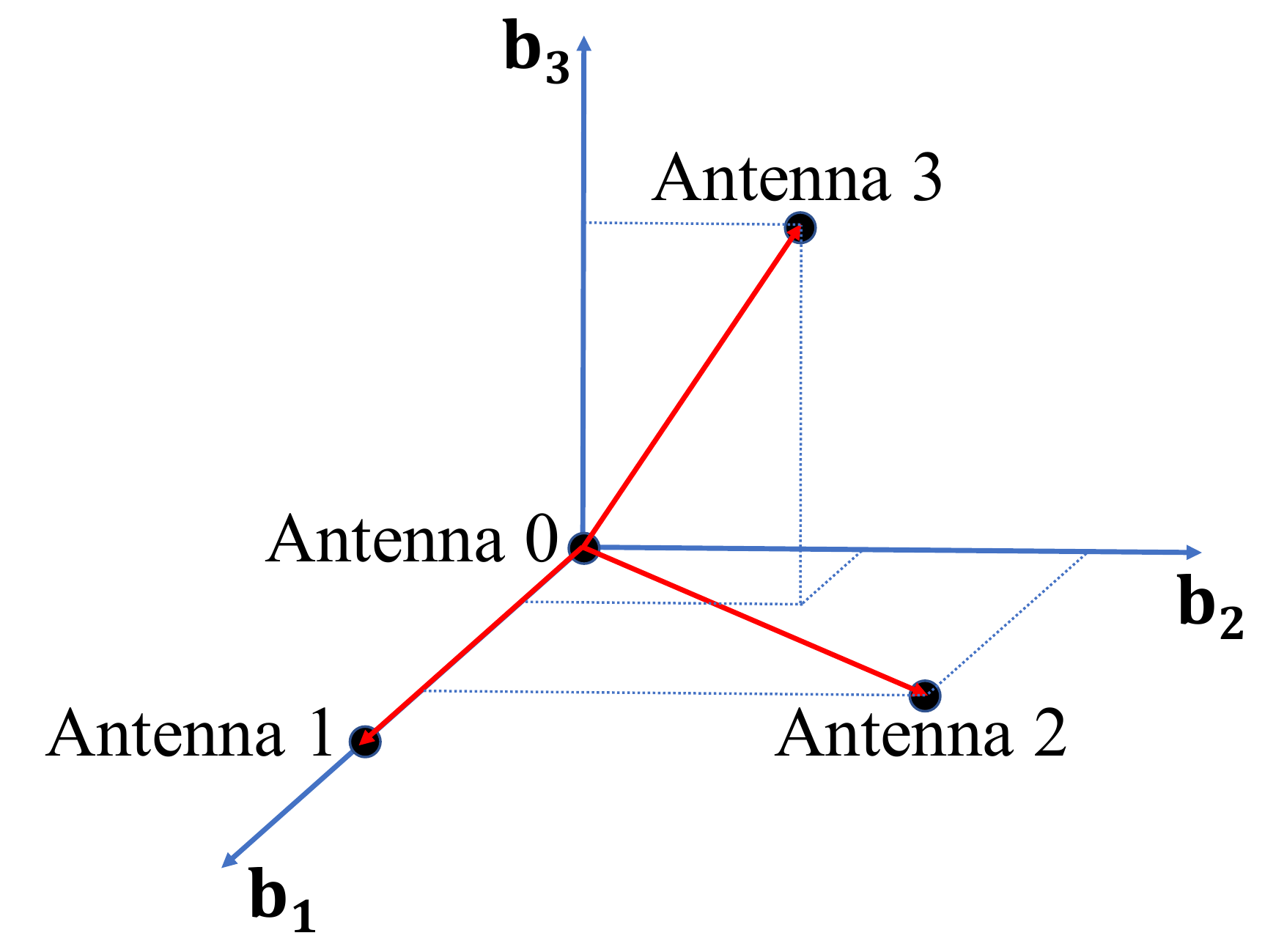}
\caption{GNSS antennas in the body frame.}
\label{fig:BodyFrame}
\end{figure}

For $a$-th baseline (the baseline between antenna 0 and antenna $a$), we can arrange the double-difference observations from various satellites in the vectors
\begin{equation}
{\bm{\rho }}_a = {\left[\!\! {\begin{array}{*{20}{c}}
  {{\rho }_{a0}^{10}}& \cdots &{{\rho}_{a0}^{\mathcal{S}0}}
\end{array}} \!\! \right]^{\text{T}}}, \quad
{\bm{\psi}}_a  = {\left[\!\! {\begin{array}{*{20}{c}}
  {\psi _{a0}^{10}}& \cdots &{{\psi}_{a0}^{\mathcal{S}0}} 
\end{array}} \!\! \right]^{\text{T}}}.
\label{dd:eq4}
\end{equation}
By arranging the other parameters in (\ref{dd:eq2}) in a vector or a matrix to match (\ref{dd:eq4}), the standard double-difference model is given by the following linear observation equations:
\begin{equation}
\begin{aligned}
{\bm{\rho}}_a &= {\mathbf{H}}\mathbf{x}_a + \bm{\epsilon}_a,\\
  {\bm{\psi}}_a &= {\mathbf{H}}{\mathbf{x}_a} + \mathbf{N}_a + \bm{\eta}_a,
  \end{aligned}
\label{eq:model2}
\end{equation}
where $\bm{\epsilon}_a, \bm{\eta}_a \in {\mathbb{R}^{\mathcal{S}}}$ are the unmodeled error and noise vectors, $\mathbf{N}_a \in {\mathbb{Z}^{\mathcal{S}}}$ is the unknown integer ambiguity vector, and ${\mathbf{H}} \in {\mathbb{R}^{\mathcal{S} \times 3}}$ is a design (non-singular) matrix with the form
\begin{equation}
{\bf{H}} = \frac{1}{\lambda}\left[ \!\! {\begin{array}{*{20}{c}}
\left({{\bf{h}}^1} - {{\bf{h}}^{0}}\right)^\text{T}\\
 \vdots \\
\left({{\bf{h}}^{\mathcal{S}}} - {{\bf{h}}^{0}}\right)^\text{T}
\end{array}}\!\! \right] = \left[\!\! {\begin{array}{*{20}{c}}
{{\bf{\tilde h}}_1}\\
 \vdots \\
{{\bf{\tilde h}}_{\mathcal{S}}}
\end{array}} \!\! \right].
\label{eq:H}
\end{equation}


The process of estimating $\mathbf{x}_a$ based on \eqref{eq:model2} is referred to as \emph{single-baseline attitude determination}. The baseline orientation given by the vector $\mathbf{x}_a$ only partly characterizes the platform's attitude. To fully describe the platform's attitude, two or more non-collinear baselines are required. By combining the measurement vectors for multiple baselines in a matrix, the 3-D GNSS attitude observation model is given by
\begin{equation}
\begin{aligned}
{\mathbf{P}} &= {\mathbf{H}}\mathbf{X} + \bm{\Xi},\\
  {\bm{\Psi}} &= {\mathbf{H}}{\mathbf{X}} + \mathbf{N} + \bm{\Pi},
  \end{aligned}
  \label{eq:model4}
\end{equation}
with
\[{\mathbf{P}} \!=\! \left[ \!\!\!{\begin{array}{*{20}{c}}
  {{\bm{\rho}}_{_1}} \!\!\! &\cdots \!\!\! &{\bm{\rho}}_{_\mathcal{A}} 
\end{array}} \!\!\!\right],
{\bm{\Psi}} \!= \!\left[ \!\!\!{\begin{array}{*{20}{c}}
  {{{\bm{\psi}}_{_1}}}& \!\!\!\cdots \!\!\! &{{{\bm{\psi}}_{_\mathcal{A}}}} 
\end{array}} \!\!\!\right],
{{\mathbf{X}}} \!= \! \left[\!\!\! {\begin{array}{*{20}{c}}
  {{{\mathbf{x}}_{_1}}}& \!\!\!\cdots \!\!\!&{{{\mathbf{x}}_{_\mathcal{A}}}} 
\end{array}}\!\!\! \right],\]
where $\mathbf{N} $, $\boldsymbol{\Xi}$ and $\boldsymbol{\Pi}$ are the corresponding matrix forms of the integer ambiguities and error variables. Note that the pseudo-range and carrier-phase are assumed to be uncorrelated, whose covariance matrices are given by
\[{{\mathbf{Q}}_{{\mathbf{P}}}} = \operatorname{cov}\!\left[ \operatorname{vec}\!\left({\mathbf{P}}\right)\right] = \operatorname{E}\!\!\left[ \operatorname{vec}({\mathbf{P}} \!-\! \operatorname{E}({\mathbf{P}}) )\left[\operatorname{vec}({\mathbf{P}} \!-\! \operatorname{E}({\mathbf{P}}) )\right]^{\text{T}}\right],\]
\[{{\mathbf{Q}}_{{\bm{\Psi }}}} = \operatorname{cov}\!\left[\operatorname{vec}\!\left( {\bm{\Psi}}\right)\right] = \operatorname{E}\!\!\left[ \operatorname{vec}(\bm{\Psi } \!-\! \operatorname{E}(\bm{\Psi }) )\left[\operatorname{vec}(\bm{\Psi } \!-\! \operatorname{E}(\bm{\Psi }) )\right]^{\text{T}}\right],\]
\[{{\mathbf{Q}}} = \operatorname{cov}\!\!\left( \!\left[\!\!\! {\begin{array}{*{20}{c}}
  \operatorname{vec}\!\left({\bm{\Psi}}\right) \\ 
  \operatorname{vec}\!\left({\bf{P}}\right)
\end{array}} \!\!\!\right]\!\right) = \left[ \!\!\!{\begin{array}{*{20}{c}}
  {\mathbf{Q}_{{\bm{\Psi }}}}&\mathbf{O} \\ 
  \mathbf{O}&{\mathbf{Q}_{{\bf{P}}}} 
\end{array}}\!\!\!\right],\]
where $\operatorname{vec}(\cdot)$ is the vectorization operation, $\operatorname{E}( \cdot )$ denotes the expectation operator, and $\mathbf{O}$ is a zero matrix of dimension $(\mathcal{A}\mathcal{S}) \times (\mathcal{A}\mathcal{S})$.

\subsection{The Constrained Integer Least-Squares Solution}
\label{sec:cls}
In this subsection, we summarize the standard technique for solving the attitude determination problem based on the observation model (\ref{eq:model4}). In the setup under consideration, the antennas are firmly fixed on a rigid platform such that their coordinates in the body frame can be precisely measured. Hence, the baseline coordinate matrix expressed in the body-frame coordinate system, ${\mathbf{X}_b}$, is known. The matrix ${\mathbf{X}_b}$ represents the translation of the matrix ${\mathbf{X}}$ in (\ref{eq:model4}) from the chosen reference coordinate system to the body frame. For both ${\mathbf{X}}$ and ${\mathbf{X}_b}$, all the nonlinear geometrical constraints of the baseline lengths and the baseline relative orientations are satisfied. A rotation matrix links these two baseline matrices and maintains the geometrical constraints intact, that is
\begin{equation}
{{\mathbf{X}}} = {{\mathbf{R}}}{{\mathbf{X}_b}},
\end{equation}
where ${\mathbf{R}} \in {\mathbb{O}^{3 \times q}}$ is an orthogonal matrix, ${\mathbf{R}}^{\text{T}}{\mathbf{R}} = {\mathbf{I}_q}$, ${\mathbf{I}_q}$ is an identity matrix of dimension $q$, and $q = \operatorname{min}(3,\mathcal{A})$. The essence of the 3-D attitude determination is to estimate the rotation matrix $\mathbf{R}$, which represents the orientation of the body frame relative to the reference frame. 

Depending on the value of $\mathcal{A}$ and $q$, we identify three different cases. First, for $\mathcal{A} = 1$, $q = 1$, ${\mathbf{R}}$ is a column vector that contains only two independent entries (the third entry is determined by the unit-norm requirement). In this case, ${\mathbf{X}_b}$ degenerates to a scalar value
\begin{equation}
{{\bf{X}}_b} = {x_{_{11}}}.
\end{equation}
For $\mathcal{A} = 2$, $q = 2$, the baseline coordinates in the body frame can be defined as
\begin{equation}
{{\bf{X}}_b} = \left[\!\! {\begin{array}{*{20}{c}}
{{x_{_{11}}}}&{{x_{_{21}}}}\\
0&{{x_{_{22}}}}
\end{array}}\!\! \right].
\end{equation}
Finally, for $\mathcal{A} \geq 3$, $q = 3$, the baseline coordinates in the body frame take the form
\begin{equation}
{{\bf{X}}_b} = \left[ \!\!{\begin{array}{*{20}{c}}
{{x_{_{11}}}}&{{x_{_{21}}}}&{{x_{_{31}}}}& \cdots &{{x_{_{\mathcal{A}1}}}}\\
0&{{x_{_{22}}}}&{{x_{_{32}}}}& \cdots &{{x_{_{\mathcal{A}2}}}}\\
0&0&{{x_{_{33}}}}& \cdots &{{x_{_{\mathcal{A}3}}}}
\end{array}}\!\! \right].
\end{equation}

The estimation of the unknown integer ambiguities and the rotation matrix can be formulated as the constrained least-squares (C-LS) optimization \cite{teunissen2010integer}
\begin{equation}
\mathop {\min }\limits_{{\mathbf{R}} \in {\mathbb{O}^{3 \times q}}, {\mathbf{N}} \in {\mathbb{Z}^{\mathcal{S} \times\mathcal{A}}}} {\text{ }}\left\| {\operatorname{vec}\left( {{\mathbf{Y}} - {\mathbf{AR}}{{\mathbf{X}}_b} - {\mathbf{BN}}} \right)} \right\|_{\mathbf{Q}_{\mathbf{Y}}^{-1}}^2,
\label{eq:cls}
\end{equation}
where
\[{\mathbf{Y}} \triangleq \left[ {\begin{array}{*{20}{c}}
  {\bm{\Psi}} \\ 
  {\bf{P}}
\end{array}} \right],
{\mathbf{A}} \triangleq \left[ {\begin{array}{*{20}{c}}
  {\mathbf{H}} \\ 
  {\mathbf{H}} 
\end{array}} \right],
{\mathbf{B}} \triangleq \left[ {\begin{array}{*{20}{c}}
  \mathbf{I}_\mathcal{S} \\ 
  \mathbf{O}
\end{array}} \right].\]
The design matrices ${\mathbf{A}}$ and ${\mathbf{B}}$ link the observation vector to the unknown parameters, ${\mathbf{Q}_{\mathbf{Y}}}$ is the variance-covariance matrix of $\operatorname{vec}\left({\mathbf{Y}}\right)$, and $\left\| (\cdot) \right\|_{{\mathbf{Q}}_{\mathbf{Y}}^{-1}}^2 = (\cdot)^{\text{T}}{\mathbf{Q}}_{\mathbf{Y}}^{-1}(\cdot)$.

The optimization in (\ref{eq:cls}) is known as the \emph{constrained (mixed) integer least-squares} (C-ILS) problem. The C-ILS model handles two types of constraints: the integer constraints of the carrier-phase ambiguities, and the orthogonality constraints of the rotation matrix. This is a non-convex optimization due to the integer property of the carrier-phase ambiguities and the orthogonality constraints. A renowned approach to solve (\ref{eq:cls}) is the MC-LAMBDA method, which makes use of the constrained integer least-squares theory along with a search-and-shrink or search-and-expand strategy \cite{teunissen2010integer}. The MC-LAMBDA method seeks the optimal carrier-phase ambiguities in the integer domain, which may lead to high complexity in some scenarios.


\section{The Proposed Constrained Wrapped Least-Squares Method}
\label{sec:promodel}
To avoid the complex process of resolving the carrier-phase integer ambiguities, we propose a novel optimization model to estimate the rotation matrix directly and recover the unambiguous phase as a by-product. This model leads to a very efficient approach to solve the attitude determination problem. We will start by introducing the proposed model using only carrier-phase observations. Subsequently, we extend the model to include pseudo-range measurements.

To strengthen the model (\ref{eq:cls}), additional constraints (prior information) will be exploited. Given carrier-phase observations, the unknown matrices ${\mathbf{R}}$ and ${\mathbf{N}}$ can be estimated by applying the C-ILS optimization
\begin{equation}
\mathop {\min }\limits_{{\mathbf{R}} \in {\mathbb{O}^{3 \times q}},{\mathbf{N}} \in {\mathbb{Z}^{\mathcal{S} \times\mathcal{A}}}} \left\| \operatorname{vec} \! \left[\bm{\xi} ( {\mathbf{R}}, {\mathbf{N}} ) \right] \right\|_{{\mathbf{Q}}^{-1}_{{\bm{\Psi}}}}^2,
\label{eq:eq23}
\end{equation}
where $\bm{\xi}( {\mathbf{R}}, {\mathbf{N}} )$ is the residual phase error defined as
\begin{equation}
\bm{\xi} ( {\mathbf{R}}, {\mathbf{N}} ) \triangleq {{{\bm{\Psi }} - {\mathbf{H}} {\mathbf{R}}{\mathbf{X}_b}} - {\mathbf{N}}}.
\end{equation}
Let $N_{ij}$ and $\xi_{ij}( {\mathbf{R}}, N_{ij} )$ be the entries in the $i$-th row and $j$-th column of ${\mathbf{N}}$ and $\bm{\xi} ( {\mathbf{R}}, {\mathbf{N}} )$, respectively. Considering the observation error distribution, we can apply an upper-bound to restrict the residual phase, i.e.,
\begin{equation}
\left| \xi_{ij}( {\mathbf{R}}, N_{ij} ) \right| \leqslant \delta, \!\quad i = 1, 2, \cdots, \mathcal{S}; \! \quad j= 1, 2, \cdots, \mathcal{A};
\label{eq:residual}
\end{equation}
where $\left| \cdot \right|$ denotes the absolute value, and $\delta$ is an upper bound that can be related to the noise level. In what follows we will state our main assumption concerning this upper bound. This assumption significantly simplifies the process of estimating the rotation matrix, as will be shown subsequently.

\begin{assumption} \label{as:1}
Double-difference carrier-phase observation noise is confined to an interval bounded by minus and plus a half wavelength. That is
\begin{equation}
\left| \eta _{a0}^{s0} \right| \leqslant \frac{1}{2}, \!\quad s = 1, 2, \cdots, \mathcal{S}; \! \quad a= 1, 2, \cdots, \mathcal{A}.
\label{eq:noise}
\end{equation}
\end{assumption}

This assumption is essential to the development of the proposed attitude determination method. Therefore, we examine the validity of this assumption before proceeding further. We analyze the assumption from two points of view, as follows.
\begin{itemize}
\item \textbf{Measurement noise level}

To analyze the double-difference observation noise level, we apply the Gaussian model to the original carrier-phase measurement noise, that is, we assume the noise of the undifferenced phase follows a Gaussian distribution $\mathcal{N}(0,\sigma_0^2)$. By the rules of error propagation, the noise of the double-difference phase also follows a Gaussian distribution $\mathcal{N}(0,\sigma^2)$ with $\sigma = 2\sigma_0$. The possibility that the double-difference carrier phase noise exceeds half of a cycle is 
\begin{equation}
P_{\frac{1}{2}} = 2 \mathcal{Q} \! \left( {\frac{1}{2}} \right),
\end{equation}
where $\mathcal{Q} (\cdot )$ is the Q-function. As reported in \cite{hofmann2012global}, the carrier-phase observations can be measured to better than 0.01 wavelength. By setting $\sigma_0 = 0.01$ and $\sigma = 0.02$, we find out that $P_{\frac{1}{2}} \approx 0$. Thus, it is generally reasonable to regard $P_{\frac{1}{2}}$ as small enough to be ignored.

\vspace{2mm} 

\item \textbf{Phase observation nature}
  
In the context of attitude determination, only the fractional part of the double-difference phase data is meaningful. This particular property of phase measurements is due to the presence of integer ambiguities. Based on (\ref{eq:model2}), for two different observed noise vectors $\bm{\eta}_a$ and $\bm{\eta}_a  + \Delta \!{\bf{N}}_a$, $\Delta \!{\bf{N}}_a \in \mathbb{Z}^{\mathcal{S}}$, the double-difference phase observations ${\bm{\psi}_a}$ will have identical fractional parts. This results in the same estimations of the rotation matrix and the unambiguous double-difference carrier phase based on (\ref{eq:cls}). The integer $\Delta \!{\bf{N}}_a$ will be absorbed in the integer ambiguity estimation. Therefore, it is impossible to distinguish whether the phase noise is $\bm{\eta}_a$ or $\bm{\eta}_a  + \Delta \!{\bf{N}}_a$. Given that achieving the optimal attitude information is the real goal, we can make an assumption about the integer part of phase noise to restrict the possible solutions in a way that is beneficial in improving the estimation of the rotation matrix. Hence, we can select the integer to minimize the absolute value of phase noise, which will lead the phase noise to fall in $\left[-1/2,1/2\right]$; that is, we consider only the estimations corresponding to observations with the lowest possible noise. In other words, double-difference carrier-phase observation noise can be assumed to be no greater than a half wavelength.
\end{itemize}

Based on Assumption~\ref{as:1}, we obtain the following constraint on the residual phase:  
\begin{equation}
\left| \xi_{ij}( {\mathbf{R}}, N_{ij} ) \right| \leqslant \frac{1}{2}, \! \quad i\!= \!1, 2, \cdots, \mathcal{S};\! \quad j \!= \! 1, 2, \cdots, \mathcal{A}.
\label{res:eq1}
\end{equation}
Note that $\left| \xi_{ij}( {\mathbf{R}}, N_{ij} ) \right| = 1/2$ will result in $N_{ij}$ having multiple solutions for the same ${\mathbf{R}}$. To facilitate obtaining an interesting form of the proposed method, we introduce a minor technical trick to modify (\ref{res:eq1}) by limiting the range of $\xi_{ij}( {\mathbf{R}}, N_{ij} )$ to $\left(-1/2,1/2\right]$, that is
\begin{equation}
-\frac{1}{2} \!<\! \xi_{ij}( {\mathbf{R}}, N_{ij} ) \!\leqslant \! \frac{1}{2}, \! \! \quad \!i\!= \!1, 2, \cdots \!, \mathcal{S};\!\! \quad \! j \!= \! 1, 2, \cdots \!, \mathcal{A}.
\label{res:eq2}
\end{equation}
Combining (\ref{eq:eq23}) with (\ref{res:eq2}), we obtain the following version of the minimization problem for attitude determination:
\begin{subequations}
\label{eq15}
\begin{align}
\label{eq15a}
&\mathop {\min }\limits_{{\mathbf{R}} \in {\mathbb{O}^{3 \times q}},{\mathbf{N}} \in {\mathbb{Z}^{\mathcal{S} \times\mathcal{A}}}}  \left\| \operatorname{vec} \! \left[\bm{\xi} \!( {\mathbf{R}}, {\mathbf{N}} ) \right] \right\|_{{\mathbf{Q}}^{-1}_{{\bm{\Psi}}}}^2, \\
\label{eq15b}
  &\text{s.t.}  -\frac{1}{2} < \xi_{ij}( {\mathbf{R}}, N_{ij} ) \leqslant \frac{1}{2},
\end{align}
\end{subequations}
\[i = 1, 2, \cdots, \mathcal{S};\quad j= 1, 2, \cdots, \mathcal{A}.\]
We can rewrite (\ref{eq15a}) in an alternative form as
\begin{equation}
\mathop {\min }\limits_{{\mathbf{R}} \in {\mathbb{O}^{3 \times q}}} \left( \mathop {\min }\limits_{{\mathbf{N}} \in {\mathbb{Z}^{\mathcal{S} \times\mathcal{A}}}} \left\| \operatorname{vec} \! \left[\bm{\xi} \! \left( {\mathbf{R}}, {\mathbf{N}} \right) \right] \right\|_{{\mathbf{Q}}^{-1}_{{\bm{\Psi }}}}^2 \right).
\label{eq:eq22}
\end{equation}
Given any value of $\mathbf{R}$, the corresponding value of ${\mathbf{N}}$ is
\begin{equation}
{\mathbf{\tilde N}} ( {\mathbf{R}} ) = \operatorname{round}( {{\bm{\Psi }} - {\mathbf{H}} {\mathbf{R}} {\mathbf{X}_b}} ).
\label{eq:eq24}
\end{equation}
with $\operatorname{round}( \cdot )$ being a special rounding function that works exactly like a standard rounding function except that for ${N \in {\mathbb{Z}}}$, $\operatorname{round}( N + 0.5 ) = N$.
Using \eqref{eq:eq24}, \eqref{eq:eq22} can be rewritten as a single optimization
\begin{equation}
\mathop {\min }\limits_{{\mathbf{R}} \in {\mathbb{O}^{3 \times q}}} \left\| \operatorname{vec} \! \left[{{\bm{\Psi }} - {\mathbf{H}} {\mathbf{R}} {\mathbf{X}_b}} -  {\mathbf{\tilde N}} ( {\mathbf{R}} ) \right] \right\|_{{\mathbf{Q}}^{-1}_{{\bm{\Psi }}}}^2.
\label{eq:eq26}
\end{equation}
Based on (\ref{eq:eq26}), a more compact form of the attitude determination problem formulated by \eqref{eq15a} and \eqref{eq15b} is given in the following lemma.


\begin{lemma} \label{lem1}
Based on Assumption~\ref{as:1}, the rotation matrix ${\mathbf{R}}$ can be estimated by solving the minimization
\begin{equation}
\mathop {\min }\limits_{{\mathbf{R}} \in {\mathbb{O}^{3 \times q}}} \left\| 
  {\operatorname{wrap} \! \left[ \operatorname{vec} \! \left({{\bm{\Psi }} - {\mathbf{H}} {\mathbf{R}} {\mathbf{X}_b}} \right) \right]}  \right\|_{{\mathbf{Q}^{-1}_{\bm{\Psi }}}}^2, 
\label{eq3}
\end{equation}
where ${\operatorname{wrap}}( \cdot )$ is defined as
\begin{equation*}
{\operatorname{wrap}}(\cdot) = (\cdot) - {\operatorname{round}}( \cdot).
\label{eq:eq15}
\end{equation*}
\end{lemma}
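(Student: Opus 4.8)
The plan is to show that, once Assumption~\ref{as:1} is imposed as the box constraint \eqref{eq15b}, the integer matrix $\mathbf{N}$ ceases to be a genuine optimization variable and is instead pinned down uniquely by $\mathbf{R}$; substituting this unique value then collapses the objective in \eqref{eq:eq22} to the wrapped form \eqref{eq3}. Accordingly, I would fix an arbitrary $\mathbf{R} \in \mathbb{O}^{3 \times q}$ and analyze the feasible set for $\mathbf{N}$ entry by entry.

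First I would introduce the shorthand $g_{ij} \triangleq (\mathbf{\Psi} - \mathbf{H}\mathbf{R}\mathbf{X}_b)_{ij}$, so that $\xi_{ij}(\mathbf{R}, N_{ij}) = g_{ij} - N_{ij}$. The constraint \eqref{eq15b}, namely $-1/2 < g_{ij} - N_{ij} \leq 1/2$, is then equivalent to $g_{ij} - 1/2 \leq N_{ij} < g_{ij} + 1/2$. This is a half-open interval of unit length, so it contains exactly one integer; hence, for the given $\mathbf{R}$, each entry $N_{ij}$ is determined without freedom, and the feasible set for $\mathbf{N}$ is the single point $\tilde{\mathbf{N}}(\mathbf{R})$. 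The next step is to identify this unique integer with the special rounding of \eqref{eq:eq24}. Whenever $g_{ij}$ is not a half-integer, the unique integer in the interval is simply the nearest integer, i.e. ordinary rounding; in the boundary case $g_{ij} = M + 1/2$ with $M \in \mathbb{Z}$, the interval degenerates to $[M, M+1)$, forcing $N_{ij} = M$, which is precisely the convention $\operatorname{round}(M + 0.5) = M$. This confirms $\tilde{\mathbf{N}}(\mathbf{R}) = \operatorname{round}(\mathbf{\Psi} - \mathbf{H}\mathbf{R}\mathbf{X}_b)$ entrywise, exactly matching the half-open choice made in \eqref{res:eq2}.

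The step I expect to be the main subtlety is justifying why the weighting matrix $\mathbf{Q}^{-1}_{\bm{\Psi}}$ plays no role in the inner minimization. One might anticipate that, because $\mathbf{Q}^{-1}_{\bm{\Psi}}$ is non-diagonal, the integer entries are coupled through the quadratic form and a LAMBDA-style integer search would be needed to minimize over $\mathbf{N}$. The key observation that removes this difficulty is that the box constraint \eqref{eq15b} has already reduced the feasible set to the single point $\tilde{\mathbf{N}}(\mathbf{R})$, so the inner minimization in \eqref{eq:eq22} is vacuous and the value of $\mathbf{Q}^{-1}_{\bm{\Psi}}$ cannot alter the selected $\mathbf{N}$. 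I would state this explicitly, since it is precisely where the constrained formulation departs from the unconstrained integer least-squares of \eqref{eq:cls}.

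Finally, I would substitute $\mathbf{N} = \tilde{\mathbf{N}}(\mathbf{R})$ back into the residual. Entrywise this gives $\xi_{ij} = g_{ij} - \operatorname{round}(g_{ij}) = \operatorname{wrap}(g_{ij})$ by the definition of $\operatorname{wrap}$, so that $\operatorname{vec}[\bm{\xi}(\mathbf{R}, \tilde{\mathbf{N}}(\mathbf{R}))] = \operatorname{wrap}[\operatorname{vec}(\mathbf{\Psi} - \mathbf{H}\mathbf{R}\mathbf{X}_b)]$, which is exactly the argument of the norm in \eqref{eq:eq26}. Since the inner minimization was trivial, the nested problem \eqref{eq:eq22} reduces to a single minimization over $\mathbf{R}$, yielding \eqref{eq3} and completing the proof.
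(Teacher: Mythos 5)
Your proposal is correct and follows essentially the same route as the paper, whose ``proof'' of Lemma~\ref{lem1} is the derivation preceding it: the half-open residual constraint \eqref{res:eq2} pins down $\mathbf{N}$ uniquely as the special rounding \eqref{eq:eq24}, and substitution collapses the nested problem \eqref{eq:eq22} into \eqref{eq:eq26}, i.e.\ \eqref{eq3}. Your write-up is in fact slightly more careful than the paper's at two points it leaves implicit --- the unit-length half-open interval argument identifying the unique feasible integer (including the half-integer boundary case matching the $\operatorname{round}(N+0.5)=N$ convention), and the observation that a non-diagonal $\mathbf{Q}^{-1}_{\bm{\Psi}}$ cannot affect the inner minimization because the feasible set for $\mathbf{N}$ is a singleton.
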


We refer to (\ref{eq3}) as the \emph{constrained wrapped least-squares} (C-WLS) problem. Here, the rotation matrix ${\mathbf{R}}$ is the only unknown. Once ${\mathbf{R}}$ is known, the corresponding integer vector ${\mathbf{N}}$ can be determined from (\ref{eq:eq24}).
\begin{figure}[tbp]
\centering 
\subfigure[] { \label{fig:a} 
\includegraphics[width=0.45\columnwidth]{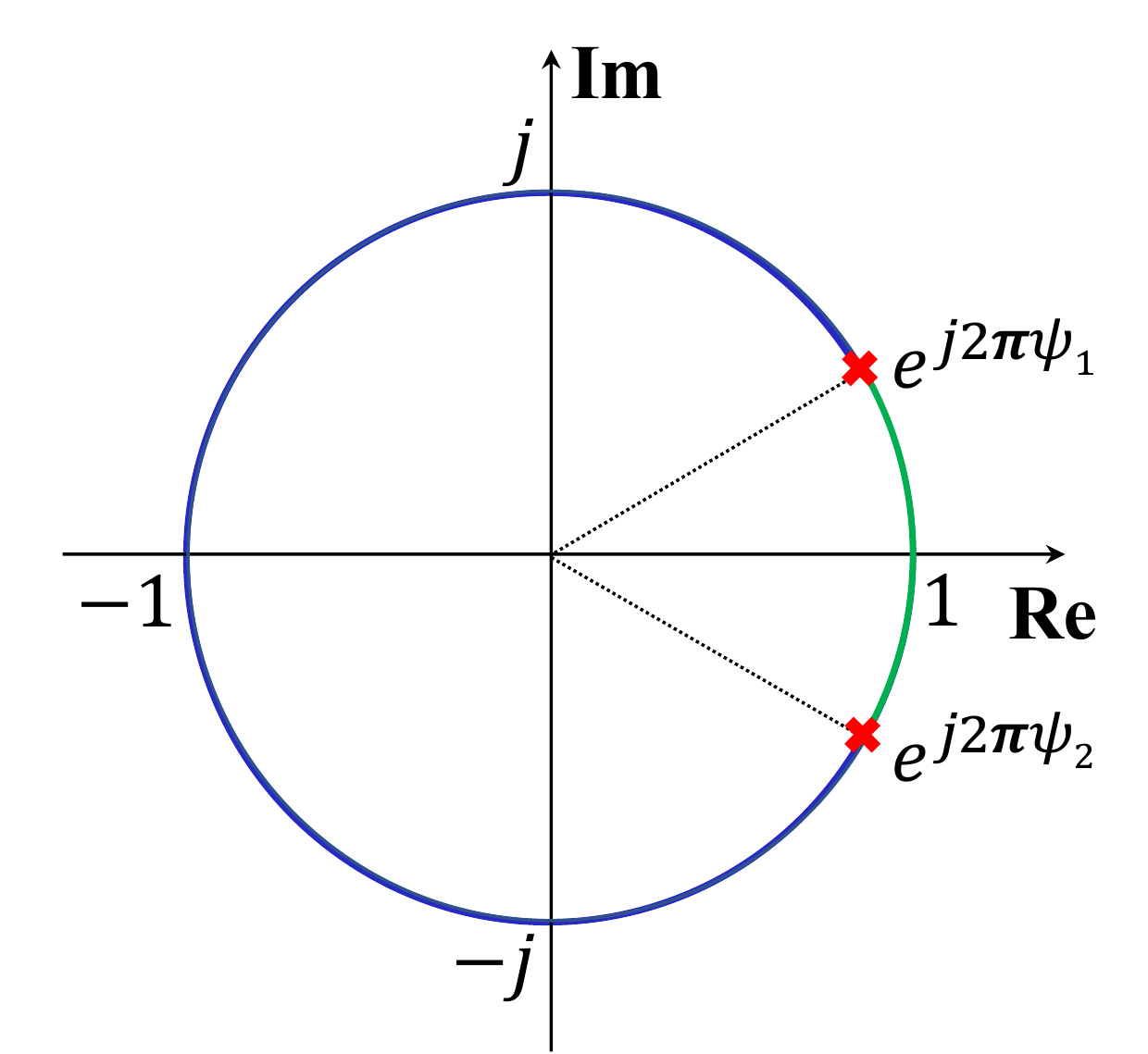}
} 
\subfigure[] { \label{fig:b} 
\includegraphics[width=0.48\columnwidth]{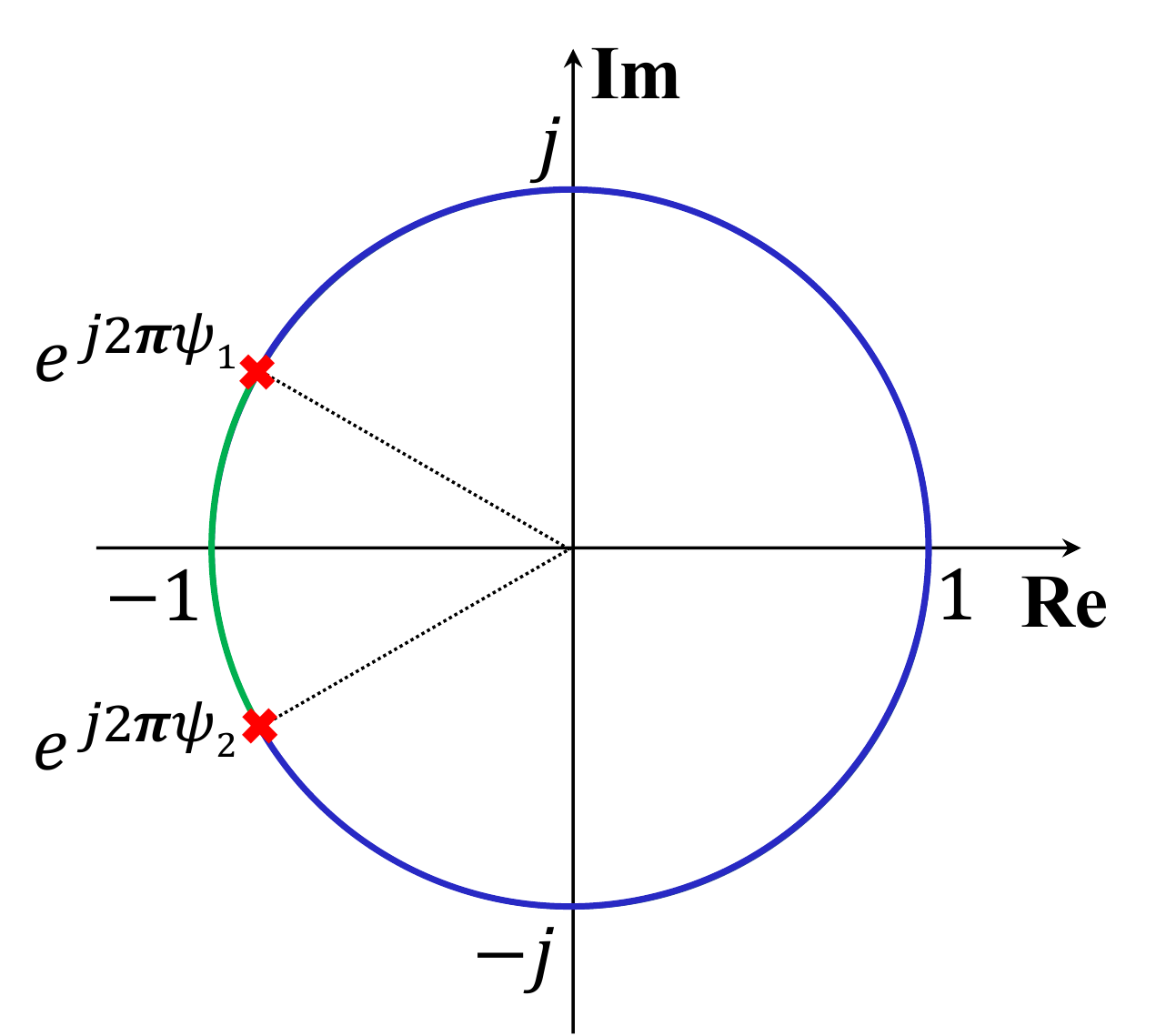} 
} 
\caption{Illustration of the wrapped least squares concept using complex values on the unit circle.} 
\label{fig:cycle} 
\end{figure}

A geometric interpretation of the WLS concept is given in Fig.~\ref{fig:cycle}. The absolute difference between a phase value $\phi_1$, which is an unwrapped version of an element of $\mathbf{\Psi}$, and the corresponding value in $\mathbf{H} \mathbf{R} \mathbf{X}_b$ ($\phi_2$), can be represented by the distance between two points $e^{j2\pi\phi_1}$ and $e^{j2\pi\phi_2}$ on the unit circle. Since $e^{j2\pi\psi_i} = e^{j 2 \pi \phi_i}$, where $\psi_i$ is the wrapped version of $\phi_i$, the absolute difference between $\phi_1$ and $\phi_2$ (on the unit circle) can be measured as the distance between the complex values $e^{j2\pi\psi_1}$ and $e^{j2\pi\psi_2}$ on the unit circle, i.e.,   
\begin{equation}
\left |{\operatorname{wrap}( {\psi_1 - \psi_2} )}\right | = \left |{\operatorname{wrap}( {\phi_1 - \phi_2} )}\right |.
\label{eq:unit cirle distance}
\end{equation}
The caveat here is that representing phase values on the unit circle automatically removes the effect of their integer parts. The distance of interest is always measured over the shorter (green) arc, which corresponds to respecting the half-cycle residual constraint \eref{res:eq2}. Given that only the fractional part of phase data is available, we can still measure the distance (on the unit circle) between the computed phase and the observed unwrapped phase (the observed wrapped phase plus the true integer value) using the fraction parts of the two phases, as emphasized by \eqref{eq:unit cirle distance}.

Based on the above discussion, estimating unknowns using phase measurements can be achieved by matching the fractional part of the phase observations and those of the corresponding predictions. If we consider a scalar phase observation $\psi_1$ and the prediction $\psi_2 = \psi (x)$ for an unknown parameter $x$, the estimation of $x$ can be carried out using
\begin{equation}
\mathop {\min }\limits_{{x} }\left |{\operatorname{wrap}( {\psi_1 - \psi_2} )}\right |.
\end{equation}
Extending the results to the vector/matrix case while considering the available set of constraints yields the proposed C-WLS optimization in \eref{eq3}.  

As compared with the C-ILS method, the C-WLS model maintains the integer constraint on the carrier-phase ambiguities (implicitly) and adds an additional constraint to limit the range of residual phase errors. Note that the C-WLS model keeps only the integer ambiguities satisfying the residual constraint, instead of the entire integer space. The effect of the residual restriction on the rotation matrix estimations can be demonstrated by comparing the solutions of (\ref{eq:eq23}) and (\ref{eq3}), which is the subject of the following lemma. 

\begin{lemma} \label{lem3}
A sufficient and necessary condition for (\ref{eq:eq23}) and (\ref{eq3}) to have the same global optimum regarding ${\mathbf{R}}$ is that the residual phase errors corresponding to the global minimum of (\ref{eq:eq23}) satisfy
\begin{equation}
\left| \xi_{ij}\!\left(\! {\mathbf{\hat R}}, \hat N_{ij} \!\right) \right|\! \leqslant \! \frac{1}{2}, \text{ } i \!=\! 1, 2, \cdots, \mathcal{S};\text{ } j \!=\! 1, 2, \cdots, \mathcal{A},
\label{eq:lem3}
\end{equation}
where $\mathbf{\hat R}$ and ${\mathbf{\hat N}}$ are the globally optimal solutions of (\ref{eq:eq23}), and $\hat N_{ij}$ is the entry in the $i$-th row and $j$-th column of ${\mathbf{\hat N}}$.
\end{lemma}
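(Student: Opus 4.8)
The plan is to recast both problems as minimizations over $\mathbf{R}$ alone and compare their objective functions pointwise. For each fixed $\mathbf{R}$, write $f_{\mathrm{I}}(\mathbf{R})$ for the inner minimum of \eref{eq:eq23} over all $\mathbf{N} \in \mathbb{Z}^{\mathcal{S}\times\mathcal{A}}$, and $f_{\mathrm{W}}(\mathbf{R})$ for the objective of \eref{eq3}. By the derivation preceding \lref{lem1}, $f_{\mathrm{W}}(\mathbf{R})$ is exactly the value of the \emph{residual-constrained} inner minimization, i.e. the minimum of $\|\operatorname{vec}[\bm{\xi}(\mathbf{R},\mathbf{N})]\|^2_{\mathbf{Q}^{-1}_{\bm{\Psi}}}$ subject to \eref{res:eq2}, achieved by the unique feasible integer $\tilde{\mathbf{N}}(\mathbf{R}) = \operatorname{round}(\bm{\Psi} - \mathbf{H}\mathbf{R}\mathbf{X}_b)$ of \eref{eq:eq24}. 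Thus \eref{eq3} is precisely \eref{eq:eq23} with the extra half-cycle constraint \eref{res:eq2} imposed, and C-WLS is a constrained version of C-ILS over the same variable $\mathbf{R}$.

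The key structural fact I would establish first is the pointwise inequality $f_{\mathrm{W}}(\mathbf{R}) \geq f_{\mathrm{I}}(\mathbf{R})$ for every $\mathbf{R}$, together with the statement that equality holds at $\mathbf{R}$ if and only if the residual bound \eref{eq:lem3} is met there. The inequality is immediate because $\tilde{\mathbf{N}}(\mathbf{R})$ is one admissible integer in the unconstrained inner minimization, so the constrained optimum cannot beat the unconstrained one; equality corresponds exactly to the case where rounding already yields the integer-least-squares solution, which is the half-cycle condition. Letting $V = f_{\mathrm{I}}(\hat{\mathbf{R}})$ denote the global C-ILS minimum, this inequality propagates to $\min_{\mathbf{R}} f_{\mathrm{W}}(\mathbf{R}) \geq V$.

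For sufficiency I would argue that if \eref{eq:lem3} holds at $\hat{\mathbf{R}}$, then $f_{\mathrm{W}}(\hat{\mathbf{R}}) = f_{\mathrm{I}}(\hat{\mathbf{R}}) = V$, so $\hat{\mathbf{R}}$ attains the lower bound $V$ and is a global minimizer of $f_{\mathrm{W}}$; conversely, any minimizer $\check{\mathbf{R}}$ of $f_{\mathrm{W}}$ then satisfies $V = f_{\mathrm{W}}(\check{\mathbf{R}}) \geq f_{\mathrm{I}}(\check{\mathbf{R}}) \geq V$, forcing $\check{\mathbf{R}}$ to be a C-ILS minimizer as well, whence $\check{\mathbf{R}} = \hat{\mathbf{R}}$ by uniqueness of the global optimum. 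Necessity I would handle by contraposition: if \eref{eq:lem3} fails at $\hat{\mathbf{R}}$, then $\tilde{\mathbf{N}}(\hat{\mathbf{R}}) \neq \hat{\mathbf{N}}$ and the strict inequality $f_{\mathrm{W}}(\hat{\mathbf{R}}) > V$ holds; combined with the pointwise bound and uniqueness of the ILS minimizer this gives $\min_{\mathbf{R}} f_{\mathrm{W}}(\mathbf{R}) > V$, so the C-WLS optimum is attained at a rotation matrix different from $\hat{\mathbf{R}}$.

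I expect the necessity direction to be the main obstacle, for two reasons. First, the passage from the strict gap $f_{\mathrm{W}}(\hat{\mathbf{R}}) > f_{\mathrm{I}}(\hat{\mathbf{R}})$ to the conclusion that $\hat{\mathbf{R}}$ is \emph{not} the C-WLS minimizer relies on the global optimum being unique: one must exclude degenerate configurations in which the two problems share the same $\mathbf{R}$-minimizer while attaching different integer matrices to it, and I would state this uniqueness/genericity hypothesis explicitly. Second, the half-cycle boundary $|\xi_{ij}| = 1/2$ is delicate, as it is precisely where $\tilde{\mathbf{N}}(\mathbf{R})$ can fail to be single-valued; this is why the special $\operatorname{round}(\cdot)$ convention and the one-sided interval \eref{res:eq2} were introduced, and I would verify that this convention keeps $\tilde{\mathbf{N}}(\mathbf{R})$ well defined so that $f_{\mathrm{W}}$ and the equality-case characterization of \eref{eq:lem3} remain valid on the boundary.
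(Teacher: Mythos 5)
Your proposal follows the same skeleton as the paper's proof: both arguments rest on the identity that the C-WLS objective at any fixed $\mathbf{R}$ equals the C-ILS objective evaluated at the rounded integer matrix $\tilde{\mathbf{N}}(\mathbf{R})$ of \eref{eq:eq24}, so that $f_{\mathrm{W}}(\mathbf{R}) \geq f_{\mathrm{I}}(\mathbf{R})$ pointwise with equality exactly when rounding attains the inner integer minimum. Your sufficiency direction is the paper's argument made explicit (the paper checks that under \eref{eq:lem3} the matrix $\tilde{\mathbf{N}}(\hat{\mathbf{R}})$ is itself an inner minimizer, then concludes via the equivalence of \eref{eq:eq26} and \eref{eq3}). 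Moreover, the two caveats you flag---the need for a uniqueness/genericity hypothesis, and the delicacy of the $\pm 1/2$ boundary under the special rounding convention---are genuine issues that the paper glosses over rather than resolves: for non-diagonal $\mathbf{Q}_{\bm{\Psi}}$, flipping a residual from $-1/2$ to $+1/2$ changes the cross terms, so the paper's assertion that every boundary choice of $\hat{\mathbf{N}}$ minimizes \eref{eq:eq23} is not automatic.

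The one step that does not hold up as stated is the final clause of your necessity argument. What your chain actually proves (granting uniqueness) is $\min_{\mathbf{R}} f_{\mathrm{W}}(\mathbf{R}) > V = \min_{\mathbf{R}} f_{\mathrm{I}}(\mathbf{R})$, i.e.\ the two optimal \emph{values} differ. This does not imply that ``the C-WLS optimum is attained at a rotation matrix different from $\hat{\mathbf{R}}$'': nothing in the pointwise bound prevents $f_{\mathrm{W}}$ from still being minimized at $\hat{\mathbf{R}}$, only at a strictly larger value, so the two argmins could coincide even though the residual condition fails. If ``same global optimum'' is read to include equality of the optimal values, then your value-gap conclusion already completes the contrapositive and your proof is done; if it is read as ``same minimizing $\mathbf{R}$'' only, the implication you want is strictly stronger and is not supplied by your argument. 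To be fair, the paper's own necessity proof finesses exactly the same point---its assertion that otherwise ``$\hat{\mathbf{R}}$ cannot minimize \eref{eq:eq23} anymore'' is precisely the unproved claim that a common argmin forces equal values---so your attempt is no weaker than the paper's, but you should either adopt the value-inclusive reading explicitly or stop the necessity argument at the value gap.
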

\begin{proof}
Suppose that $\mathbf{\hat R}$ and ${\mathbf{\hat N}}$ minimize (\ref{eq:eq23}), and they satisfy \eref{eq:lem3}. We can calculate the integer matrix based on $\mathbf{\hat R}$ using
\begin{equation}
{\mathbf{\tilde N}} ( {\mathbf{\hat R}} ) = \operatorname{round}( {{\bm{\Psi }} - {\mathbf{H}} {\mathbf{\hat  R}} {\mathbf{X}_b}} ).
\label{eq:lem3eq1}
\end{equation}
When the residual phase errors satisfy
\begin{equation}
\left| \xi_{ij}\!\left(\! {\mathbf{\hat R}}, \hat N_{ij} \!\right) \right|\! < \! \frac{1}{2}, \text{ } i \!=\! 1, 2, \cdots, \mathcal{S};\text{ } j \!=\! 1, 2, \cdots, \mathcal{A},
\label{eq:lem3eq2}
\end{equation}
we obtain ${\mathbf{\hat N}} = {\mathbf{\tilde N}} ( {\mathbf{\hat R}} )$. If there is some residual error $\left| \xi_{ij}\!\left(\! {\mathbf{\hat R}}, \hat N_{ij} \!\right) \right|\! =\! 1/2$, ${\mathbf{\hat N}}$ will have multiple possible values, and ${\mathbf{\tilde N}} ( {\mathbf{\hat R}} )$ is one of the solutions. Anyway, $\mathbf{\hat R}$ and ${\mathbf{\tilde N}} ( {\mathbf{\hat R}} )$ will minimize (\ref{eq:eq23}). Hence, $\mathbf{\hat R}$ will be the global optimum of \eref{eq:eq26}, which is equivalent to (\ref{eq3}).

In contrast, we assume that $\mathbf{\hat R}$ and ${\mathbf{\hat N}}$ minimize (\ref{eq:eq23}), and $\mathbf{\hat R}$ minimizes (\ref{eq3}). Given the fact that \eref{eq:eq26} and \eref{eq3} are equivalent, $\mathbf{\hat R}$ will also minimize \eref{eq:eq26}. Then, we have ${\mathbf{\hat N}} = {\mathbf{\tilde N}} ( {\mathbf{\hat R}} )$ or ${\mathbf{\tilde N}} ( {\mathbf{\hat R}} )$ is one of the possible values of ${\mathbf{\hat N}}$. Otherwise, $\mathbf{\hat R}$ cannot minimize $(\ref{eq:eq23})$ anymore. Therefore, $\mathbf{\hat R}$ and ${\mathbf{\hat N}}$ will satisfy \eref{eq:lem3}.
\end{proof}

The structure of the weight matrix ${\mathbf{Q}}_{\bm{\Psi }}$ is related to whether (\ref{eq:lem3}) holds or not. Likewise, the relationship between (\ref{eq:eq23}) and (\ref{eq3}) might differ for a various weight matrix ${\mathbf{Q}}_{\bm{\Psi }}$. For double-difference observations, ${{\mathbf{Q}}_{{\bm{\Psi }}}}$ is not diagonal. With common clock technology, multiple antennas can use a synchronized clock such that the single-difference model is applicable in attitude determination. In that case, ${{\mathbf{Q}}_{{\bm{\Psi }}}}$ is a diagonal matrix.
We obtain Lemma~\ref{lem2} as a special case to establish (\ref{eq:lem3}).



\begin{lemma} \label{lem2}
A sufficient condition for (\ref{eq:lem3}) to hold, and hence for (\ref{eq:eq23}) and (\ref{eq3}) to have the same global optimum for ${\mathbf{R}}$, is that ${{\mathbf{Q}}_{{\bm{\Psi }}}}$ is a diagonal matrix.
\end{lemma}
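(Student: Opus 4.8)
The plan is to reduce the claim to Lemma~\ref{lem3}: that lemma already establishes that (\ref{eq:eq23}) and (\ref{eq3}) share the same global optimum in $\mathbf{R}$ exactly when the residuals at the global minimizer of (\ref{eq:eq23}) obey (\ref{eq:lem3}), so it suffices to prove that diagonality of $\mathbf{Q}_{\bm{\Psi}}$ forces the globally optimal residuals to satisfy $|\xi_{ij}(\hat{\mathbf{R}},\hat N_{ij})| \le 1/2$ for all $i,j$. The whole argument therefore hinges on a single structural fact: a diagonal weight matrix makes the objective in (\ref{eq:eq23}) \emph{separable} across the entries of $\bm{\xi}$.

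First, I would write $\mathbf{Q}^{-1}_{\bm{\Psi}} = \operatorname{diag}(w_1,\dots,w_{\mathcal{S}\mathcal{A}})$ with $w_k > 0$ (positivity because $\mathbf{Q}_{\bm{\Psi}}$ is a covariance matrix, hence positive definite), and reindex the weights by the matrix position $(i,j)$ so that the objective becomes the separable sum $\sum_{i,j} w_{ij}\,\xi_{ij}(\mathbf{R},N_{ij})^2$. Next, fixing an arbitrary $\mathbf{R}$ and performing the inner minimization over $\mathbf{N}$, I would observe that each summand depends on one and only one integer $N_{ij}$; with all weights strictly positive, the joint integer minimization decouples into $\mathcal{S}\mathcal{A}$ independent scalar problems. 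Each scalar problem $\min_{N_{ij}\in\mathbb{Z}} w_{ij}\bigl(\Psi_{ij} - (\mathbf{H}\mathbf{R}\mathbf{X}_b)_{ij} - N_{ij}\bigr)^2$ is solved by rounding $\Psi_{ij} - (\mathbf{H}\mathbf{R}\mathbf{X}_b)_{ij}$ to the nearest integer --- this is precisely the entrywise rule (\ref{eq:eq24}) --- and by the definition of $\operatorname{round}(\cdot)$ the resulting residual lies in $(-1/2,1/2]$, so $|\xi_{ij}| \le 1/2$.

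Since this residual bound holds for the integer-optimal $\mathbf{N}$ at \emph{every} $\mathbf{R}$, it holds in particular at the global minimizer $\hat{\mathbf{R}}$, whose companion $\hat{\mathbf{N}}$ is (one of) the entrywise roundings. Hence the global optimum of (\ref{eq:eq23}) satisfies (\ref{eq:lem3}), and Lemma~\ref{lem3} delivers the conclusion. The main obstacle --- and the only place where diagonality is genuinely used --- is the decoupling step: when $\mathbf{Q}_{\bm{\Psi}}$ has off-diagonal entries, cross terms $w_{kl}\,\xi_k\xi_l$ couple distinct ambiguities, the joint integer least-squares no longer reduces to entrywise rounding, and the optimal $\hat N_{ij}$ may drive some residual past a half cycle; making rigorous that diagonality is exactly what removes this coupling is the crux of the proof. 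A minor technical point to handle cleanly is the boundary case $|\xi_{ij}| = 1/2$, where $\hat{\mathbf{N}}$ is non-unique; this is harmless because (\ref{eq:lem3}) permits equality and Lemma~\ref{lem3} is already stated to accommodate the tie.
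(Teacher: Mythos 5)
Your proposal is correct and follows essentially the same route as the paper's own proof: both exploit the fact that a diagonal ${\mathbf{Q}}_{\bm{\Psi}}$ makes the objective in (\ref{eq:eq23}) separable across the entries of $\bm{\xi}$, so the inner integer minimization decouples into entrywise rounding with residuals bounded by a half cycle for \emph{every} ${\mathbf{R}}$, after which Lemma~\ref{lem3} gives the conclusion. Your treatment is in fact slightly more careful than the paper's (explicit positivity of the weights, explicit handling of the $|\xi_{ij}| = 1/2$ tie), but these are refinements of the same argument, not a different one.
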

\begin{proof}
In (\ref{eq:eq23}), for $\forall {\mathbf{R}} \in {\mathbb{O}^{3 \times q}}$, we can figure out the optimal integer ambiguities by minimizing
\begin{equation}
F = \mathop {\min }\limits_{{\mathbf{N}} \in {\mathbb{Z}^{\mathcal{S} \times\mathcal{A}}}} \left\| \operatorname{vec} \! \left[\bm{\xi} ( {\mathbf{R}}, {\mathbf{N}} ) \right] \right\|_{{\mathbf{Q}}^{-1}_{{\bm{\Psi }}}}^2.
\end{equation}
The diagonal matrix ${{\mathbf{Q}}_{{\bm{\Psi }}}}$ results in the following expression
\begin{equation}
\begin{aligned}
F \!= \!&\mathop {\min }\limits_{N_{11} \in {\mathbb{Z}},\cdots,N_{_{\mathcal{S}\mathcal{A}}} \in {\mathbb{Z}}} \!\left(\! \sum\limits_{i = 1}^{\mathcal{S}} \sum\limits_{j = 1}^{\mathcal{A}} \frac{[\xi_{ij}( {\mathbf{R}}, N_{ij} )]^2}{\sigma _{\Psi ,kk}^2} \!\right)\!\\
\!=\!&\sum\limits_{i = 1}^{\mathcal{S}} \sum\limits_{j = 1}^{\mathcal{A}} \!\left( \!\mathop {\min }\limits_{N_{ij} \in {\mathbb{Z}}} \frac{[\xi_{ij}\!\left( {\mathbf{R}}, N_{ij} \right)]^2}{\sigma _{\Psi ,kk}^2} \!\right)\!, \text{ }k = (j\!-\!1)\mathcal{A}+i,
\end{aligned}
\end{equation}
where $\sigma _{\Psi ,kk}$ is the $k$-th diagonal entry of ${{\mathbf{Q}}_{{\bm{\Psi }}}}$. 


Since the sum entries are uncorrelated, we obtain the optimal by minimizing each entry independently. It is apparent that ${N_{ij} \in {\mathbb{Z}}}$ satisfying $|\xi_{ij}( {\mathbf{R}}, N_{ij} )| \leqslant 1/2$ will minimize $[\xi_{ij}( {\mathbf{R}}, N_{ij} )]^2$. Therefore, $\mathbf{\hat R}$ and ${\mathbf{\hat N}}$, the global optimums of (\ref{eq:eq23}), always satisfy (\ref{eq:lem3}) when ${{\mathbf{Q}}_{{\bm{\Psi }}}}$ is a diagonal matrix. According to Lemma~\ref{lem3}, we conclude that (\ref{eq:eq23}) and (\ref{eq3}) have the same global optimum for ${\mathbf{R}}$.
\end{proof}

To precisely explain the effect of the residual constraint, the results based on the unambiguous phase can be used as a benchmark to appreciate the difference between (\ref{eq:eq23}) and (\ref{eq3}). Assuming $\bm{\Phi}$ is the unambiguous double-different phase, we define ${\mathbb{S}}$, ${\mathbb{S}}_1$ and ${\mathbb{S}}_2$ as
\[{\mathbb{S}} = {\mathbb{O}^{3 \times q}},\]
\begin{equation*}
\resizebox{1\hsize}{!}{${\mathbb{S}}_1 \!\! =\!\! \{\mathbf{R}| {\mathbf{R}}\!\! \in \!\!{\mathbb{O}^{3 \times q}}\!, \bm{\xi} \!\!=\!\! {\bm{\Phi}}\!\!-\!\!{\mathbf{H}}{\mathbf{R}}{\mathbf{X}_b}, \left| \xi_{ij} \! \right| \!\!\leqslant\!\! \frac{1}{2}, i\!\! = \!\!1,\! \cdots \!, \mathcal{S};
j \!\!= \!\!1,\! \cdots \!, \mathcal{A}\}$},
\end{equation*}
\[{\mathbb{S}}_2={\mathbb{S}}-{\mathbb{S}}_1.\]
An \emph{oracle} estimator (that has access to the correct integer ambiguities) can be formulated as
\begin{equation}
  {{{\mathbf{R}}}_{_{\text{or}}} } = \mathop {\arg\min }\limits_{{\mathbf{R}} \in {\mathbb{S}_1}} \left\| \operatorname{vec} \! \left({\bm{\Phi} - {\mathbf{H}}{\mathbf{R}}{\mathbf{X}_b}} \right) \right\|_{{\mathbf{Q}}^{-1}_{{\bm{\Psi }}}}^2.
  \label{eq:eq3}
\end{equation}
Contrast this with the C-WLS estimator
\begin{equation}
  {{{\mathbf{R}}}_{_\text{CWLS}} } = \mathop {\arg\min }\limits_{{\mathbf{R}} \in {\mathbb{S}}} \left\| {{\operatorname{wrap}\!\left[\operatorname{vec} \! \left({\bm{\Psi}} - {\mathbf{H}}{\mathbf{R}}{\mathbf{X}_b}\right)\right]}} \right\|_{{\mathbf{Q}}^{-1}_{{\bm{\Psi }}}}^2,
  \label{eq:eq4}
\end{equation}
and the C-ILS estimator
\begin{equation}
  {{{\mathbf{R}}}_{_\text{CILS}} }, {{{\mathbf{N}}}_{_\text{CILS}} } \! = \!\mathop {\arg\min }\limits_{{\mathbf{R}} \in {\mathbb{S}},{\mathbf{N}} \in {\mathbb{Z}^{\mathcal{S} \times\mathcal{A}}}} \!\!\left\| \operatorname{vec} \! \left({
  {{\bm{\Psi }} \!-\! {\mathbf{H}}{\mathbf{R}}{\mathbf{X}_b}\! -\! {\mathbf{N}} }}\right) \right\|_{{\mathbf{Q}}^{-1}_{{\bm{\Psi }}}}^2.
  \label{eq:eq5}
\end{equation}
Corollary \ref{cor1}, Corollary \ref{cor2}, and Corollary \ref{cor3} summarize the relationships between these three estimators.

\begin{corollary} \label{cor1} If the C-ILS estimator (\ref{eq:eq5}) and the oracle estimator (\ref{eq:eq3}) have the same global optimum for ${\mathbf{R}}$, the C-WLS estimator (\ref{eq:eq4}) will also have the identical global optimum as the oracle estimator (\ref{eq:eq3}), i.e.,
\[{{{\mathbf{R}}}_{_\text{CILS}}} = {{\mathbf{R}}}_{_\text{or}}  \Rightarrow {{{\mathbf{R}}}_{_\text{CWLS}}} = {{\mathbf{R}}}_{_\text{or}}.\]
\end{corollary}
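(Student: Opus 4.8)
The plan is to derive the corollary entirely from Lemma~\ref{lem3}, which already pins down exactly when the C-ILS problem \eref{eq:eq5} and the C-WLS problem \eref{eq:eq4} share the same optimal $\mathbf{R}$ (note these coincide with \eref{eq:eq23} and \eref{eq3}, respectively, since the feasible set is $\mathbb{S}=\mathbb{O}^{3\times q}$). Because the hypothesis supplies $\mathbf{R}_{\text{CILS}}=\mathbf{R}_{\text{or}}$, it suffices to show $\mathbf{R}_{\text{CWLS}}=\mathbf{R}_{\text{CILS}}$; by Lemma~\ref{lem3} this in turn reduces to verifying that the residual phase errors evaluated at the C-ILS global optimum obey the half-cycle bound \eref{eq:lem3}. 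Thus the whole argument collapses to establishing this one residual bound at the shared optimizer, after which the chain $\mathbf{R}_{\text{CWLS}}=\mathbf{R}_{\text{CILS}}=\mathbf{R}_{\text{or}}$ closes the claim.

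To establish the bound I would exploit that the oracle \eref{eq:eq3} minimizes over $\mathbb{S}_1$, so its optimizer satisfies $\mathbf{R}_{\text{or}}\in\mathbb{S}_1$ by construction. By the definition of $\mathbb{S}_1$, the residual formed from the unambiguous phase, $\bm{\xi}=\bm{\Phi}-\mathbf{H}\mathbf{R}_{\text{or}}\mathbf{X}_b$, is entrywise bounded by $1/2$ in magnitude. Since $\bm{\Psi}$ and $\bm{\Phi}$ differ only by an integer matrix, the rounding operation in \eref{eq:eq24} returns exactly those correct integers, so $\tilde{\mathbf{N}}(\mathbf{R}_{\text{or}})$ reproduces the oracle's ambiguities and the associated C-ILS residual coincides with $\bm{\xi}$. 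With $\mathbf{R}_{\text{CILS}}=\mathbf{R}_{\text{or}}$, the C-ILS global optimum is therefore attained at a point whose residual is precisely this $\bm{\xi}$, which satisfies \eref{eq:lem3}; feeding this into Lemma~\ref{lem3} yields the desired coincidence of the C-WLS and C-ILS optimal rotations.

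The step I expect to be the main obstacle is justifying that the integer selected by C-ILS at the shared optimizer is indeed the half-cycle-bounded (correct) one, i.e.\ that the C-ILS minimum value equals the oracle minimum value rather than being strictly smaller. When $\mathbf{Q}_{\bm{\Psi}}$ is diagonal this is automatic: the cost decouples and each term is minimized by forcing its residual into $[-1/2,1/2]$, exactly as in the proof of Lemma~\ref{lem2}, so no nonzero integer shift can lower the objective. For a general non-diagonal $\mathbf{Q}_{\bm{\Psi}}$, however, a nonzero integer offset could in principle reduce the weighted norm while driving some residual entry past $1/2$, which would break \eref{eq:lem3}. Ruling this case out---or, equivalently, arguing that the assumed coincidence of the C-ILS and oracle $\mathbf{R}$-optima already forces their integers to match---is the delicate point on which the corollary hinges, and I would therefore state the residual-matching step carefully and invoke the diagonal-weight reduction of Lemma~\ref{lem2} as the regime in which the hypothesis is met cleanly.
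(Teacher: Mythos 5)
Your route is genuinely different from the paper's. The paper never invokes Lemma~\ref{lem3} here; it argues directly on the partition $\mathbb{S}=\mathbb{S}_1\cup\mathbb{S}_2$: the wrapped objective coincides with the oracle objective on $\mathbb{S}_1$ \eref{eq:eq6}, dominates the integer-minimized C-ILS objective on $\mathbb{S}_2$ \eref{eq:eq7}, and the hypothesis makes the C-ILS minimum over $\mathbb{S}_1$ strictly smaller than that over $\mathbb{S}_2$ \eref{eq:eq8}; chaining these with \eref{eq:eq9} shows the C-WLS minimizer must lie in $\mathbb{S}_1$, where the wrapped and oracle objectives agree pointwise, so the argmins coincide. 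Your reduction---use the hypothesis to replace the target ``$\mathbf{R}_{\text{CWLS}}=\mathbf{R}_{\text{or}}$'' by ``$\mathbf{R}_{\text{CWLS}}=\mathbf{R}_{\text{CILS}}$'' and then verify the residual condition \eref{eq:lem3} so that Lemma~\ref{lem3} applies---is more economical and reuses machinery already proved, at the price of making everything hinge on a single residual bound at the C-ILS optimum.

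The obstacle you single out is real, but note that the paper's own proof stands on exactly the same spot: the equality \eref{eq:eq9}, asserted without justification, says that the C-ILS minimum over $\mathbb{S}_1\times\mathbb{Z}^{\mathcal{S}\times\mathcal{A}}$ equals the oracle minimum over $\mathbb{S}_1$, and the nontrivial direction of that equality is precisely your claim that the optimal integers at the shared optimizer are the true ones $\mathbf{N}_{\text{or}}=\bm{\Psi}-\bm{\Phi}$, i.e., that the C-ILS value is not strictly smaller than the oracle value. For non-diagonal $\mathbf{Q}_{\bm{\Psi}}$ (the relevant case for double-difference data), this does not follow from the literal hypothesis that only the $\mathbf{R}$-argmins coincide---exactly as you say. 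Where your write-up falls short of the corollary as stated is the proposed fix: retreating to diagonal $\mathbf{Q}_{\bm{\Psi}}$ via Lemma~\ref{lem2} proves a strictly weaker statement, since Corollary~\ref{cor1} carries no diagonality restriction. The reading that closes the argument in full generality---and the one suggested by the paper's phrase ``converges to the oracle estimator''---is to take the hypothesis as saying that the C-ILS solution reproduces the oracle solution including its integers; then the C-ILS residuals at $\mathbf{R}_{\text{or}}$ are the oracle residuals, which are entrywise bounded by one half because $\mathbf{R}_{\text{or}}\in\mathbb{S}_1$, your Lemma~\ref{lem3} step applies, and your chain closes with no assumption on $\mathbf{Q}_{\bm{\Psi}}$. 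Under that reading your proof is complete; under the weaker literal reading, your proof and the paper's share the same unproven step.
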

\begin{proof}
We can readily see that
\begin{equation}
\begin{aligned}
&\mathop {\min }\limits_{{\mathbf{R}} \in \mathbb{S}_1} \left\| {\operatorname{wrap}\left[\operatorname{vec} \! \left( {{\bm{\Psi}}  - {\mathbf{H}} {\mathbf{R}} {\mathbf{X}_b}} \right)\right]} \right\|_{{\mathbf{Q}}^{-1}_{{\bm{\Psi }}}}^2 \\
= &\mathop {\min }\limits_{{\mathbf{R}} \in \mathbb{S}_1} \|\operatorname{vec} \! \left( {{\bm{\Phi}}  - {\mathbf{H}} {\mathbf{R}} {\mathbf{X}_b}} \right)\|_{{\mathbf{Q}}^{-1}_{{\bm{\Psi }}}}^2,
\label{eq:eq6}
\end{aligned}
\end{equation}
and
\begin{equation}
\begin{aligned}
&\mathop {\min }\limits_{{\mathbf{R}} \in \mathbb{S}_2} \left\| {\operatorname{wrap}\left[\operatorname{vec} \! \left( {{\bm{\Psi}}  - {\mathbf{H}} {\mathbf{R}} {\mathbf{X}_b}}  \right)\right]} \right\|_{{\mathbf{Q}}^{-1}_{{\bm{\Psi }}}}^2 \\
 \geqslant & \mathop {\min }\limits_{{\mathbf{R}} \in \mathbb{S}_2,{\mathbf{N}} \in {\mathbb{Z}^{\mathcal{S} \times\mathcal{A}}}}  \left\| \operatorname{vec} \! \left(
  {{\bm{\Psi}} - {\mathbf{H}}{\mathbf{R}}{\mathbf{X}_b} - {\mathbf{N}} } \right)\right\|_{{\mathbf{Q}}^{-1}_{{\bm{\Psi }}}}^2.
\end{aligned}
\label{eq:eq7}
\end{equation}
If (\ref{eq:eq5}) converges to the oracle estimator, i.e., ${{{{\mathbf{R}}} _{_\text{CILS}} } \in \mathbb{S}_1}$ and ${{{\mathbf{R}}}_{_\text{CILS}}} = {{\mathbf{R}}}_{_\text{or}}$, we obtain
\begin{equation}
\begin{aligned}
&\mathop {\min }\limits_{{\mathbf{R}} \in \mathbb{S}_1,{\mathbf{N}} \in {\mathbb{Z}^{\mathcal{S} \times\mathcal{A}}}} \left\|\operatorname{vec} \! \left(
  {{\bm{\Psi }} - {\mathbf{H}}{\mathbf{R}}{\mathbf{X}_b} - {\mathbf{N}} }
 \right) \right\|_{{\mathbf{Q}}^{-1}_{{\bm{\Psi }}}}^2 \\
 < & \mathop {\min }\limits_{{\mathbf{R}} \in \mathbb{S}_2,{\mathbf{N}} \in {\mathbb{Z}^{\mathcal{S} \times\mathcal{A}}}} \left\| \operatorname{vec} \! \left( {{\bm{\Psi }} - {\mathbf{H}}{\mathbf{R}}{\mathbf{X}_b} -{\mathbf{N}} } \right)\right\|_{{\mathbf{Q}}^{-1}_{{\bm{\Psi }}}}^2,
 \end{aligned}
\label{eq:eq8}
\end{equation}
and 
\begin{equation}
\begin{aligned}
&\mathop {\min }\limits_{{\mathbf{R}} \in \mathbb{S}_1,{\mathbf{N}} \in {\mathbb{Z}^{\mathcal{S} \times\mathcal{A}}}} \! \left\| \operatorname{vec} \! \left( {{\bm{\Psi }} - {\mathbf{H}}{\mathbf{R}}{\mathbf{X}_b} - {\mathbf{N}} } \right) \right\|_{{\mathbf{Q}}^{-1}_{{\bm{\Psi }}}}^2 \\
 = &\mathop {\min }\limits_{{\mathbf{R}} \in \mathbb{S}_1} \! \| \operatorname{vec} \! \left({{\bm{\Phi}}  - {\mathbf{H}} {\mathbf{R}} {\mathbf{X}_b}} \right) \|_{{\mathbf{Q}}^{-1}_{{\bm{\Psi }}}}^2.
\end{aligned}
\label{eq:eq9}
\end{equation}
From \eref{eq:eq6}-\eref{eq:eq9}, it follows that
\begin{equation}
\begin{aligned}
&\mathop {\min }\limits_{{\mathbf{R}} \in \mathbb{S}_1} \! \left\| {\operatorname{wrap} \left[\operatorname{vec} \! \left( {{\bm{\Psi}}\!  -\! {\mathbf{H}} {\mathbf{R}} {\mathbf{X}_b}} \right)\right]}\! \right\|_{{\mathbf{Q}}^{-1}_{{\bm{\Psi }}}}^2 \\
< &\mathop {\min }\limits_{{\mathbf{R}} \in \mathbb{S}_2} \!\left\| {\operatorname{wrap}\left[\operatorname{vec} \! \left( {{\bm{\Psi}}  \!- \!{\mathbf{H}} {\mathbf{R}} {\mathbf{X}_b}} \right)\right]} \!\right\|_{{\mathbf{Q}}^{-1}_{{\bm{\Psi }}}}^2.
\end{aligned}
\end{equation}
Note that
\begin{equation}
\begin{aligned}
  &\mathop {\arg\min }\limits_{{\mathbf{R}} \in {\mathbb{S}_1}} \| \operatorname{vec} \! \left({\bm{\Phi} - {\mathbf{H}}{\mathbf{R}}{\mathbf{X}_b}} \right) \|_{{\mathbf{Q}}^{-1}_{{\bm{\Psi }}}}^2 \\
  = &\mathop {\arg\min }\limits_{{\mathbf{R}} \in {\mathbb{S}_1}} \left\| {{\operatorname{wrap}\left[\operatorname{vec} \! \left({\bm{\Psi}} - {\mathbf{H}}{\mathbf{R}}{\mathbf{X}_b} \right) \right]}} \right\|_{{\mathbf{Q}}^{-1}_{{\bm{\Psi }}}}^2,
  \end{aligned}
\end{equation}
then (\ref{eq:eq4}) will also have the same global optimum as \eref{eq:eq3}, i.e., ${{{{\mathbf{R}}}_{_\text{CWLS}}} \in \mathbb{S}_1}$ and ${{{\mathbf{R}}}_{_\text{CWLS}}} = {{{\mathbf{R}}}_{_\text{CILS}}} = {{\mathbf{R}}}_{_\text{or}}$. Hence, ${{{\mathbf{R}}}_{_\text{CILS}}} = {{\mathbf{R}}}_{_\text{or}}  \Rightarrow {{{\mathbf{R}}}_{_\text{CWLS}}} = {{\mathbf{R}}}_{_\text{or}}$, but not vice versa. 
\end{proof}

\begin{corollary} \label{cor2} 
If the C-WLS estimator \eref{eq:eq4} and the oracle estimator \eref{eq:eq3} have different global optimums, the global optimum (with respect to ${\mathbf{R}}$) of the C-ILS estimator \eref{eq:eq5} will also differ from that of the oracle estimator \eref{eq:eq3}. That is
\[{{{\mathbf{R}}}_{_\text{CWLS}}} \neq {{\mathbf{R}}}_{_\text{or}}  \Rightarrow {{{\mathbf{R}}}_{_\text{CILS}}} \neq {{\mathbf{R}}}_{_\text{or}}.\]
\end{corollary}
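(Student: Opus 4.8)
The plan is to recognize Corollary~\ref{cor2} as nothing more than the logical contrapositive of Corollary~\ref{cor1}, so that no fresh analysis of the minimization problems is required. Abbreviate the proposition ``$\mathbf{R}_{_\text{CILS}} = \mathbf{R}_{_\text{or}}$'' by $P$ and the proposition ``$\mathbf{R}_{_\text{CWLS}} = \mathbf{R}_{_\text{or}}$'' by $Q$. In this notation Corollary~\ref{cor1} is exactly the implication $P \Rightarrow Q$, while the assertion to be proved, namely $\mathbf{R}_{_\text{CWLS}} \neq \mathbf{R}_{_\text{or}} \Rightarrow \mathbf{R}_{_\text{CILS}} \neq \mathbf{R}_{_\text{or}}$, is $\neg Q \Rightarrow \neg P$.

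Since an implication and its contrapositive are logically equivalent, the validity of $P \Rightarrow Q$ established in Corollary~\ref{cor1} immediately delivers $\neg Q \Rightarrow \neg P$, which is Corollary~\ref{cor2}. Concretely, I would argue by contradiction: suppose $\mathbf{R}_{_\text{CWLS}} \neq \mathbf{R}_{_\text{or}}$ yet $\mathbf{R}_{_\text{CILS}} = \mathbf{R}_{_\text{or}}$. The latter equality is precisely the hypothesis of Corollary~\ref{cor1}, whose conclusion then forces $\mathbf{R}_{_\text{CWLS}} = \mathbf{R}_{_\text{or}}$, contradicting the assumption. Hence $\mathbf{R}_{_\text{CILS}} \neq \mathbf{R}_{_\text{or}}$, as claimed. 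This single contradiction step is the whole of the argument; there is no computation to carry out, and in particular the inequality chains \eref{eq:eq6}--\eref{eq:eq9} need not be revisited.

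Because the proof is purely logical, there is no genuine analytical obstacle. The only point that merits a second glance is that the equalities and inequalities among $\mathbf{R}_{_\text{CILS}}$, $\mathbf{R}_{_\text{CWLS}}$, and $\mathbf{R}_{_\text{or}}$ must refer to well-defined global optimizers, so that $\neg P$ and $\neg Q$ are the honest negations of the statements $P$ and $Q$ used in Corollary~\ref{cor1}. This is already implicit in the formulations \eref{eq:eq3}--\eref{eq:eq5}, so the contraposition goes through verbatim and the statement follows directly from the preceding corollary.
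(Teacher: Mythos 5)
Your proof is correct: as stated, Corollary~\ref{cor2} ($\mathbf{R}_{_\text{CWLS}} \neq \mathbf{R}_{_\text{or}} \Rightarrow \mathbf{R}_{_\text{CILS}} \neq \mathbf{R}_{_\text{or}}$) is precisely the contrapositive of Corollary~\ref{cor1} ($\mathbf{R}_{_\text{CILS}} = \mathbf{R}_{_\text{or}} \Rightarrow \mathbf{R}_{_\text{CWLS}} = \mathbf{R}_{_\text{or}}$), so once Corollary~\ref{cor1} is accepted, your one-step contradiction argument settles the claim. This is, however, genuinely different from what the paper does: the paper gives a direct analytical proof, showing first that $\mathbf{R}_{_\text{CWLS}} \neq \mathbf{R}_{_\text{or}}$ forces $\mathbf{R}_{_\text{CWLS}} \in \mathbb{S}_2$, and then chaining \eref{eq:eq7} with \eref{eq:eq10} to conclude that the C-ILS minimum value lies strictly below the objective value at the oracle pair $\left(\mathbf{R}_{_\text{or}}, \mathbf{N}_{_\text{or}}\right)$ with $\mathbf{N}_{_\text{or}} = \bm{\Psi} - \bm{\Phi}$. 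What the paper's longer route buys is the additional conclusion $\mathbf{N}_{_\text{CILS}} \neq \mathbf{N}_{_\text{or}}$ --- a statement about the integer ambiguities that your contraposition cannot produce, since Corollary~\ref{cor1} says nothing about $\mathbf{N}$ --- together with quantitative strict-inequality information about how far the C-ILS optimum drifts from the oracle. What your route buys is economy and robustness: it requires no re-derivation of inequality chains, it makes transparent that Corollaries~\ref{cor1} and~\ref{cor2} carry identical logical content (so that Corollary~\ref{cor3} in fact needs only one of them), and it sidesteps a delicate point in the paper's direct argument, namely that the comparison is made against the objective at the specific pair $\left(\mathbf{R}_{_\text{or}}, \mathbf{N}_{_\text{or}}\right)$ rather than against $\min_{\mathbf{N}}$ at $\mathbf{R}_{_\text{or}}$. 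Your caveat about well-defined (unique) global optimizers is the right one to flag: both corollaries implicitly assume this, and under any consistent interpretation of the equalities the contraposition goes through verbatim.
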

\begin{proof}
If ${{{\mathbf{R}}}_{_\text{CWLS}}} \neq {{\mathbf{R}}}_{_\text{or}}$, then ${{{{\mathbf{R}}}_{_\text{CWLS}}} \in \mathbb{S}_2}$, and
\begin{equation}
\begin{aligned}
&\mathop {\min }\limits_{{\mathbf{R}} \in \mathbb{S}_1} \! \left\| {\operatorname{wrap}\! \left[\operatorname{vec} \! \left( {{\bm{\Psi}}  \!- \!{\mathbf{H}} {\mathbf{R}} {\mathbf{X}_b}} \right)\right]} \!\right\|_{{\mathbf{Q}}^{-1}_{{\bm{\Psi }}}}^2 \\
> &\mathop {\min }\limits_{{\mathbf{R}} \in \mathbb{S}_2} \! \left\| {\operatorname{wrap}\!\left[\operatorname{vec} \! \left( {{\bm{\Psi}} \! - \! {\mathbf{H}} {\mathbf{R}} {\mathbf{X}_b}} \right)\right]} \!\right\|_{{\mathbf{Q}}^{-1}_{{\bm{\Psi }}}}^2.
\label{eq:eq10}
\end{aligned}
\end{equation}
According to (\ref{eq:eq7}) and (\ref{eq:eq10}), we have
\begin{equation}
\begin{aligned}
& \mathop {\min }\limits_{{\mathbf{R}} \in \mathbb{S},{\mathbf{N}} \in {\mathbb{Z}^{\mathcal{S} \times\mathcal{A}}}} \left\| \operatorname{vec} \! \left(
  {{\bm{\Psi }} - {\mathbf{H}}{\mathbf{R}}{\mathbf{X}_b} - {\mathbf{N}} } \right) \right\|_{{\mathbf{Q}}^{-1}_{{\bm{\Psi }}}}^2 \\
\leqslant & \mathop {\min }\limits_{{\mathbf{R}} \in \mathbb{S}_2,{\mathbf{N}} \in {\mathbb{Z}^{\mathcal{S} \times\mathcal{A}}}} \left\|\operatorname{vec} \! \left({{\bm{\Psi }} - {\mathbf{H}}{\mathbf{R}}{\mathbf{X}_b} - {\mathbf{N}} } \right) \right\|_{{\mathbf{Q}}^{-1}_{{\bm{\Psi }}}}^2\\
\leqslant &\mathop {\min }\limits_{{\mathbf{R}} \in \mathbb{S}_2} \left\| {\operatorname{wrap}\left[\operatorname{vec} \! \left( {{\bm{\Psi}}  - {\mathbf{H}} {\mathbf{R}} {\mathbf{X}_b}} \right) \right]} \right\|_{{\mathbf{Q}}^{-1}_{{\bm{\Psi }}}}^2 \\
< &\left\| {\operatorname{wrap}\left[\operatorname{vec} \! \left( {{\bm{\Psi}}  - {\mathbf{H}} {\mathbf{\hat R}_{_\text{or}}}} \right)\right]} \right\|_{{\mathbf{Q}}^{-1}_{{\bm{\Psi }}}}^2 \\
= &\left\| \operatorname{vec} \! \left({{\bm{\Psi }} - {\mathbf{H}} {\mathbf{\hat x}_{_\text{or}}} - {{{\mathbf{N}}}_{_\text{or}}} }\right)\right\|_{{\mathbf{Q}}^{-1}_{{\bm{\Psi }}}}^2,
\end{aligned}
\end{equation}
with
\begin{equation}
{{{\mathbf{N}}}_{_\text{or}}} = {\bm{\Psi }} - {\bm{\Phi }},
\end{equation}
then ${{{\mathbf{R}}}_{_\text{CILS}}} \neq {{\mathbf{R}}}_{_\text{or}}$ and ${{{\mathbf{N}}}_{_\text{CILS}}} \neq {{{\mathbf{N}}}_{_\text{or}}}$. As a result, ${{{\mathbf{R}}}_{_\text{CWLS}}} \neq {{\mathbf{R}}}_{_\text{or}}  \Rightarrow {{{\mathbf{R}}}_{_\text{CILS}}} \neq {{\mathbf{R}}}_{_\text{or}}$. 
\end{proof}

Using the results of Corollary \ref{cor1} and Corollary \ref{cor2},  the following corollary holds.

\begin{corollary} \label{cor3} Compared with the C-ILS estimator (\ref{eq:eq5}), the C-WLS estimator (\ref{eq:eq4}) is more likely to have the equivalent global optimum for ${\mathbf{R}}$ as the oracle estimator (\ref{eq:eq3}), i.e.,
\[P({{{\mathbf{R}}}_{_\text{CWLS}}} = {{\mathbf{R}}}_{_\text{or}}) \geqslant P({{{\mathbf{R}}}_{_\text{CILS}}} = {{\mathbf{R}}}_{_\text{or}}),\]
where $P(\cdot)$ denotes the probability.
\end{corollary}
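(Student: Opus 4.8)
The plan is to read Corollary~\ref{cor3} as a direct probabilistic consequence of the deterministic implications already established in Corollary~\ref{cor1} and Corollary~\ref{cor2}, rather than re-deriving any optimization inequalities. All three estimators $\mathbf{R}_{_\text{or}}$, $\mathbf{R}_{_\text{CWLS}}$, and $\mathbf{R}_{_\text{CILS}}$ are deterministic functions of a single measurement realization (through the observed phase $\bm{\Psi}$, and for the oracle also through the true integers), so the only source of randomness is the observation noise $\bm{\Pi}$. I would therefore fix the underlying probability space on which $\bm{\Pi}$ is defined and introduce the two events
\[
E_\text{CILS} \triangleq \{\mathbf{R}_{_\text{CILS}} = \mathbf{R}_{_\text{or}}\}, \qquad E_\text{CWLS} \triangleq \{\mathbf{R}_{_\text{CWLS}} = \mathbf{R}_{_\text{or}}\},
\]
so that the quantities to be compared are precisely $P(E_\text{CWLS})$ and $P(E_\text{CILS})$.

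The key step is to observe that Corollary~\ref{cor1} holds \emph{pointwise}. For every fixed noise realization $\omega$, the chain of inequalities used in its proof depends only on the data generated by $\omega$, and it yields the implication $\mathbf{R}_{_\text{CILS}}(\omega) = \mathbf{R}_{_\text{or}}(\omega) \Rightarrow \mathbf{R}_{_\text{CWLS}}(\omega) = \mathbf{R}_{_\text{or}}(\omega)$. Consequently every outcome lying in $E_\text{CILS}$ also lies in $E_\text{CWLS}$, i.e., $E_\text{CILS} \subseteq E_\text{CWLS}$. Equivalently, this set inclusion is exactly the contrapositive of Corollary~\ref{cor2}, so either corollary supplies it.

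With the inclusion in hand, I would close the argument by invoking monotonicity of the probability measure: $E_\text{CILS} \subseteq E_\text{CWLS}$ gives $P(E_\text{CILS}) \leq P(E_\text{CWLS})$, which is exactly the claimed inequality $P(\mathbf{R}_{_\text{CWLS}} = \mathbf{R}_{_\text{or}}) \geqslant P(\mathbf{R}_{_\text{CILS}} = \mathbf{R}_{_\text{or}})$. No further computation is needed once the inclusion is secured.

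The main obstacle, and the only place requiring genuine care, is the pointwise reinterpretation together with the possibility that the minimizer defining any of the three estimators is non-unique. I would handle this by adopting a fixed tie-breaking rule, or by restricting attention to the almost-sure event on which the global minimizers are unique (the noise being continuously distributed), so that each estimator is a well-defined random variable and the set inclusion $E_\text{CILS} \subseteq E_\text{CWLS}$ is unambiguous. After noting measurability of the two events, the monotonicity step is immediate.
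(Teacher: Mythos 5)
Your proposal is correct and follows essentially the same route as the paper: the paper's proof likewise invokes the two implications from Corollary~\ref{cor1} and Corollary~\ref{cor2} and concludes the probability inequality directly. Your additional care about the event-inclusion formalism, measurability, and tie-breaking of non-unique minimizers is a more rigorous spelling-out of what the paper treats as immediate, not a different argument.
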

\begin{proof}
Since
\[{{{\mathbf{R}}}_{_\text{CILS}}} = {{\mathbf{R}}}_{_\text{or}}  \Rightarrow {{{\mathbf{R}}}_{_\text{CWLS}}} = {{\mathbf{R}}}_{_\text{or}},\]
\[{{{\mathbf{R}}}_{_\text{CWLS}}} \neq {{\mathbf{R}}}_{_\text{or}}  \Rightarrow {{{\mathbf{R}}}_{_\text{CILS}}} \neq {{\mathbf{R}}}_{_\text{or}},\]
it can be readily seen that
$P({{{\mathbf{R}}}_{_\text{CWLS}}} \!\!=\!\! {{\mathbf{R}}}_{_\text{or}}) \!\!\geqslant\!\! P({{{\mathbf{R}}}_{_\text{CILS}}} \!\! =\!\! {{\mathbf{R}}}_{_\text{or}}).$
\end{proof}

When only carrier-phase measurements are utilized to jointly estimate the ambiguities and attitude, there are more unknowns than the number of equations. Corollary \ref{cor3} illuminates the advantage of the C-WLS approach compared to the C-ILS method. However, the pseudo-range observations are still required to improve the solutions.

\begin{lemma} \label{lem5}
Including both pseudo-range and carrier-phase data, the proposed C-WLS method can be formulated as
\begin{equation}
\mathop {\min }\limits_{{\mathbf{R}} \in {\mathbb{S}}} \left\| {\left[ \!\! {\begin{array}{*{20}{c}}
  {\operatorname{wrap} \! \left[\operatorname{vec} \! \left( {{\bm{\Psi }} - {\mathbf{H}} {\mathbf{R}} {\mathbf{X}_b}} \right) \right]} \\ 
  \operatorname{vec} \! \left({{\bf{P}}  - {\mathbf{H}} {\mathbf{R}} {\mathbf{X}_b}} \right)
\end{array}} \!\! \right]} \right\|_{{\mathbf{Q}^{-1}}}^2. 
\label{eq:eq25}
\end{equation}
\end{lemma}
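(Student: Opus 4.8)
The plan is to obtain \eref{eq:eq25} by reducing the joint constrained integer least-squares problem \eref{eq:cls} once the half-cycle residual constraint \eref{res:eq2} of Assumption~\ref{as:1} is imposed and the integer matrix is profiled out. Two structural features drive the argument. First, only the carrier phase is ambiguous: in \eref{eq:model4} the model reads ${\bm{\Psi}} = {\mathbf{H}}{\mathbf{X}} + {\mathbf{N}} + {\bm{\Pi}}$, whereas the pseudo-range ${\mathbf{P}} = {\mathbf{H}}{\mathbf{X}} + {\bm{\Xi}}$ carries no integer term. This is encoded by the selector ${\mathbf{B}}$ in \eref{eq:cls}, whose lower (pseudo-range) block is zero, so that the stacked residual $\operatorname{vec}({\mathbf{Y}} - {\mathbf{A}}{\mathbf{R}}{\mathbf{X}_b} - {\mathbf{B}}{\mathbf{N}})$ separates into a phase block $\operatorname{vec}({\bm{\Psi}} - {\mathbf{H}}{\mathbf{R}}{\mathbf{X}_b} - {\mathbf{N}})$ that contains ${\mathbf{N}}$ and a pseudo-range block $\operatorname{vec}({\mathbf{P}} - {\mathbf{H}}{\mathbf{R}}{\mathbf{X}_b})$ that is independent of ${\mathbf{N}}$. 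Second, ${\mathbf{Q}}$ is block-diagonal because the pseudo-range and carrier-phase observables are assumed uncorrelated.

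First I would use the block-diagonal form of ${\mathbf{Q}}$ to split the weighted squared norm in \eref{eq:cls} into a phase term weighted by ${\mathbf{Q}}_{\bm{\Psi}}^{-1}$ plus a pseudo-range term weighted by ${\mathbf{Q}}_{\mathbf{P}}^{-1}$. Because ${\mathbf{N}}$ enters only the phase term, the inner minimization over ${\mathbf{N}} \in {\mathbb{Z}}^{\mathcal{S}\times\mathcal{A}}$ leaves the pseudo-range term untouched and collapses to exactly the carrier-phase-only minimization already settled in Lemma~\ref{lem1}. Invoking that lemma, the residual constraint forces the optimal integer choice ${\mathbf{\tilde N}}({\mathbf{R}}) = \operatorname{round}({\bm{\Psi}} - {\mathbf{H}}{\mathbf{R}}{\mathbf{X}_b})$ and rewrites the phase block as its wrapped form $\operatorname{wrap}[\operatorname{vec}({\bm{\Psi}} - {\mathbf{H}}{\mathbf{R}}{\mathbf{X}_b})]$, while the pseudo-range block is unchanged.

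Finally I would reassemble the two blocks. Since ${\mathbf{Q}}$ is block-diagonal, the sum of the separately weighted phase and pseudo-range terms equals the single stacked norm weighted by ${\mathbf{Q}}^{-1}$, which is precisely \eref{eq:eq25}, with the remaining optimization over ${\mathbf{R}} \in {\mathbb{S}}$ alone. The step deserving the most care --- and the main obstacle --- is justifying that profiling out ${\mathbf{N}}$ acts only on the phase block. This hinges jointly on the pseudo-range being ambiguity-free, so that no integer variable couples the bottom block, and on the uncorrelated, hence block-diagonal, structure of ${\mathbf{Q}}$, so that the two quadratic forms decouple additively; were ${\mathbf{Q}}$ to contain phase--pseudo-range cross terms, shifting the phase residual by an integer would also perturb the cross term and the optimal ${\mathbf{N}}$ would no longer be the plain rounding. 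With these two facts established, the wrap substitution of Lemma~\ref{lem1} applies verbatim and \eref{eq:eq25} follows.
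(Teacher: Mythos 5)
Your proposal is correct and follows essentially the same route as the paper: the paper treats Lemma~\ref{lem5} as an immediate consequence of Lemma~\ref{lem1} (stating only that \eref{eq:eq25} "is readily seen" to share its optimum with \eref{eq13}), and its subsequent decomposition in \eref{eq:eq11}--\eref{eq:eq13} rests on exactly your two observations --- the block-diagonal structure of ${\mathbf{Q}}$ splitting the objective into phase and pseudo-range terms, and the zero lower block of ${\mathbf{B}}$ keeping ${\mathbf{N}}$ out of the pseudo-range term, so the rounding/wrap substitution applies only to the phase block. Your explicit remark on why cross-correlation between phase and pseudo-range would break the plain-rounding step is a worthwhile addition that the paper leaves implicit.
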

It is readily seen that (\ref{eq:eq25}) will achieve the same global minimum and optimal solution of $\mathbf{R}$ as
\begin{subequations}
\label{eq13}
\begin{align}
\mathop {\min }\limits_{{\mathbf{R}} \in {\mathbb{S}},{\mathbf{N}} \in {\mathbb{Z}^{\mathcal{S} \times\mathcal{A}}}} & \left\| \operatorname{vec} \! \left({{\mathbf{Y}}  - {{\mathbf{A}} \mathbf{R}\mathbf{X}_b} - {\mathbf{B}} {\mathbf{N}}} \right) \right\|_{{\mathbf{Q}_{\mathbf{Y}}^{-1}}}^2 \\
\text{s.t.} & -\frac{1}{2} < \xi_{ij}( {\mathbf{R}}, N_{ij} ) \leqslant \frac{1}{2},
\end{align}
\end{subequations}
\[i = 1, 2, \cdots, \mathcal{S};\quad j= 1, 2, \cdots, \mathcal{A}.\]
To compare \eref{eq:eq25} with \eref{eq:cls}, we assume that
\begin{equation}
  {{{\mathbf{\tilde R}}}_{_\text{or}} } = \mathop {\arg\min }\limits_{{\mathbf{R}} \in {\mathbb{S}_1}} \left\| \operatorname{vec} \! \left({\left[ \!\!{\begin{array}{*{20}{c}}
  \bm{\Phi}\\ 
  \bf{P} 
\end{array}}\!\! \right]}  - {{\left[ \!\!{\begin{array}{*{20}{c}}
  \mathbf{H}\\ 
  \mathbf{H} 
\end{array}} \!\! \right]} \mathbf{R}\mathbf{X}_b} \right)\right\|_{{\mathbf{Q}^{-1}}}^2,
\end{equation}
\begin{equation}
  {{{\mathbf{\tilde R}}}_{_\text{CWLS}} } = \mathop {\arg\min }\limits_{{\mathbf{R}} \in {\mathbb{S}}} \left\| {\left[ \!\!{\begin{array}{*{20}{c}}
  {\operatorname{wrap} \! \left[\operatorname{vec} \! \left( {{\bm{\Psi }} - {\mathbf{H}} {\mathbf{R}} {\mathbf{X}_b}} \right)\right]} \\ 
  \operatorname{vec} \! \left({{\bf{P}}  - {\mathbf{H}} {\mathbf{R}} {\mathbf{X}_b}} \right)
\end{array}}\!\! \right]} \right\|_{{\mathbf{Q}^{-1}}}^2,
\end{equation}
\begin{equation}
  {{{\mathbf{\tilde R}}}_{_\text{CILS}} }, {{{\mathbf{\tilde N}}}_{_\text{CILS}} } \! =\! \mathop {\arg\min }\limits_{{\mathbf{R}} \in {\mathbb{S}},{\mathbf{N}} \in {\mathbb{Z}^{\mathcal{S} \times\mathcal{A}}}} \!\! \left\|\! \operatorname{vec} \! \left({{\mathbf{Y}}  \!-\! {{\mathbf{A}} \mathbf{R}\mathbf{X}_b} \!-\! {\mathbf{B}} \mathbf{N}} \right)\!  \right\|_{{\mathbf{Q}_{\mathbf{Y}}^{-1}}}^2. \!\!\!
\end{equation}
Given the structure of ${\mathbf{Q}}$, we have
\begin{equation}
\begin{aligned}
{{{\mathbf{\tilde R}}}_{_\text{or}} } \! = \! \mathop {\arg\min }\limits_{{\mathbf{R}} \in {\mathbb{S}_1}}   \{ &\left\| 
  \operatorname{vec} \! \left({{\bm{\Phi }} \! - \! {\mathbf{H}}{\mathbf{R}} {\mathbf{X}_b}} \right) \right\|_{{\mathbf{Q}^{-1}_{{\bm{\Psi }}}}}^2 \\
  +  &\left\|\operatorname{vec} \! \left(
  {{\bf{P}} \! - \!{\mathbf{H}}{\mathbf{R}} {\mathbf{X}_b}}\right) \right\|_{{\mathbf{Q}^{-1}_{{\bf{P}}}}}^2 \},
\label{eq:eq11}
\end{aligned}
\end{equation}
\begin{equation}
\begin{aligned}
{{{\mathbf{\tilde R}}}_{_\text{CWLS}} }\!\! = \! \mathop {\arg\min }\limits_{{\mathbf{R}} \in {\mathbb{S}}}  \{ &\left\| 
  {\operatorname{wrap}\!\left[\operatorname{vec} \! \left( {{\bm{\Psi }}\! - \!{\mathbf{H}} {\mathbf{R}} {\mathbf{X}_b}} \right)\right]}  \right\|_{{\mathbf{Q}^{-1}_{{\bm{\Psi }}}}}^2 \\  + & \left\| \operatorname{vec} \! \left(
  {{\bf{P}}\!  - \!{\mathbf{H}}{\mathbf{R}} {\mathbf{X}_b}}\right) \right\|_{{\mathbf{Q}^{-1}_{{\bf{P}}}}}^2 \} , 
  \label{eq:eq12}
\end{aligned}
\end{equation}
\begin{equation}
\begin{aligned}
{{{\mathbf{\tilde R}}}_{_\text{CILS}} }, {{{\mathbf{\tilde N}}}_{_\text{CILS}} }\! \! =\! \!\mathop {\arg\min }\limits_{{\mathbf{R}} \in {\mathbb{S}},{\mathbf{N}} \in {\mathbb{Z}^{\mathcal{S} \times\mathcal{A}}}}   \{&\left\| \!
  \operatorname{vec} \! \left({{\bm{\Psi }}\! - \!{\mathbf{H}} {\mathbf{R}} {\mathbf{X}_b}}\! - \!{{{\mathbf{N}}}}\right)  \!\right\|\! _{{\mathbf{Q}^{-1}_{{\bm{\Psi}}}}}^2 \\+ &\left\| \!\operatorname{vec} \! \left(
  {{\bf{P}}  \! - \! {\mathbf{H}}{\mathbf{R}} {\mathbf{X}_b}}\right) \!\right\| \!_{{\mathbf{Q}^{-1}_{{\bf{P}}}}}^2\}.
  \label{eq:eq13}
\end{aligned}
\end{equation}
Incorporating pseudo-range observations will introduce the same terms ($\left\|\operatorname{vec} \! \left({{\bf{P}}  - {\mathbf{H}} {\mathbf{R}}{\mathbf{X}_b}}\right) \right\|_{{\mathbf{Q}^{-1}_{{\bf{P}}}}}^2$) to \eref{eq:eq11}-\eref{eq:eq13}. 
So all the results in Corollary \ref{cor1}, Corollary \ref{cor2}, and Corollary \ref{cor3} are still true. As a consequence, we establish the following corollary.
\begin{corollary} \label{cor4} 
For the estimators in (\ref{eq:eq11}), (\ref{eq:eq12}) and (\ref{eq:eq13}), the global optimums with respect to ${\mathbf{R}}$ satisfy the following relationships
\[{{{\mathbf{\tilde R}}}_{_\text{CILS}}} = {{\mathbf{\tilde R}}}_{_\text{or}}  \Rightarrow {{{\mathbf{\tilde R}}}_{_\text{CWLS}}} = {{\mathbf{\tilde R}}}_{_\text{or}},\]
\[{{{\mathbf{\tilde R}}}_{_\text{CWLS}}} \neq {{\mathbf{\tilde R}}}_{_\text{or}}  \Rightarrow {{{\mathbf{\tilde R}}}_{_\text{CILS}}} \neq {{\mathbf{\tilde R}}}_{_\text{or}},\] \[P({{{\mathbf{\tilde R}}}_{_\text{CWLS}}} = {{\mathbf{\tilde R}}}_{_\text{or}}) \geqslant P({{{\mathbf{\tilde R}}}_{_\text{CILS}}} = {{\mathbf{\tilde R}}}_{_\text{or}}).\]
\end{corollary}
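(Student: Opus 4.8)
The plan is to exploit the observation, already recorded just before the statement, that the three combined objectives in \eref{eq:eq11}, \eref{eq:eq12}, and \eref{eq:eq13} differ from their carrier-phase-only counterparts only by the common additive term $g(\mathbf{R}) \triangleq \left\| \operatorname{vec}(\mathbf{P} - \mathbf{H}\mathbf{R}\mathbf{X}_b) \right\|_{\mathbf{Q}^{-1}_{\mathbf{P}}}^2$. The two structural facts I would lean on are that $g$ is \emph{identical} across all three estimators (a consequence of the block-diagonal form of $\mathbf{Q}$, which decouples the phase and pseudo-range quadratic forms) and that $g$ depends on $\mathbf{R}$ \emph{alone}, not on the integer matrix $\mathbf{N}$. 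My strategy is therefore to re-run the proofs of \corref{cor1} and \corref{cor2} step for step, adding $g(\mathbf{R})$ to every term in their chains of (in)equalities, and to check that each step survives the modification.

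First I would establish the implication $\tilde{\mathbf{R}}_{_\text{CILS}} = \tilde{\mathbf{R}}_{_\text{or}} \Rightarrow \tilde{\mathbf{R}}_{_\text{CWLS}} = \tilde{\mathbf{R}}_{_\text{or}}$. The pointwise identity \eref{eq:eq6} (wrapped objective equals the unambiguous one on $\mathbb{S}_1$) and the pointwise inequality \eref{eq:eq7} (wrapped objective dominates the $\mathbf{N}$-minimized objective on $\mathbb{S}_2$) both hold for each fixed $\mathbf{R}$; since $g(\mathbf{R})$ is a fixed number at each such $\mathbf{R}$ and is free of $\mathbf{N}$, adding it to both sides preserves equality and inequality alike and may be carried inside the inner minimization over $\mathbf{N}$ without effect. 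Consequently the analogues of \eref{eq:eq8} and \eref{eq:eq9}, obtained under the hypothesis that the joint minimum over $\mathbb{S} \times \mathbb{Z}^{\mathcal{S}\times\mathcal{A}}$ is attained inside $\mathbb{S}_1$, remain valid with $g$ appended. Chaining them exactly as in \corref{cor1} yields $\min_{\mathbf{R} \in \mathbb{S}_1}[\,\cdot\,] < \min_{\mathbf{R} \in \mathbb{S}_2}[\,\cdot\,]$ for the combined C-WLS objective, forcing $\tilde{\mathbf{R}}_{_\text{CWLS}} \in \mathbb{S}_1$, after which the pointwise equality on $\mathbb{S}_1$ identifies its minimizer with $\tilde{\mathbf{R}}_{_\text{or}}$.

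Next I would treat the converse, $\tilde{\mathbf{R}}_{_\text{CWLS}} \neq \tilde{\mathbf{R}}_{_\text{or}} \Rightarrow \tilde{\mathbf{R}}_{_\text{CILS}} \neq \tilde{\mathbf{R}}_{_\text{or}}$, by the same appending of $g$ to the inequality chain in the proof of \corref{cor2}: if the combined C-WLS minimizer lies in $\mathbb{S}_2$, the augmented analogue of \eref{eq:eq10} holds, and combining it with the augmented \eref{eq:eq7} drives the combined C-ILS minimum strictly below the oracle value, so its minimizer cannot coincide with $\tilde{\mathbf{R}}_{_\text{or}}$. Finally, the probability statement $P(\tilde{\mathbf{R}}_{_\text{CWLS}} = \tilde{\mathbf{R}}_{_\text{or}}) \geqslant P(\tilde{\mathbf{R}}_{_\text{CILS}} = \tilde{\mathbf{R}}_{_\text{or}})$ follows from these two implications by the identical logical step used in \corref{cor3}.

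I do not anticipate a genuine obstacle; the argument is a controlled repetition rather than a new proof. The one point deserving explicit care is the analogue of \eref{eq:eq9}: on $\mathbb{S}_1$ the integer matrix that jointly minimizes the combined objective must still be the one yielding the unambiguous residual $\bm{\Phi} - \mathbf{H}\mathbf{R}\mathbf{X}_b$. This is precisely where the $\mathbf{N}$-independence of $g$ is indispensable, since $g$ contributes nothing to the inner minimization over $\mathbf{N}$, that optimal integer is unchanged from the phase-only case, and the equality in \eref{eq:eq9} transfers intact. Verifying this single decoupling is the only step that is not purely formal.
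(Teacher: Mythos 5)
Your proposal is correct and is essentially the paper's own argument: the paper also observes that the block-diagonal structure of $\mathbf{Q}$ makes the pseudo-range term $\left\|\operatorname{vec}\!\left(\mathbf{P}-\mathbf{H}\mathbf{R}\mathbf{X}_b\right)\right\|_{\mathbf{Q}_{\mathbf{P}}^{-1}}^{2}$ appear identically in \eref{eq:eq11}--\eref{eq:eq13}, independent of $\mathbf{N}$, so the conclusions of Corollaries~\ref{cor1}--\ref{cor3} transfer unchanged. The only difference is one of detail: the paper states this in two sentences, while you explicitly re-trace the inequality chains with the common term appended, which is a more careful but not a different proof.
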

Corollary~\ref{cor4} demonstrates the advantage of the C-WLS model. The proposed method imposes a constraint on the residual errors, which shrinks the solution space and remove spurious solutions. This results in an improvement over the C-ILS method. To give insight into all the results discussed in this section, we summarize the significance of various lemmas and corollaries in Remark~\ref{remk1}.

\begin{remark} \label{remk1}
The significance of various lemmas and corollaries is as follows.

Lemma~\ref{lem1} and Lemma~\ref{lem5} formulate the proposed C-WLS optimization model.
\begin{itemize}
\item Lemma~\ref{lem1} based on only carrier phase data.
\item Lemma~\ref{lem5} based on pseudo-range and carrier phase data.
\end{itemize}
Lemma~\ref{lem3} -- \ref{lem2} discuss the conditions for the C-WLS and the C-ILS optimization to have equivalent global optimum for ${\mathbf{R}}$ when using only carrier phase observations.
\begin{itemize}
\item Lemma~\ref{lem3} provides a sufficient and necessary condition.
\item Lemma~\ref{lem2} proves a diagonal ${\mathbf{Q}}_{\bm{\Psi }}$ as a sufficient condition.
\end{itemize}
Corollary~\ref{cor1} -- \ref{cor4} indicate the advantage of the C-WLS model compared with the C-ILS model; namely, the C-WLS estimator is more likely to have the same global optimum for ${\mathbf{R}}$ as the oracle estimator.
\begin{itemize}
\item Corollary~\ref{cor1} -- \ref{cor3} based on only carrier phase data. 
\item Corollary~\ref{cor4} based on pseudo-range and carrier phase data.
\end{itemize}
\end{remark}

\section{Implementation of the Proposed C-WLS Method}
\label{sec:solution}
In the C-WLS problem (\ref{eq:eq25}), the rotation matrix ${\mathbf{R}}$ is the only unknown. For $q = 3$, ${\mathbf{R}}$ includes 9 elements that are uniquely determined by 3 independent variables. The Euler angles are usually used to define ${\mathbf{R}}$ and represent the orientation of the rigid body with respect to the three axes in reference coordinate system. The rotation matrix ${\mathbf{R}}$ can be expressed in the following form:
\begin{equation*}
\resizebox{.99\hsize}{!}{
${\mathbf{R}} \! = \! \left[ \!\!\! {\begin{array}{*{20}{c}}
  {\cos \alpha \cos \beta } &{\cos \alpha \sin \beta \sin \gamma \! - \! \sin \alpha \cos \gamma } &{\cos \alpha \sin \beta \cos \gamma \! + \! \sin \alpha \sin \gamma } \\ 
  {\sin \alpha \cos \beta } &{\sin \alpha \sin \beta \sin \gamma \! + \! \cos \alpha \cos \gamma } &{\sin \alpha \sin \beta \cos \gamma  \!-\! \cos \alpha \sin \gamma } \\ 
  { - \! \sin \beta } &{\cos \beta \sin \gamma } &{\cos \beta \cos \gamma } 
\end{array}} \!\!\! \right]$},
\end{equation*}
where $\alpha$, $\beta$ and $\gamma$ are the yaw, pitch, and roll angles, respectively. An optimal of (\ref{eq:eq25}) can be found by searching over the range of the Euler angles. However, such a 3-D search leads to a high computational complexity. Given the prior knowledge of the antenna-array configuration and the integer characteristics of phase ambiguities, we can efficiently reduce the search space to a few possible estimates of ${\mathbf{R}}$ (instead of searching through the whole range of the Euler angles). In the following subsections, we will develop the search strategies for the single-baseline and multi-baseline cases separately.

\subsection{Single-baseline Attitude Determination}
In a single-baseline set-up, ${\mathbf{R}} \in {\mathbb{R}^{3}}$ is a unit column vector, uniquely determined by yaw and pitch angles. To facilitate the discussion of the multi-baseline case (presented later), we replace ${\mathbf{R}}$ with ${\mathbf{r}_a}$, which will later be used to indicate the $a$-th (single) baseline in a multi-baseline configuration. The vector ${\mathbf{r}_a}$ represents the unit direction vector of a single baseline. 
Based on (\ref{eq:eq25}), we can write
\begin{equation}
\mathop {\min }\limits_{{\mathbf{r}_a} \in {\mathbb{R}^{3}}, \left\| {\mathbf{r}_a} \right\|_2 = 1} \left\| {\left[ \!\!\! {\begin{array}{*{20}{c}}
  {\operatorname{wrap}( {{\bm{\psi}}_a - d_a{\mathbf{H}} {\mathbf{r}_a}} )} \\ 
  {{\bm{\rho}}_a  - d_a{\mathbf{H}} {\mathbf{r}_a}} 
\end{array}} \!\!\! \right]} \right\|_{{\mathbf{Q}_a^{-1}}}^2,
\label{eq:singlecwls}
\end{equation}
where $d_a$ is the baseline length, and ${\mathbf{Q}_a}$ is the covariance matrix. The baseline vector in the reference frame is given by
\begin{equation}
{\mathbf{x}_a} = d_a {\mathbf{r}_a}.
\label{eq:base_vec}
\end{equation}

According to (\ref{dd:eq2}), (\ref{eq:H}) and (\ref{eq:base_vec}), we have
\begin{equation}
 \psi_{a0}^{s0} -  {{N}} _{a0}^{s0} = d_a {{\bf{\tilde h}}_s}{\mathbf{r}_a} + \eta _{a0}^{s0}.
\end{equation}
As shown in Fig.~\ref{fig:intersection1}, we draw the vector ${\mathbf{r}_a}$ and ${{\bf{\tilde h}}_s}$ starting from the origin. ${\mathbf{r}_a}$ is the unknown unit vector which makes an angle $\theta_s$ with ${{\bf{\tilde h}}_s}$. Hence, the candidate terminal points (coordinates) for ${\mathbf{r}_a}$ represent a circle on the sphere centered around the radial line of ${{\bf{\tilde h}}_s}$. If we ignore phase noise, the angle $\theta_s$ satisfies 
\begin{equation}
 \cos \theta_s = \frac{\psi_{a0}^{s0} -  {{N}} _{a0}^{s0}}{d_a \| {\bf{\tilde h}}_s \|_2}.
 \label{eq:cos}
\end{equation}
For the $s$-th element of ${\bm{\psi}}_a$, the possible integer ambiguities should be confined to
\begin{equation}
 -{d_a \| {\bf{\tilde h}}_s \|_2} \leqslant \psi_{a0}^{s0} -  {{N}} _{a0}^{s0} \leqslant {d_a \| {\bf{\tilde h}}_s \|_2}.
\end{equation}
Considering all the possible ${{N}} _{a0}^{s0}$, the candidate terminal points for ${\mathbf{r}_a}$ will be on a set of parallel circles on the spherical surface.
\begin{figure}[tbp]
\centering
\includegraphics[width=0.30\textwidth]{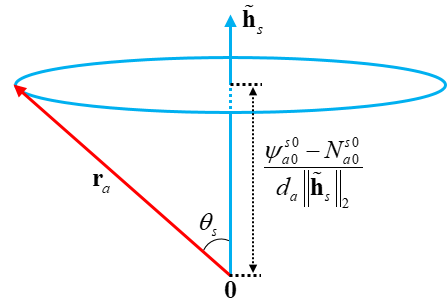}
\caption{The relationship between ${\mathbf{r}_a}$ and ${{\bf{\tilde h}}_s}$.}
\label{fig:intersection1}
\end{figure}
\begin{figure}[tbp]
\centering
\includegraphics[width=0.37\textwidth]{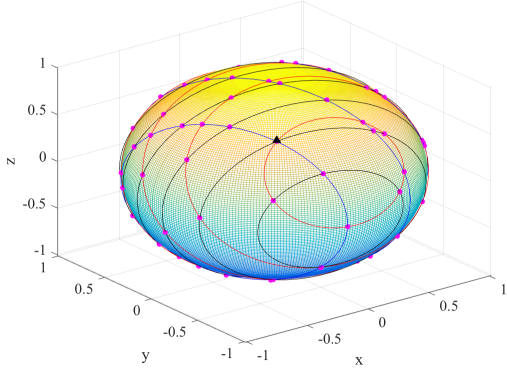}
\caption{Distribution of potential candidate points for ${\mathbf{r}_a}$.}
\label{fig:intersection2}
\end{figure}

Fig.~\ref{fig:intersection2} shows an example of the distribution of possible candidate points for ${\mathbf{r}_a}$ with four satellites under consideration (circles of the same color corresponding to one double-difference phase observation). The intersections between the circles of different colors indicate possible estimates of ${\mathbf{r}_a}$. The intersection points between two circles exist if the following system of equations has a solution:
\begin{equation}
\left\{\begin{matrix}
\begin{aligned}
&\psi_{a0}^{s0} - {{N}} _{a0}^{s0} = d_a {{\bf{\tilde h}}_s}{\mathbf{r}_a}, \\ 
&\psi_{a0}^{m0} - {{N}} _{a0}^{m0} = d_a {{\bf{\tilde h}}_m}{\mathbf{r}_a}.
\end{aligned}
\end{matrix}\right.
\label{eq:sys1}
\end{equation}
In the noise-free case, there is one combination of the integers
\begin{equation}
{\bf{N }}_a = {\left[ \!\!\! {\begin{array}{*{20}{c}}
  {{N }_{a0}^{10}}& \cdots &{{N }_{a0}^{\mathcal{S} 0}}
\end{array}} \!\!\! \right]^{\text{T}}},
\label{eq:int}
\end{equation}
whose corresponding circles have a joint intersection point,
that is, \eref{eq:int} satisfies
\begin{equation}
\left\{\begin{matrix}
\begin{aligned}
\psi_{a0}^{10} - {{N}} _{a0}^{10} &= d_a {{\bf{\tilde h}}_1}{\mathbf{r}_a}, 
\\ \psi_{a0}^{20} - {{N}} _{a0}^{20} &= d_a {{\bf{\tilde h}}_2}{\mathbf{r}_a}, \\
&\vdots\\
\psi_{a0}^{\mathcal{S} 0} - {{N}} _{a0}^{\mathcal{S} 0} &= d_a {{\bf{\tilde h}}_{_\mathcal{S}}}{\mathbf{r}_a}. 
\end{aligned}
\end{matrix}\right.
\label{eq:sys2}
\end{equation}
Due to the measurement noise, we may have only pairwise circle intersect,
that is, a solution of (\ref{eq:sys2}) may not exist. However, we can calculate the intersection points for all the circle pairs of different observations. The baseline vector can be determined by evaluating (\ref{eq:singlecwls}) (only) at these intersections.

Instead of solving the system of equations (\ref{eq:sys1}), we can achieve the intersections by applying the geometry relationships between two circles on the sphere. To calculate the intersection points, we need to define some intermediate vectors, such as the normal vector ${{\bf{v}}_1}$ of the plane determined by ${{\bf{\tilde h}}_s}$ and ${{\bf{\tilde h}}_m}$, and the normal vector ${{\bf{v}}_2}$ of the plane on which ${{\bf{v}}_1}$ and ${{\bf{\tilde h}}_m}$ lie. ${{\bf{v}}_1}$ and ${{\bf{v}}_2}$ are given by
\begin{equation}
{{\bf{v}}_1} = \frac{{{\bf{\tilde h}}_s} \times {\bf{\tilde h}}_m}{\| {{\bf{\tilde h}}_s} \times {\bf{\tilde h}}_m \|_2}, \text{ }
{{\bf{v}}_2} = \frac{{{\bf{v}}_1} \times {\bf{\tilde h}}_m}{\| {{\bf{v}}_1} \times {\bf{\tilde h}}_m \|_2}.
\label{eq:int1}
\end{equation}
In Fig.~\ref{fig:intersection3}, the centers of the circle $s$ and circle $m$ are
\begin{equation}
 \mathbf{c}_s = \frac{\cos \theta_s{{\bf{\tilde h}}_s}}{\| {\bf{\tilde h}}_s \|_2}, \text{ }
 \mathbf{c}_m = \frac{\cos \theta_m{{\bf{\tilde h}}_m}}{\| {\bf{\tilde h}}_m \|_2},
 \label{eq:int2}
\end{equation}
where the cosines can be computed based on (\ref{eq:cos}).

To check whether the two circles intersect, we define the peak points (the closest or farthest points from the other circle) of circle $m$ with respect to circle $s$ as follows
\begin{equation}
{{\bf{p}}_1} = {{\bf{c}}_m} - \sin \theta_m {{\bf{v}}_2}, \text{ }
{{\bf{p}}_2} = {{\bf{c}}_m} + \sin \theta_m {{\bf{v}}_2}.
\label{eq:int3}
\end{equation}
The directed distances (positive or negative) between the peak points and circle $s$ plane are given by
\begin{equation}
\Delta_1= \left({{\bf{p}}_1} - {{\bf{c}}_s} \right)^{\text{T}}\frac{{\bf{\tilde h}}_s}{\| {\bf{\tilde h}}_s \|_2}, \text{ }
\Delta_2 = \left({{\bf{p}}_2} - {{\bf{c}}_s} \right)^{\text{T}}\frac{{\bf{\tilde h}}_s}{\| {\bf{\tilde h}}_s \|_2}.
\label{eq:int4}
\end{equation}
If $\Delta_1\Delta_2 \leqslant 0$, circle $s$ will intersect with circle $m$. Otherwise, there is no intersection.
When $\Delta_1 = 0$, the two circles intersect at ${{\bf{p}}_1}$. Similarly, if $\Delta_2 = 0$, ${{\bf{p}}_2}$ is the intersection point. When $\Delta_1\Delta_2 < 0$, there are two intersection points between circle $s$ and circle $m$. A procedure to calculate these intersection points is as follows. The directed distance between ${{\bf{c}}_m}$ and circle $s$ plane is
\begin{equation}
\Delta_3 = \left({{\bf{c}}_m} - {{\bf{c}}_s} \right)^{\text{T}}\frac{{\bf{\tilde h}}_s}{\| {\bf{\tilde h}}_s \|_2}.
\label{eq:int5}
\end{equation}
The projected point of ${{\bf{c}}_m}$ on circle $s$ plane is expressed as
\begin{equation}
{{\bf{p}}_{{\bf{c}}_m}} = {{\bf{c}}_m} - \frac{\Delta_3 {{\bf{\tilde h}}_s}}{\| {\bf{\tilde h}}_s \|_2}.
\label{eq:int6}
\end{equation}
As shown in Fig.~\ref{fig:intersection3}, ${{\bf{\tilde h}}_s}$ intersects with the circle $m$ plane at
\begin{equation}
{{\bf{p}}_{{\bf{c}}_s}} = \frac{\cos \theta_{m}}{\cos \theta_{sm}} \frac{{\bf{\tilde h}}_s}{\| {\bf{\tilde h}}_s \|_2},
\label{eq:int7}
\end{equation}
where $\theta_{sm}$ is the angle between ${{\bf{\tilde h}}_s}$ and ${{\bf{\tilde h}}_m}$. The directed distance between ${{\bf{p}}_{{\bf{c}}_s}}$ and circle $s$ plane is given by
\begin{equation}
\Delta_4 = \left({{\bf{p}}_{{\bf{c}}_s}}- {{\bf{c}}_s} \right)^{\text{T}}\frac{{\bf{\tilde h}}_s}{\| {\bf{\tilde h}}_s \|_2}.
\label{eq:int8}
\end{equation}
Based on the basic proportionality theorem of similar triangles, we have
\begin{equation}
\frac{{\bf{p}}_c - {\bf{c}}_s}{{\bf{p}}_c - {{\bf{p}}_{{\bf{c}}_m}}} = \frac{\Delta_4}{\Delta_3},
\label{eq:int9}
\end{equation}
where ${\bf{p}}_c$ is the center point on the intersection line. Therefore, we have
\begin{equation}
{{\bf{p}}_c} = \frac{\Delta_3{{\bf{c}}_s}-\Delta_4{{\bf{p}}_{{\bf{c}}_m}}}{\Delta_3-\Delta_4}.
\end{equation}
The two intersection points can be represented as
\begin{equation}
{{\bf{q}}_1} = {{\bf{p}}_c} - \sqrt{1 - {\left\| {\bf{p}}_c \right\|}_2^2} {{\bf{v}}_1}, \text{ }
{{\bf{q}}_2} = {{\bf{p}}_c} + \sqrt{1 - {\left\| {\bf{p}}_c \right\|}_2^2} {{\bf{v}}_1}.
\label{eq:int10}
\end{equation}

We consider all the intersection points as candidates for being an estimate of ${\mathbf{r}_a}$. When there is no intersection, if $\Delta_1$ or $\Delta_2$ is close to zero, some points around the corresponding peak point should also be selected as the potential candidates as the lack of intersection might be due to noise. We define a threshold $\delta_\Delta$ for $\Delta_1$ and $\Delta_2$. If $|\Delta_1| < \delta_\Delta$, we can select ${{\bf{p}}_1}$ as a candidate; if $|\Delta_2| < \delta_\Delta$, we can choose ${{\bf{p}}_2}$ as a candidate. 
We can create a pool (${\mathbb{S}}_{\mathbf{r}_a}$) of all candidate points from the intersections of all possible circle pair combinations. Algorithm~\ref{alg0} summarizes the procedure to determine ${\mathbb{S}}_{\mathbf{r}_a}$. We keep the best $K$ estimations of ${\mathbf{r}_a}$ (coarse solutions) using
\begin{equation}
\mathop {\min }\limits_{{\mathbf{r}_a} \in {\mathbb{S}}_{\mathbf{r}_a}} \left\| {\left[ \!\!\! {\begin{array}{*{20}{c}}
  {\operatorname{wrap}( {{\bm{\psi}}_a - d_a{\mathbf{H}} {\mathbf{r}_a}} )} \\ 
  {{\bm{\rho}}_a  - d_a{\mathbf{H}} {\mathbf{r}_a}} 
\end{array}} \!\!\! \right]} \right\|_{{\mathbf{Q}_a^{-1}}}^2.
\label{eq:cwlsset}
\end{equation}
\begin{figure}[tbp]
\centering
\includegraphics[width=0.49\textwidth]{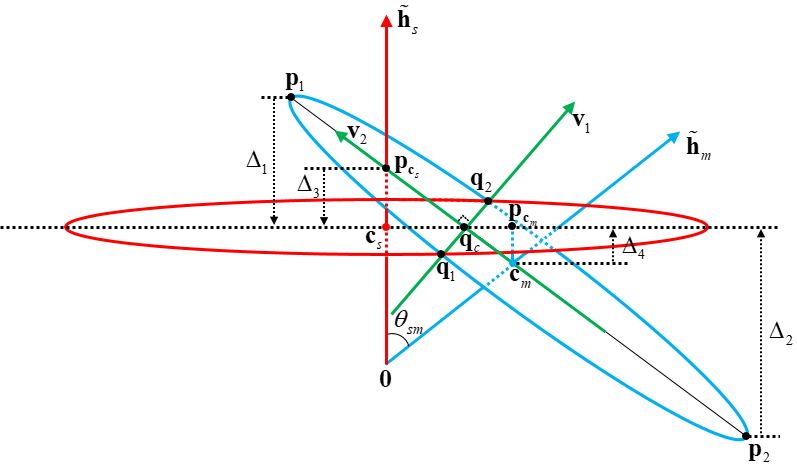}
\caption{Illustration of the intersection points.}
\label{fig:intersection3}
\end{figure}

To achieve a high-quality solution, a refinement of the coarse results is necessary. When the unambiguous phase $\bm{\phi}_a$ is available, we consider the following optimization
\begin{equation}
\mathop {\min }\limits_{{\mathbf{r}_a} \in {\mathbb{R}^{3}}, \left\| {\mathbf{r}_a} \right\|_2 = 1} \left\| {\left[ \!\!\! {\begin{array}{*{20}{c}}
  {\bm{\phi}_a  - d_a{\mathbf{H}}{\mathbf{r}_a}} \\ 
  {{\bm{\rho}}_a  - d_a{\mathbf{H}} {\mathbf{r}_a}} 
\end{array}} \!\!\! \right]} \right\|_{{\mathbf{Q}_a^{-1}}}^2,
\label{eq:srefin2}
\end{equation}
which leads to a general constraint least-squares solution.
Many strategies, generally iterative schemes, can be utilized to solve the nonlinear system in \eref{eq:srefin2}. 
We can directly utilize the ambiguous carrier phase as long as we substitute the following term in \eref{eq:srefin2}: 
\begin{equation}
\bm{\phi}_a   = \bm{\psi}_a + \text{round}( d_a{\mathbf{H}} {\mathbf{r}_a - \bm{\psi}_a}).
\label{eq:srefin3}
\end{equation}
We use the coarse results as the initial solutions and and achieve the final estimation based on \eref{eq:srefin2}--\eref{eq:srefin3}. The entire procedures are summarized in Algorithm~\ref{alg1}.

\begin{algorithm}[t]
\begin{algorithmic}[1]
\STATE ${\mathbb{S}}_{\mathbf{r}_a} = \varnothing$
\FORALL {Circle pairs}
\STATE Compute ${\bf{p}}_1$, ${\bf{p}}_2$, $\Delta_1$ and $\Delta_2$ using \eref{eq:int1}--\eref{eq:int4}.
\IF {$\Delta_1\Delta_2 < 0$}
\STATE Compute ${{\bf{q}}_1}$ and ${{\bf{q}}_2}$ using \eref{eq:int5}--\eref{eq:int10}.
\STATE ${\mathbb{S}}_{\mathbf{r}_a} = {\mathbb{S}}_{\mathbf{r}_a} \cup \{{{\bf{q}}_1},{{\bf{q}}_2}\}$.
\ELSE
\STATE When $|\Delta_1| < \delta_\Delta$, ${\mathbb{S}}_{\mathbf{r}_a} = {\mathbb{S}}_{\mathbf{r}_a} \cup \{{{\bf{p}}_1}\}$.
\STATE When $|\Delta_2| < \delta_\Delta$, ${\mathbb{S}}_{\mathbf{r}_a} = {\mathbb{S}}_{\mathbf{r}_a} \cup \{{{\bf{p}}_2}\}$.
\ENDIF
\ENDFOR
\end{algorithmic}
\caption{Procedure to search baseline vector candidates.}
\label{alg0}
\end{algorithm}

\begin{algorithm}[t]
\begin{algorithmic}[1]
\STATE  Determine ${\mathbb{S}}_{\mathbf{r}_a}$ based on Algorithm~\ref{alg0}.
\STATE Select the best $K$ estimations from ${\mathbb{S}}_{\mathbf{r}_a}$ using \eref{eq:cwlsset}.
\STATE Use the coarse results as the initial solutions.
\STATE Final results after refinement based on \eref{eq:srefin2}--\eref{eq:srefin3}.
\end{algorithmic}
\caption{C-WLS Single-baseline Attitude Determination.}
\label{alg1}
\end{algorithm}

\subsection{Multi-baseline Attitude Determination}
For multi-baseline configurations, we can estimate the rotation matrix by refining single-baseline solutions to avoid high complexity, rather than searching the solution directly. Single-baseline attitude determination applies only the baseline length constraint. To fully integrate the geometry constraints, we need to consider the relative direction of different baselines. This can be realized by checking the angle between estimations of baseline vectors. 


For each baseline, we keep the best $K$ estimations of the direction vector based on the intersection points and potential candidates. For the $a$-th and $k$-th baseline, we choose the estimations that satisfy
\begin{equation}
{\left| {\arccos \! \left( {{\mathbf{r}}_a^{\text{T}}{\mathbf{r}}_k} \right) - {\Theta _{ak}}} \right| < \delta_\Theta },
\label{eq:ang}
\end{equation}
where ${\Theta_{ak}}$ is the angle between two baselines, and $\delta_\Theta$ is the threshold of angle difference. The selected baseline vectors constitute the possible baseline matrix in the reference coordinate, that is
\begin{equation}{\mathbf{\overset{\lower0.5em\hbox{$\smash{\scriptscriptstyle\frown}$}} X}} = \left[ \!\! {\begin{array}{*{20}{c}}
  d_1{{\mathbf{r}}_1}& \cdots &d_{_\mathcal{A}}{{\mathbf{{r} }}_{_\mathcal{A}}} 
\end{array}} \!\! \right].
\label{eq:vtom}
\end{equation}
For $\mathcal{A}=2$, we can add a column to ${\mathbf{\tilde X}}$ and ${\mathbf{X}_b}$, that is
\begin{equation}
{\mathbf{\tilde X}} \!\!= \!\!\left[\! {{d_1{{\mathbf{r}}}_1}{\text{ }}{d_2{{\mathbf{r}}}_2}{\text{ }}} \frac{{{\mathbf{r}}_1 \!\times\! {\mathbf{r}}_2}}{{{{\left\| {{\mathbf{r}}_1 \!\times \!{\mathbf{ r}}_2} \right\|}_2}}} \!\right],
{{\mathbf{X}}_b} \!\!= \!\!\left[\! {{\mathbf{x}}_b^1{\text{ }}{\mathbf{x}}_b^2{\text{ }}\frac{{{\mathbf{x}}_b^1 \!\times\! {\mathbf{x}}_b^2}}{{{{\left\|{{\mathbf{x}}_b^1 \!\times \!{\mathbf{x}}_b^2} \right\|}_2}}}} \!\right].
\end{equation}
According to the relationship between the baseline matrix in the body frame and reference frame, the coarse solution of the rotation matrix can be calculated using
\begin{equation}
{\mathbf{\tilde R}} = {\mathbf{\tilde X}}{\mathbf{X}}_b^{\text{T}}{\left( {{{\mathbf{X}}_b}{\mathbf{X}}_b^{\text{T}}} \right)^{ - 1}}.
\label{eq:conv}
\end{equation}

\begin{algorithm}[t]
\begin{algorithmic}[1]
\STATE Find the best $K$ estimations for each baseline using step 1--2 of Algorithm~\ref{alg1}.
\STATE Keep the estimations that satisfy the angle criteria \eref{eq:ang} for every baseline pair.
\STATE Combine pointing vectors to create the baseline matrix using \eref{eq:vtom}.
\STATE Convert the baseline matrix to a rotation matrix using \eref{eq:conv}.
\STATE Initialize the orthogonal rotation matrix using Wahba’s problem as in \eref{eq:wahhaq}.
\STATE Final results after refinement via \eref{eq:refin1}--\eref{eq:refin3}.
\end{algorithmic}
\caption{C-WLS Multi-baseline Attitude Determination.}
\label{alg2}
\end{algorithm}
To satisfy the antenna array constraints, we have to transform the rotation matrix to an orthogonal matrix, that is, to solve the following optimization problem:
\begin{equation}
{\mathbf{\overset{\lower0.5em\hbox{$\smash{\scriptscriptstyle\frown}$}} R}} = \arg\mathop {\min }\limits_{{\mathbf{R}} \in {\mathbb{O}^{3 \times 3}}} {\left\| {{\mathbf{R}} - {\mathbf{\tilde R}}} \right\|_2}.
\label{eq:wahhaq}
\end{equation}
This is the Wahba’s problem for which an analytical solution exists \cite{wahba1965problem, markley1988attitude}. The solution goes as follows. First, apply singular-value decomposition to the coarse rotation matrix
\begin{equation}
{\mathbf{\tilde R}} = {\bf{\tilde U}}{\bf{\tilde \Sigma}} {{\bf{\tilde V}}^{\text{T}}},
\label{eq:wahhaq1}
\end{equation}
where ${\bf{\tilde \Sigma}}$ is a diagonal matrix in which the diagonal entries are the singular
values of ${\mathbf{\tilde R}}$, and the columns of ${\bf{\tilde U}}$ and ${\bf{\tilde V}}$ are the left and right singular
vectors of ${\mathbf{\tilde R}}$, respectively.
Then a solution to (\ref{eq:wahhaq}) can be obtained as 
\begin{equation}
{\mathbf{\overset{\lower0.5em\hbox{$\smash{\scriptscriptstyle\frown}$}} R}} = {\mathbf{\tilde U}}\left[\!\!\! {\begin{array}{*{20}{c}}
  1&0&0 \\ 
  0&1&0 \\ 
  0&0&{\det ({\mathbf{\tilde U}})\det ({\mathbf{\tilde V}})} 
\end{array}} \!\!\! \right]{{\mathbf{\tilde V}}^{\text{T}}}.
\label{eq:wahhaq2}
\end{equation}
It can be readily seen that ${\mathbf{\overset{\lower0.5em\hbox{$\smash{\scriptscriptstyle\frown}$}} R}}$ is an orthogonal matrix, i.e, it satisfies all the required constraints. However, this is only a sub-optimal estimate of the rotation matrix; hence, a refinement is required. 

Similar to the single-baseline set-up, if the unambiguous phase $\bm{\Phi}$ is known, we can consider the following optimization
\begin{equation}
\mathop {\min }\limits_{{\mathbf{R}} \in {\mathbb{O}^{3 \times q}}} \left\| {\left[ \!\!{\begin{array}{*{20}{c}}
  {\operatorname{vec} \! \left(  {\bm{\Phi}  - {\mathbf{H}}{\mathbf{R}{\mathbf{X}}_b}} \right)} \\ 
  \operatorname{vec} \! \left({{\bf{P}}  - {\mathbf{H}} {\mathbf{R}} {\mathbf{X}_b}} \right)
\end{array}}\!\! \right]} \right\|_{{\mathbf{Q}^{-1}}}^2.
\label{eq:refin1}
\end{equation}
Again, we directly take advantage of the ambiguous phase by using ${\mathbf{\overset{\lower0.5em\hbox{$\smash{\scriptscriptstyle\frown}$}} R}}$ as the initial solution and applying the substitution
\begin{equation}
\bm{\Phi}  = \bm{\Psi} + \text{round}\left({\mathbf{H}} {\mathbf{R}{\mathbf{X}}_b} -\bm{\Psi} \right).
\label{eq:refin3}
\end{equation}
The entire process to estimate the rotation matrix and resolve the carrier-phase ambiguity is summarized in Algorithm~\ref{alg2}.

\section{Performance Evaluation}
\label{sec:result}
In this section, we first present simulation results. Next, we evaluate the performance of the proposed method experimentally. Our tests focus on the most challenging single-frequency and single-epoch scenarios, with only GPS constellation utilized. To demonstrate the performance of the proposed approach, we benchmark against the AFM-based \cite{LiJul2004} and LAMBDA-based search algorithms (the C-LAMBDA method \cite{teunissen2010integer} for single-baseline set-ups and the MC-LAMBDA method \cite{giorgi2010testing} in multi-baseline scenarios). A 10-degree cut-off elevation mask is used in all cases, as is usually recommended to protect against severe multi-path effects.


\subsection{Simulation Results}
Simulations are implemented based on the Visual software \cite{verhagen2006visual} using the assumed antenna position and real GPS constellation information on November 7, 2021. We evaluate the performance of the proposed approach under different noise levels by adding Gaussian noise with different standard deviations to both the pseudo-range and carrier-phase measurements. We also examine the feasibility of the proposed method under adverse satellite geometry conditions, namely scenarios with a poor position dilution of precision (PDOP) due to limited satellite visibility. A pseudo-range to carrier-phase variance ratio ${\sigma_{\rho}^2}/{\sigma_{\psi}^2} = 10^4$ is adopted \cite{giorgi2013low}. 
We consider setups involving one, two, and three baselines with a 1-meter baseline length. In the multi-baseline cases, the baselines are perpendicular to each other, so the baseline matrices in the body frame are 
\begin{equation}
{{\bf{X}}_b^1} = \left[ {\begin{array}{*{20}{c}}
{1}&{0}\\
0&{1}
\end{array}} \right],
\quad {{\bf{X}}_b^2} = \left[ {\begin{array}{*{20}{c}}
1&0&0\\
0&1&0\\
0&0&1
\end{array}} \right].
\label{xb}
\end{equation}
For each simulated scenario, the tests are repeated $10^4$ times. Each time the attitude angles are generated randomly.

\begin{table*}[htbp]
\centering
  \caption{Success rate (\%) versus phase noise $\sigma_{\psi} (\text{mm})$, number of satellites (\#Sat), and number of baselines.\\C-LAMBDA/MC-LAMBDA \\AFM  \\Proposed}
  \setlength{\tabcolsep}{1.2mm}{
    \begin{tabular}{|c|ccccc|ccccc|ccccc|}
    \hline
     & \multicolumn{5}{c|}{Single Baseline} & \multicolumn{5}{c|}{Two Baselines} & \multicolumn{5}{c|}{Three Baselines}\\
    \hline
    \diagbox{\#Sat}{$\sigma_{\psi}$}  &9 &7 &5 &3 &1       &9 &7 &5 &3 &1   &9 &7 &5 &3 &1\\
    \hline
    \multirow{3}[2]{*}{4}

           &6.43  &\textbf{9.11}  &\textbf{14.54} &\textbf{29.23}  &\textbf{82.74} &0 &0.11 &1.72 &22.43 &\textbf{99.02} &0 &0 &0 &0.25 &88.52\\
           &3.33  &4.01 &6.23  &9.37 &23.78 &2.08 &3.12 &6.55 &14.56 &37.75 &15.67 &33.22 &62.61 &94.09 &99.01\\
           &\textbf{6.48} &\textbf{9.11} &\textbf{14.54} &\textbf{29.23} &\textbf{82.74} &\textbf{3.43} &\textbf{8.35} &\textbf{19.45} &\textbf{54.87} &98.81 &\textbf{22.02} &\textbf{43.18} &\textbf{72.37} &\textbf{96.08} &\textbf{99.76}\\
    \hline
    \multirow{3}[2]{*}{5}

           &\textbf{13.47}  &21.46  &\textbf{37.51}  &\textbf{70.11}  &\textbf{99.11} &1.22 &6.95 &40.27 &96.09 &\textbf{99.97} &0 &0.15 &1.13 &43.54 &\textbf{100}\\
           &7.24 &11.21 &19.45 &39.72 &83.94 &7.23 &21.09 &45.76 &78.22 &95.21 &52.88 &81.09 &95.67 &99.31 &99.99\\
           &\textbf{13.47}  &\textbf{21.47} &\textbf{37.51} &\textbf{70.11} &\textbf{99.11} &\textbf{16.54} &\textbf{36.82} &\textbf{69.44} &\textbf{96.27} &\textbf{99.97} &\textbf{66.47} &\textbf{89.45} &\textbf{99.1} &\textbf{100} &\textbf{100}\\          
    \hline
    \multirow{3}[2]{*}{6} 

           &\textbf{26.97} &\textbf{43.88} &\textbf{70.40} &\textbf{94.48} &\textbf{100} &16.35 &50.46 &93.41 &\textbf{99.85} &\textbf{100} &0.45 &4.17 &43.76 &\textbf{100} &\textbf{100}\\
           &17.97 &31.71 &51.74 &80.59 &98.83 &26.55 &58.77 &76.33 &98.31 &99.23 &84.12 &94.36 &98.59 &99.92 &\textbf{100}\\
           &\textbf{26.97} &\textbf{43.88} &\textbf{70.40}  &\textbf{94.48} &\textbf{100} &\textbf{47.95} &\textbf{76.96} &\textbf{96.16} &\textbf{99.85} &\textbf{100} &\textbf{93.19} &\textbf{99.36} &\textbf{99.99} &\textbf{100} &\textbf{100}\\      
    \hline
    \multirow{3}[2]{*}{7}  

           &\textbf{46.42} &69.04 &\textbf{91.22} &\textbf{99.38} &\textbf{100} &50.59 &90.13 &\textbf{99.59} &\textbf{99.99} &\textbf{100} &6.18 &42.16 &97.87 &\textbf{100} &\textbf{100}\\
           &37.53 &57.11 &79.58 &95.73 &99.82 &48.87 &78.59 &92.67 &99.89 &\textbf{100} &88.31 &97.31 &99.31 &99.99 &\textbf{100}\\
           &\textbf{46.42} &\textbf{69.05} &\textbf{91.22} &\textbf{99.38} &\textbf{100} &\textbf{78.16} &\textbf{95.12} &\textbf{99.59} &\textbf{99.99} &\textbf{100} &\textbf{99.24} &\textbf{99.97} &\textbf{100} &\textbf{100} &\textbf{100}\\
    \hline
    \multirow{3}[2]{*}{8}  

           &67.63 &87.26 &\textbf{98.01} &\textbf{99.96} &\textbf{100} &85.51 &\textbf{99.16} &\textbf{99.96} &\textbf{100} &\textbf{100} &37.39 &89.89 &\textbf{100} &\textbf{100} &\textbf{100}\\
           &58.02 &78.06 &92.68 &98.69 &99.94 &85.01 &91.13 &98.11 &99.89 &\textbf{100} &95.13 &98.78 &99.95 &\textbf{100} &\textbf{100}\\
           &\textbf{67.64} &\textbf{87.27} &\textbf{98.01} &\textbf{99.96} &\textbf{100} &\textbf{93.29} &\textbf{99.16} &\textbf{99.96} &\textbf{100} &\textbf{100} &\textbf{99.84} &\textbf{99.98} &\textbf{100} &\textbf{100} &\textbf{100}\\
    \hline
    \end{tabular}%
    }
  \label{tab:suc2}%
\end{table*}%
\begin{table*}[htbp]
\centering
  \caption{Average number of integers in the search space of the C-LAMBDA or MC-LAMBDA method versus phase noise $\sigma_{\psi} (\text{mm})$, number of satellites (\#Sat), and number of baselines.}
\setlength{\tabcolsep}{1.0mm}{
\resizebox{\textwidth}{!}{
  \scalebox{1.0}{
    \begin{tabular}{|c|ccccc|ccccc|ccccc|}
    \hline
     & \multicolumn{5}{c|}{Single Baseline} & \multicolumn{5}{c|}{Two Baselines} & \multicolumn{5}{c|}{Three Baselines}\\
    \hline
    \diagbox{\#Sat}{$\sigma_{\psi}$}  &9 &7 &5 &3 &1       &9 &7 &5 &3 &1  &9 &7 &5 &3 &1\\
    \hline
    4
    &14203.3 &6619.9 &2536.8 &595.7 &40.5  &99916.2 &99965.5 &99675.1 &95971.5 &4175.6 &100000 &100000 &100000 &100000 &73607.3\\
     
    \hline
    5
    &3656.1 &1483.7 &492.3 &94.6 &3.8 &99626.3 &99255.3 &89286.8 &25015.6   &12.1 &100000 &100000 &100000 &95734.4 &49.9\\ 
          
    \hline
    6
    &1239.4 &463.4 &127.1 &16.1 &2.1  &96414.6 &81804.9 &33112.7 &599.7 &2.0 &100000 &100000 &93842.6 &9697.9 &2.1\\
          
    \hline
    7  
    &533.4 &166.8 &34.5 &3.8 &2.0 &80451.1 &36197.0 &2107.1 &10.6 &2.0 &99839.2 &93376.0 &32230.0 &40.9 &2.0\\
     
    \hline
    8  
    &216.4 &54.8 &8.6 &2.3  &2.0 &40763.2 &6298.8 &76.3 &2.3 &2.0 &93682.6 &49282.2 &1020.4 &2.7 &2.0\\
      
    \hline
    \end{tabular}%
    }
    }
    }
  \label{tab:num1}%
\end{table*}%
\begin{table}[htbp]
\caption{Computational Complexity of Different Algorithms.}
\label{tab:complex}
\centering
    \begin{tabular}{|c |c|}
    \hline
    \textbf{Algorithms}  &Complexity \\
    \hline
        Proposed &$O \!\! \left( {\mathcal{A}^3} \frac{\mathcal{S}^3\left(\mathcal{S}-1\right)}{2} K_w^2 \right)$ \\
    \hline
    AFM &$O \! \left({\mathcal{A}} {\mathcal{S}} K_a^n \right)$\\
    \hline
        C-LAMBDA&
    \multirow{2}{*}{$O \! \left({\mathcal{A}^2} {\mathcal{S}^2} K_sK_l \right)$} \\
      MC-LAMBDA &\\
       \hline
    \end{tabular}
\end{table}

Table~\ref{tab:suc2} reports the success rates of ambiguity resolution for the proposed method and the benchmark methods. These results are obtained for a number of different scenarios involving different numbers of baselines, different numbers of satellites, and various noise levels. The success rate is calculated as the percentage of trials in which all the integer ambiguities are retrieved correctly. Note that success cases can provide accurate attitude estimation, whereas failure cases tend to produce outliers or estimates that are meaningless. This underscores the pertinence of success rate as a performance indicator.

\begin{figure}[htbp]
\centering
\subfigure[$\sigma_{\psi} = 1\text{ mm}$] { \label{fig:1a} 
\includegraphics[width=0.46\columnwidth, height=0.33\columnwidth]{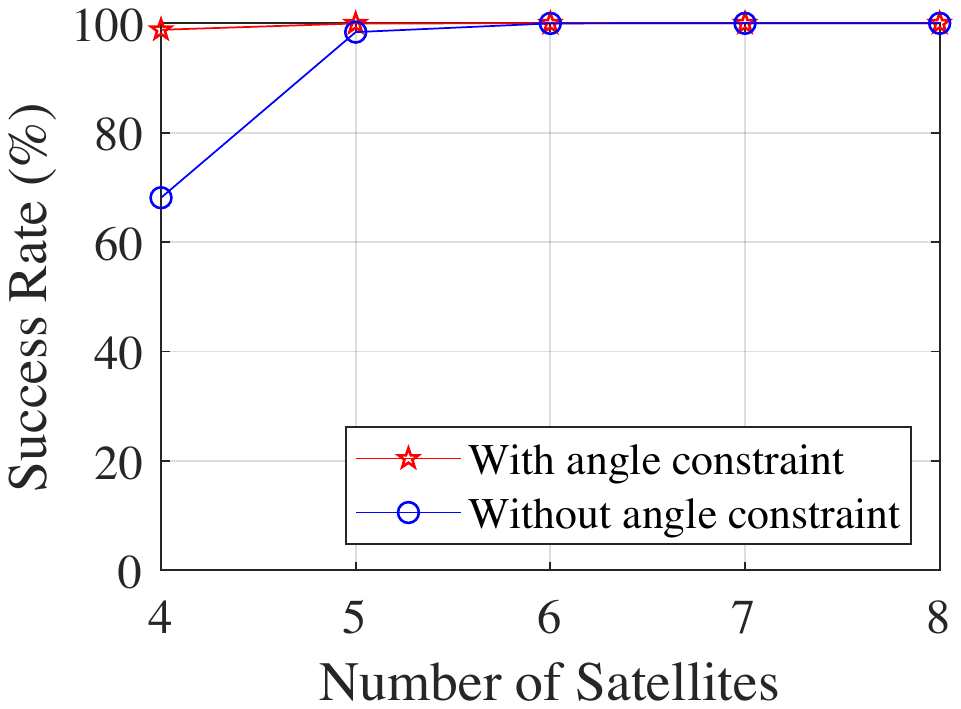}
} 
\subfigure[$\sigma_{\psi} = 3\text{ mm}$] { \label{fig:1b} 
\includegraphics[width=0.46\columnwidth, height=0.33\columnwidth]{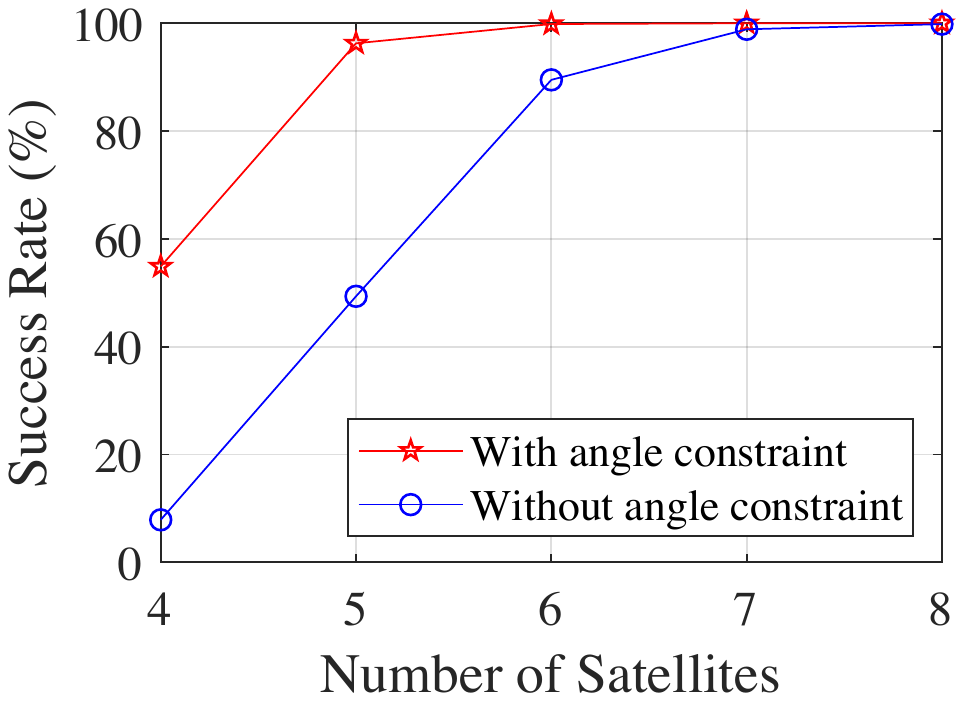} 
} 

\subfigure[$\sigma_{\psi} = 5\text{ mm}$] { \label{fig:1c} 
\includegraphics[width=0.46\columnwidth, height=0.33\columnwidth]{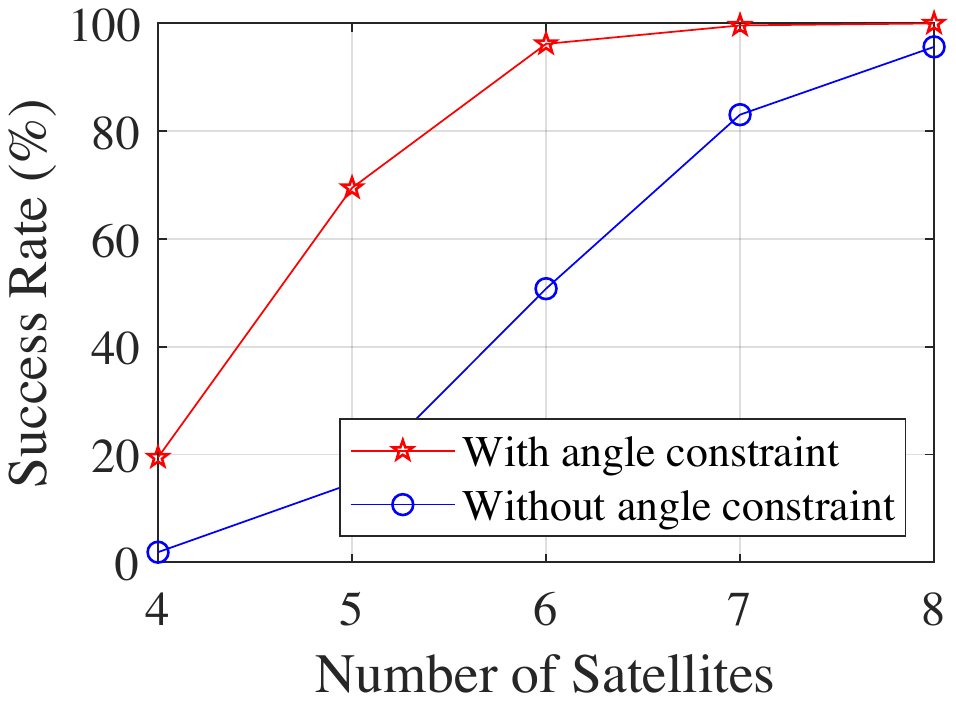} 
} 
\subfigure[$\sigma_{\psi} = 7\text{ mm}$] { \label{fig:1d} 
\includegraphics[width=0.46\columnwidth, height=0.33\columnwidth]{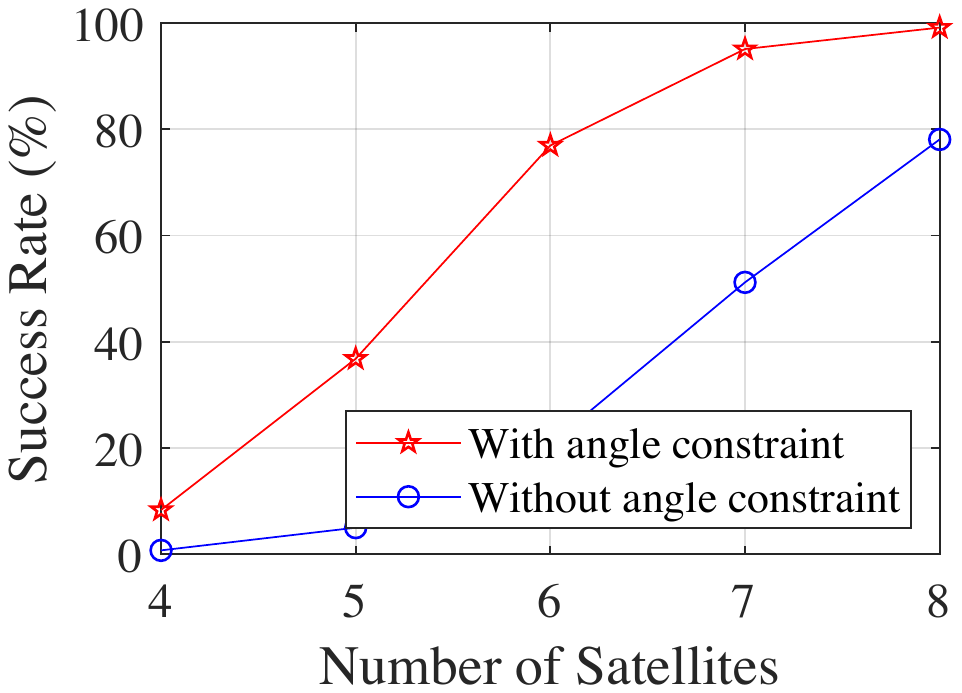} 
} 
\caption{Success rate with and without the angle constraint for dual-baseline set-up.}
\label{fig:angle}
\end{figure}

Based on the search-and-shrink or search-and-expand algorithm, the C-LAMBDA and MC-LAMBDA methods can adaptively adjust the search space in the integer domain. Theoretically, they could find the global optimum of the optimization \eref{eq:cls}. However, the performance of the C-LAMBDA and MC-LAMBDA methods dramatically depends on the accuracy of the float solutions. In challenging scenarios, the inadequacy of the float solutions results in an extensive set of potential integer vectors to be evaluated by these two approaches. The set can grow even larger as the number of baselines increases and the dimensionality of the ambiguity vectors/matrices grows. To avoid computational issues due to prohibitively large search spaces, we force the adaptive adjustment strategies of the C-LAMBDA and MC-LAMBDA methods to stop after a search limit of $10^5$ integer searches.

As shown in Table~\ref{tab:suc2}, the proposed approach and the C-LAMBDA method have almost the same success rates in single-baseline configurations, both outperforming the AFM method. In the multiple baseline case, the proposed method utilizes the angle constraint to evaluate and improve the single-baseline estimates; hence, its performance improves as more baselines are used. This should be the case, more or less, with the MC-LAMBDA method if a search limit is not imposed. 
However, without the search limit, an unacceptable computational complexity results in the case of the MC-LAMBDA algorithm. As shown in \tref{tab:num1}, the C-LAMBDA and MC-LAMBDA methods need to evaluate just a few integer candidates in ideal scenarios. However, an enormous number of integers need to be evaluated in many challenging setups. 
The application of a search limit to the C-LAMBDA and MC-LAMBDA methods explains their low success rate in Table~\ref{tab:suc2}, but it makes the two algorithms computationally feasible.



Fig.~\ref{fig:angle} demonstrates how the angle constraint affects the success rate for a dual-baseline set-up. Table~\ref{tab:suc2} and Fig.~\ref{fig:angle} demonstrate that the proposed approach provides the highest success rate in almost all scenarios, which illustrates that this method is remarkably effective. In contrast with the AFM-based methods, the C-WLS approach jointly utilizes pseudo-range and carrier-phase measurements instead of only carrier-phase observation. Besides, we strengthen the proposed C-WLS model by incorporating the high correlations of double-difference observations. On the other hand, the proposed approach outperforms the C-LAMBDA and MC-LAMBDA methods, which is attributed to the additional constraint on the carrier-phase residual errors and the proposed search strategy. The superiority of the proposed method is more pronounced in scenarios with fewer satellites and higher noise levels.

\tref{tab:complex} summarizes the computational complexity of different algorithms. We characterize the computational complexity based on the required number of operations. For simplicity, we count each addition, subtraction, multiplication, division, and square root calculation as one arithmetic operation. The complexity of each method depends on the two parameters ${\mathcal{A}}$ and ${\mathcal{S}}$ in addition to a third algorithm-specific parameter $K$. The computational complexity of the proposed method depends on the parameter $K_w$, which represents the average number of possible integer ambiguities for each element of ${\bm{\Psi}}$. For the AFM method, we assume that an identical search step is applied for each Euler angle. The parameter $K_{a}$ represents the number of potential attitude angles, which is practically much greater than $K_w$. The symbol $n$ denotes the number of Euler angles, i.e., $n = 2$ for a single baseline or $n = 3$ for multiple baselines. Since ${\mathcal{A}}, {\mathcal{S}}, K_w \ll K_a$, the complexity of the proposed method is much lower than that of the AFM method, especially in multi-baseline scenarios. Note that $K_w$ and $K_{a}$ are proportionate to the baseline length. Hence, the proposed approach and the AFM method are more adequate for short-baseline set-ups (meter level). The C-LAMBDA and MC-LAMBDA methods include three steps: shrink the search space, enumerate the integer vectors/matrices, and minimize the objective function \cite{giorgi2008search}. The first step dominates the complexity of these methods as it accounts for at least 60\% of the total arithmetic operations  \cite{giorgi2008search}. The parameters $K_s$ and ${\mathcal{A}^2} {\mathcal{S}^2}K_l$ denote the number of iterations in the shrinking process and the required operations in each iteration, respectively. $K_s$, as shown in \tref{tab:num1}, is extremely sensitive to various parameters, such as ${\mathcal{A}}$, ${\mathcal{S}}$, $\mathbf{H}$, $\bm{\Xi}$, and $\bm{\Pi}$; nevertheless, $K_l$ depends on the value of $K_s$. In ideal environments, the product $K_sK_l$ can be smaller than ${\mathcal{A}} \frac{\mathcal{S}\left(\mathcal{S}-1\right)}{2} K_w^2$ such that the C-LAMBDA and MC-LAMBDA algorithms are more computationally efficient than the proposed approach. However, in challenging scenarios such as those with poor PDOP, high noise levels, and severe multi-path, $K_sK_l$ can be much greater than ${\mathcal{A}} \frac{\mathcal{S}\left(\mathcal{S}-1\right)}{2} K_w^2$ ($K_sK_l$ could reach ${10\mathcal{A}} \frac{\mathcal{S}\left(\mathcal{S}-1\right)}{2} K_w^2$, $100{\mathcal{A}} \frac{\mathcal{S}\left(\mathcal{S}-1\right)}{2} K_w^2$, or even more), making the computational complexity of the C-LAMBDA and MC-LAMBDA methods being far worse than that of the proposed method. For instance, consider the antenna geometry described by ${{\mathbf{X}}_b^1}$ in \eref{xb}, i.e., ${\mathcal{A}} =2$, with $\sigma_{\psi} = 3 \text{ mm}$ and $\sigma_{\rho} = 30 \text{ cm}$. For ${\mathcal{S}}= 6$, we have $K_w = 10.19$, $K_a = 180$, $K_s = 10.60$, and $K_l = 35.03$, then ${\mathcal{A}^2} {\mathcal{S}^2}K_sK_l < {\mathcal{A}^3} \frac{\mathcal{S}^3\left(\mathcal{S}-1\right)}{2} K_w^2 < {\mathcal{A}} {\mathcal{S}}K_a^3$. For ${\mathcal{S}}= 4$, we have $K_w = 8.33$, $K_a = 180$, $K_s = 25015.6$, and $K_l = 14.01$, then we obtain ${\mathcal{A}^3} \frac{\mathcal{S}^3\left(\mathcal{S}-1\right)}{2} K_w^2 < {\mathcal{A}^2} {\mathcal{S}^2}K_sK_l < {\mathcal{A}} {\mathcal{S}}K_a^3$.

\begin{figure*}[htbp]
\centering
\subfigure[Yaw] { \label{fig:2a} 
\includegraphics[width=0.62\columnwidth, height=0.46\columnwidth]{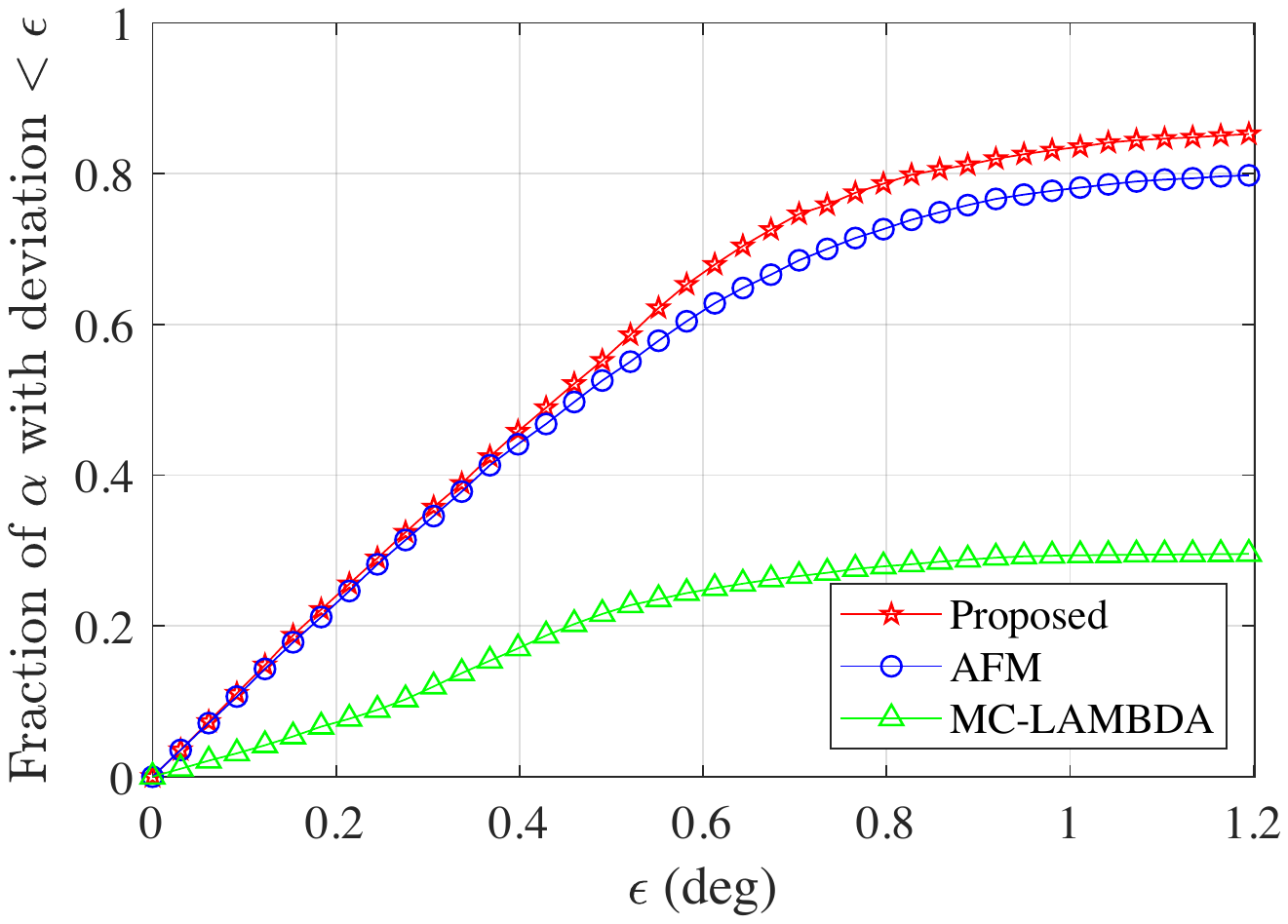}
} 
\subfigure[Pitch] { \label{fig:2b} 
\includegraphics[width=0.62\columnwidth, height=0.46\columnwidth]{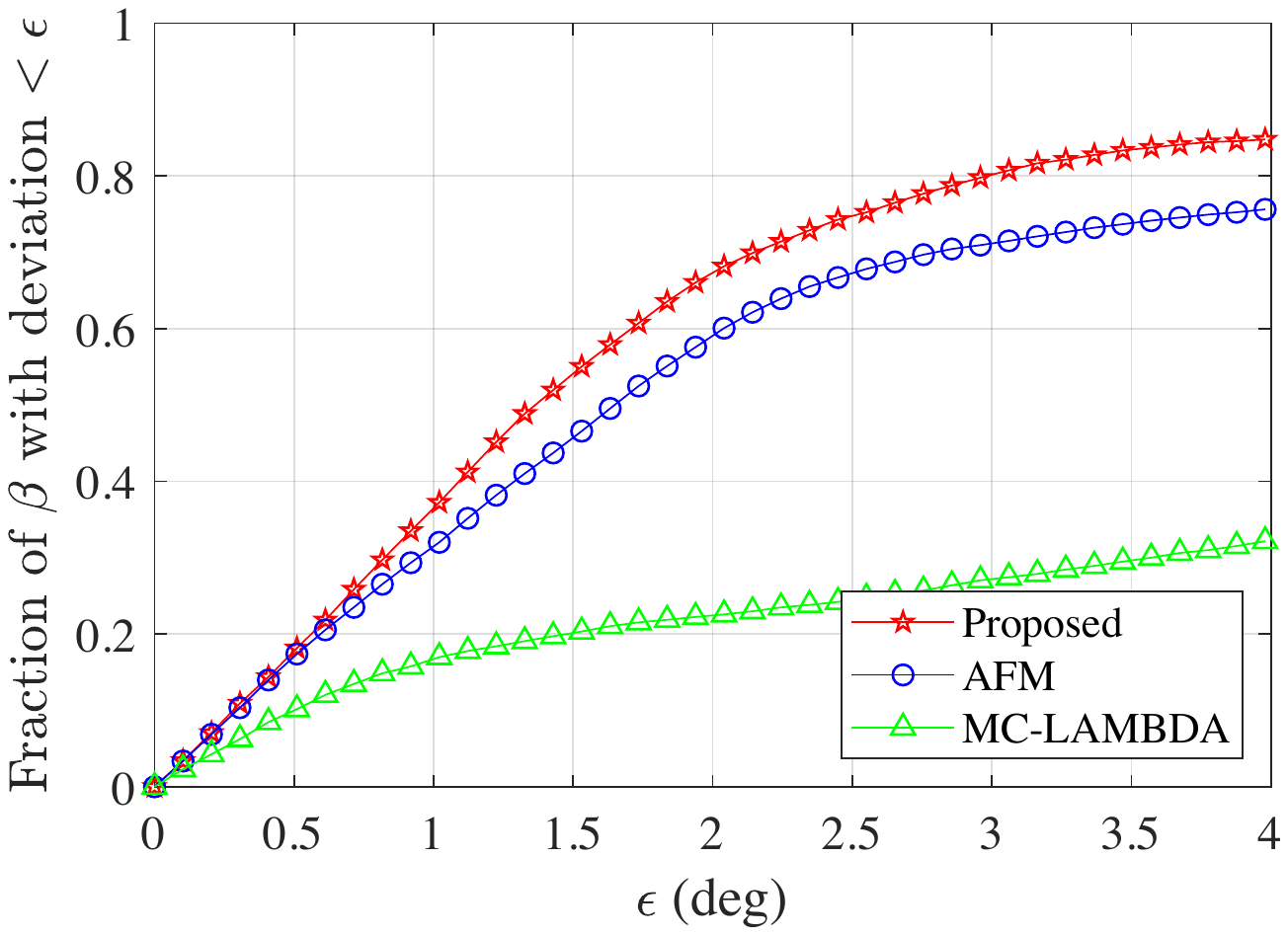} 
} 
\subfigure[Roll] { \label{fig:2c} 
\includegraphics[width=0.62\columnwidth, height=0.46\columnwidth]{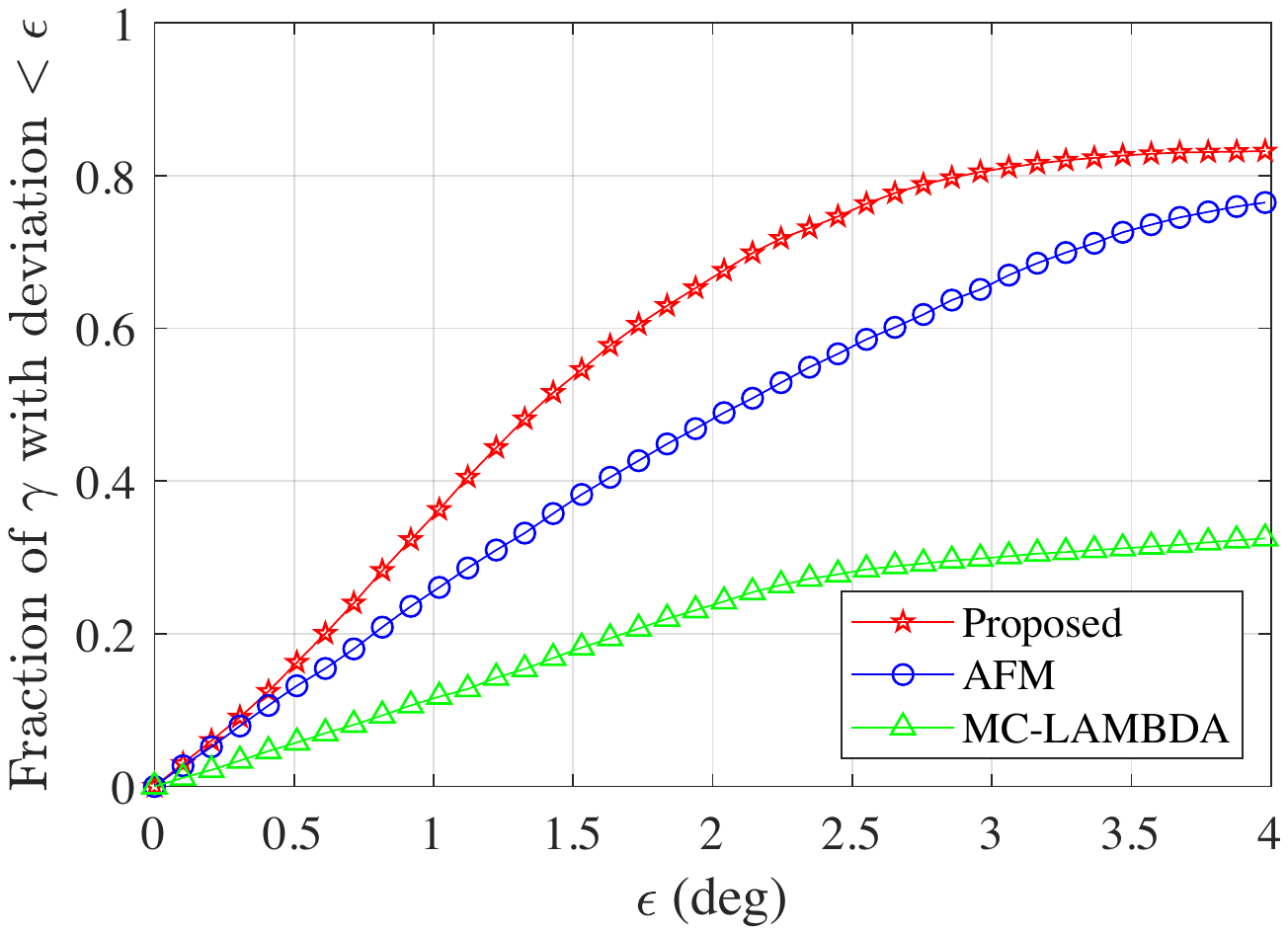} 
} 
\caption{Attitude angle error distribution for data 1.} 
\label{fig:result1} 
\end{figure*}
\begin{figure*}[htbp]
\centering
\subfigure[Yaw] { \label{fig:3a} 
\includegraphics[width=0.62\columnwidth, height=0.46\columnwidth]{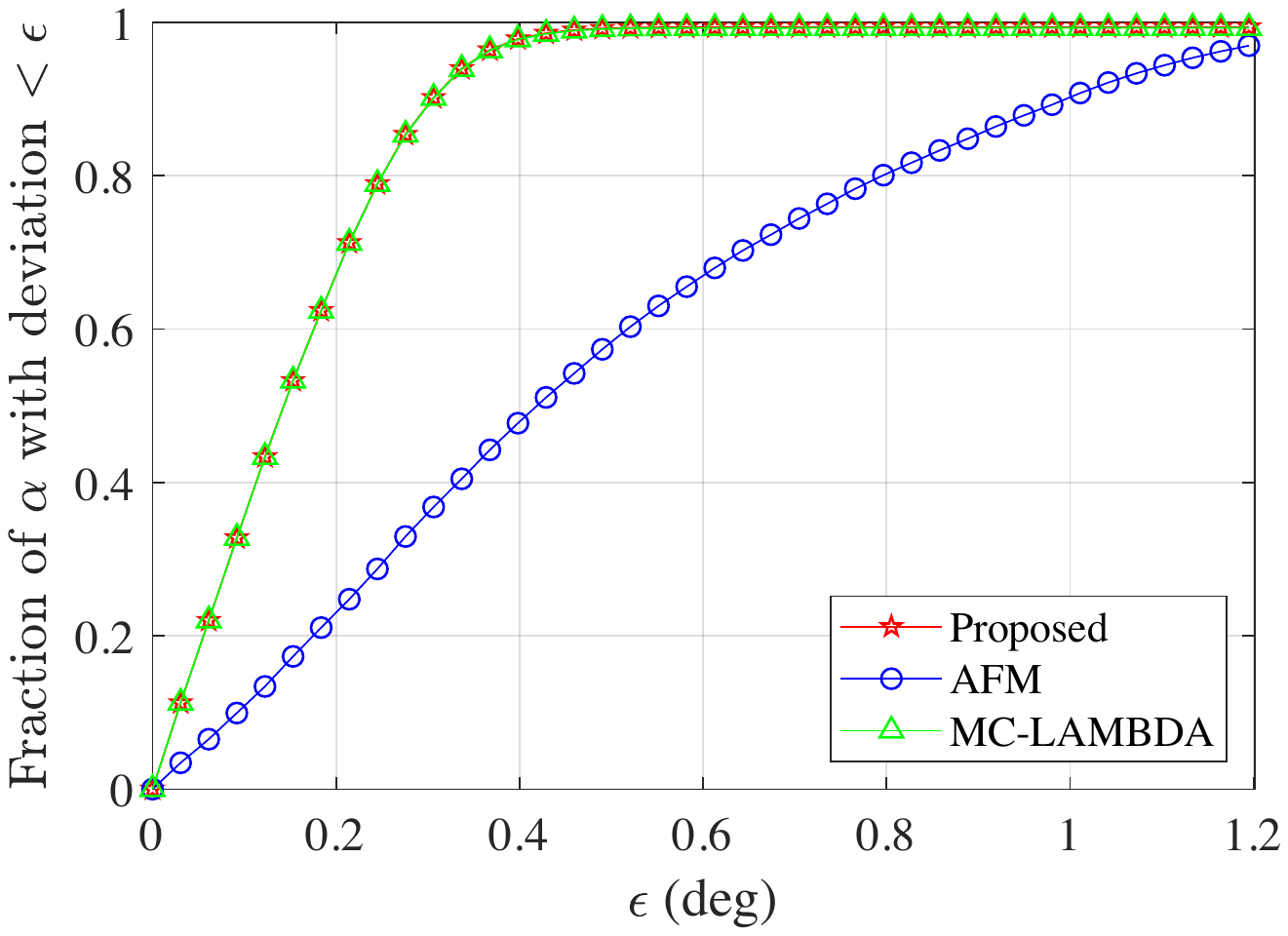}
} 
\subfigure[Pitch] { \label{fig:3b} 
\includegraphics[width=0.62\columnwidth, height=0.46\columnwidth]{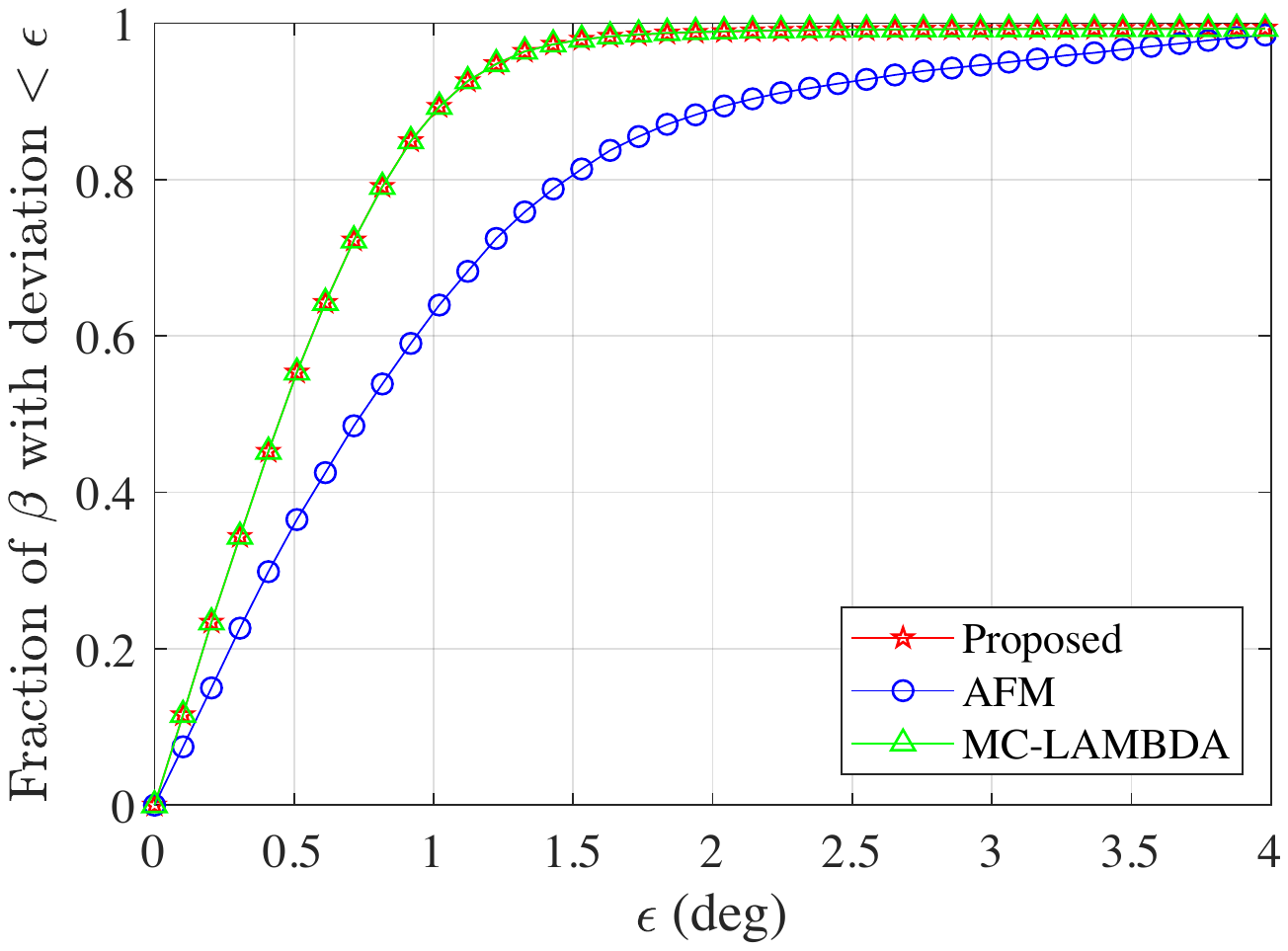} 
} 
\subfigure[Roll] { \label{fig:3c} 
\includegraphics[width=0.62\columnwidth, height=0.46\columnwidth]{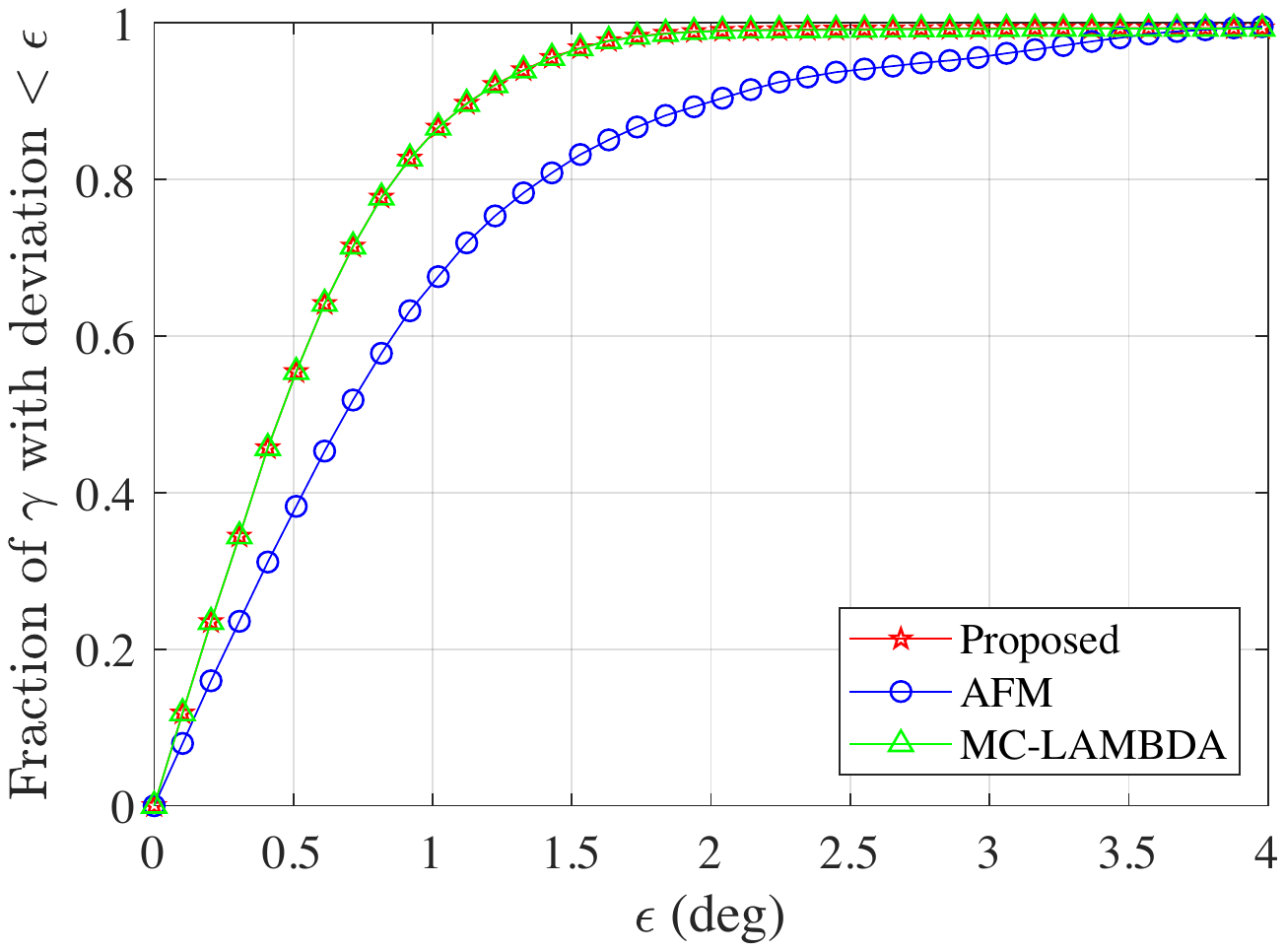} 
} 
\caption{Attitude angle error distribution for data 2.} 
\label{fig:result2} 
\end{figure*}

\subsection{Experimental Results}
In this section, we provide performance evaluation based on experimental tests. Three ANAVS multi-sensor modules were used \cite{anavs}, which include a multi-GNSS receiver, an inertial measurement unit (IMU), and a barometer. To be able to measure the ground truth with reasonable accuracy, we consider two static experiments performed on the campus of King Abdullah University of Science and Technology, Thuwal, Saudi Arabia. The receivers were firmly attached to a fixed platform and remained stationary throughout the experiments. The baseline matrix in the body frame used in both tests is described by
\[{{\mathbf{X}}_b^3} = \left[ {\begin{array}{*{20}{c}}
{0.63}&{0.315}\\
0&{0.545}
\end{array}} \right].\]

Two datasets were collected with 5 Hz sampling on 6 April 2021 (between 18:05 and 18:35, UTC time) and 25 May 2021 (from 20:34 to 21:50, UTC time), respectively. The first experiment collected around 9000 epochs of data within 30 minutes, and the other test collected 25500 epochs of observations. We apply the elevation-dependent model to characterize the GNSS observation variance components \cite{jin1996relationship} and consider the correlation introduced by differencing operations. Note that in our comparison of various methods, we utilize only GPS observations, with the number of tracked satellites and the corresponding PDOP values shown in Fig.~\ref{fig:PDOP}. In contrast, the ground truth is estimated by leveraging observations from two GNSS constellations (GPS and GLONASS), IMU measurements, and barometer data integrated over the whole observation time. 




\begin{figure}[tbp]
\centering
\subfigure[First experiment] { \label{fig:pdopa} 
\includegraphics[width=0.465\columnwidth]{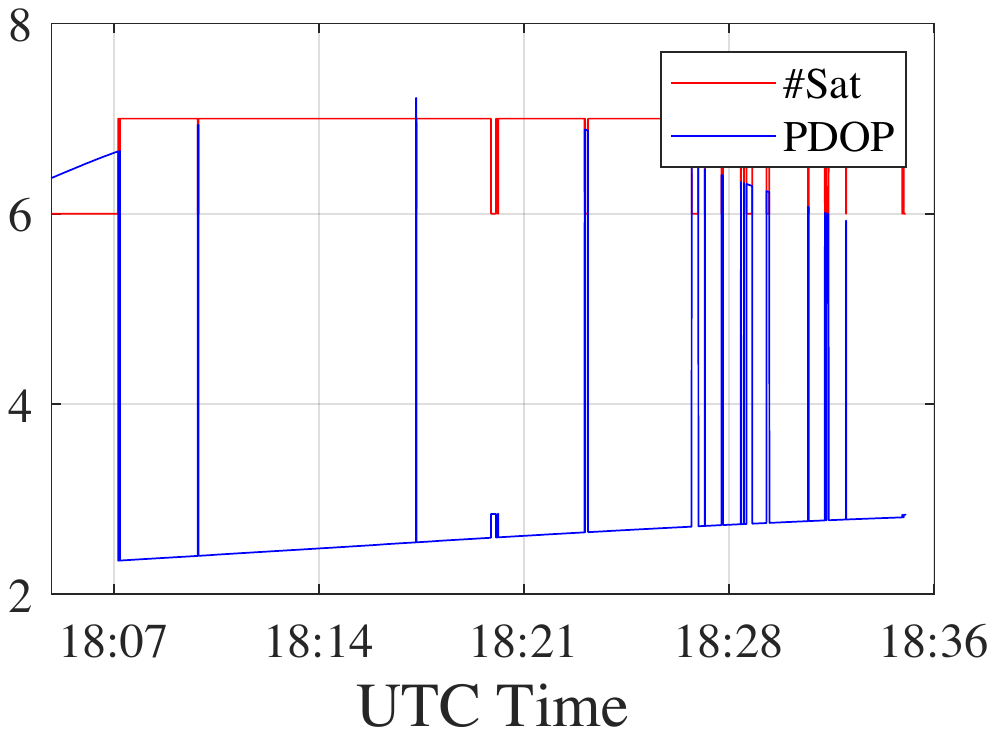}
} 
\subfigure[Second experiment] { \label{fig:pdopb} 
\includegraphics[width=0.465\columnwidth]{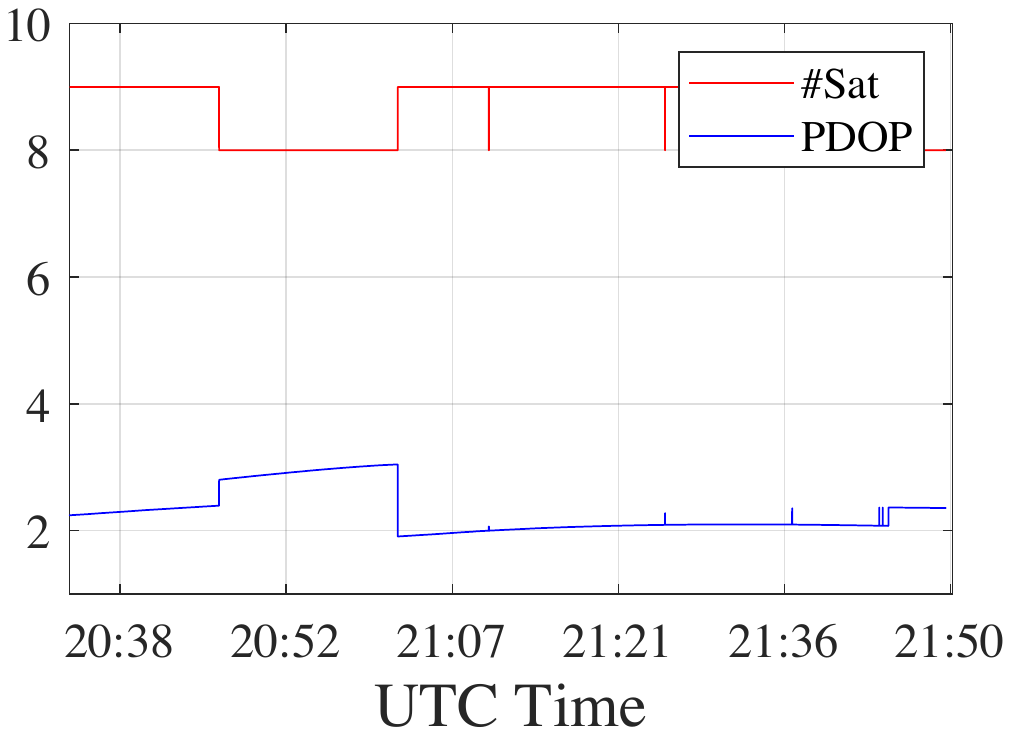} 
} 
\caption{The number of tracked satellites and PDOP values.}
\label{fig:PDOP}
\end{figure}

Table~\ref{tab:suc3} summarizes the experimental results by listing the success rate and the root-mean-square error (RMSE) of attitude angles for the two experiments. Only success cases are considered in calculating the RMSE to eliminate the contribution of outliers. As stated earlier, the constrained search space (at most $10^5$ integers) in the integer domain explains the low ambiguity resolution success rate of the MC-LAMBDA method in the first experiment. Data 1 was measured in a particular area with trees, buildings, and other obstructions, which results in signal blockage and multipath errors. Data 2 was collected with good satellite visibility.
Hence, the results of data 2 are remarkably better than those obtained from Data 1. From Table~\ref{tab:suc3}, we see that the proposed method has the best success-rate performance, and the advantage is more significant in challenging environments, which indicates its notable reliability. 

\begin{table}[tbp]
\centering
\caption{Success rate (\%) and RMSE (deg.) based on the experimental data.}
\label{tab:suc3}
\centering
\setlength{\tabcolsep}{1mm}{
\begin{tabular}{|c| c c c| c  c c|}
\hline
   \multirow{2}[2]{*} &\multicolumn{3}{c|}{Success rate}  &\multicolumn{3}{c|}{RMSE} \\
   &MC-LAMBDA &AFM &Proposed  &$\alpha$ &$\beta$ &$\gamma$\\
   \hline
Data 1 &25.85 &76.20 &83.18 &0.13 &0.92 &1.27  \\
\hline
Data 2 &99.28 &97.64 &99.33 &0.03 &0.42  &0.45 \\
\hline
\end{tabular}
}
\end{table}

Fig.~\ref{fig:result1} and Fig.~\ref{fig:result2} shows the distribution of attitude angle estimation error for two experiments. From the plots, it can be readily seen that the proposed method obtains the largest fraction of Euler angle estimates with an error smaller than a given value. This is more visible with data 1. These figures indicate that the proposed approach can provide better attitude angle estimates compared with the AFM-based and the MC-LAMBDA methods.

 								




\section{Conclusion}
\label{sec:conclusion}
A constrained wrapped least-squares (C-WLS) method for attitude determination using multiple GNSS antennas is presented. Unlike existing attitude determination methods, the proposed approach allows the attitude determination problem to be tackled directly--without initially estimating the integer ambiguities--while respecting the antenna array constraints and the integer property of the carrier-phase ambiguities. Given a stated assumption on the double-difference carrier-phase noise, the proposed approach includes an additional constraint on the residual phase to strengthen the optimization model. We solve the C-WLS problem by searching the potential candidates on the unit sphere and obtaining an improved solution through a refinement process. The proposed method targets the estimation of the attitude parameters but inherently returns the integer ambiguities. Simulation and experimental results demonstrate the effectiveness of the proposed approach.

\ifCLASSOPTIONcaptionsoff
  \newpage
\fi

\bibliographystyle{IEEEtran}
\bibliography{IEEEabrv,IEEEfull}

\begin{thebibliography}{10}
\providecommand{\url}[1]{#1}
\csname url@samestyle\endcsname
\providecommand{\newblock}{\relax}
\providecommand{\bibinfo}[2]{#2}
\providecommand{\BIBentrySTDinterwordspacing}{\spaceskip=0pt\relax}
\providecommand{\BIBentryALTinterwordstretchfactor}{4}
\providecommand{\BIBentryALTinterwordspacing}{\spaceskip=\fontdimen2\font plus
\BIBentryALTinterwordstretchfactor\fontdimen3\font minus
  \fontdimen4\font\relax}
\providecommand{\BIBforeignlanguage}[2]{{%
\expandafter\ifx\csname l@#1\endcsname\relax
\typeout{** WARNING: IEEEtran.bst: No hyphenation pattern has been}%
\typeout{** loaded for the language `#1'. Using the pattern for}%
\typeout{** the default language instead.}%
\else
\language=\csname l@#1\endcsname
\fi
#2}}
\providecommand{\BIBdecl}{\relax}
\BIBdecl

\bibitem{hofmann2012global}
B.~Hofmann-Wellenhof, H.~Lichtenegger, and J.~Collins, \emph{Global positioning
  system: theory and practice}.\hskip 1em plus 0.5em minus 0.4em\relax Springer
  Science \& Business Media, 2012.

\bibitem{8713879}
W.~Jiang, D.~Liu, B.~Cai, C.~Rizos, J.~Wang, and W.~Shangguan, ``A
  fault-tolerant tightly coupled {GNSS/INS/OVS} integration vehicle navigation
  system based on an {FDP} algorithm,'' \emph{IEEE Transactions on Vehicular
  Technology}, vol.~68, no.~7, pp. 6365--6378, 2019.

\bibitem{8781924}
B.~Cui, X.~Wei, X.~Chen, J.~Li, and L.~Li, ``On sigma-point update of cubature
  {Kalman} filter for {GNSS/INS} under {GNSS}-challenged environment,''
  \emph{IEEE Transactions on Vehicular Technology}, vol.~68, no.~9, pp.
  8671--8682, 2019.

\bibitem{8972427}
A.~{Raskaliyev}, S.~H. {Patel}, T.~M. {Sobh}, and A.~{Ibrayev}, ``{GNSS}-based
  attitude determination techniques—a comprehensive literature survey,''
  \emph{IEEE Access}, vol.~8, pp. 24\,873--24\,886, 2020.

\bibitem{5975219}
F.~{Aghili} and A.~{Salerno}, ``Driftless {3-D} attitude determination and
  positioning of mobile robots by integration of {IMU} with two {RTK GPSs},''
  \emph{IEEE/ASME Transactions on Mechatronics}, vol.~18, no.~1, pp. 21--31,
  Feb. 2013.

\bibitem{ChiOct2014}
K.~Chiang, M.~Psiaki, S.~Powell, R.~Miceli, and B.~O. Hanlon, ``{GPS}-based
  attitude determination for a spinning rocket,'' \emph{IEEE Transactions on
  Aerospace and Electronic Systems}, vol.~50, no.~4, pp. 2654--2663, Oct. 2014.

\bibitem{6359888}
Z.~Wu, M.~Yao, H.~Ma, W.~Jia, and F.~Tian, ``Low-cost antenna attitude
  estimation by fusing inertial sensing and two-antenna {GPS} for
  vehicle-mounted satcom-on-the-move,'' \emph{IEEE Transactions on Vehicular
  Technology}, vol.~62, no.~3, pp. 1084--1096, 2013.

\bibitem{8869594}
M.~{Farkas}, B.~{Vanek}, and S.~{Rozsa}, ``Small {UAV’s} position and
  attitude estimation using tightly coupled multi baseline multi constellation
  {GNSS} and inertial sensor fusion,'' in \emph{2019 IEEE 5th International
  Workshop on Metrology for AeroSpace (MetroAeroSpace)}, June 2019, pp.
  176--181.

\bibitem{bijjahalli2017novel}
S.~Bijjahalli, S.~Ramasamy, and R.~Sabatini, ``A novel vehicle-based {GNSS}
  integrity augmentation system for autonomous airport surface operations,''
  \emph{Journal of Intelligent \& Robotic Systems}, vol.~87, no.~2, pp.
  379--403, 2017.

\bibitem{vetrella2016differential}
A.~Vetrella, G.~Fasano, D.~Accardo, and A.~Moccia, ``Differential {GNSS} and
  vision-based tracking to improve navigation performance in cooperative
  multi-{UAV} systems,'' \emph{Sensors}, vol.~16, no.~12, p. 2164, 2016.

\bibitem{wang2019research}
H.~Wang, N.~Liu, Z.~Su, and Q.~Li, ``Research on low-cost attitude estimation
  for {MINS}/dual-antenna {GNSS} integrated navigation method,''
  \emph{Micromachines}, vol.~10, no.~6, p. 362, 2019.

\bibitem{liugnss2019}
X.~Liu, T.~Ballal, and T.~Y. Al-Naffouri, ``{GNSS}-based localization for
  autonomous vehicles: Prospects and challenges,'' in \emph{2019 27th European
  Signal Processing Conference (EUSIPCO)}.\hskip 1em plus 0.5em minus
  0.4em\relax IEEE, 2019.

\bibitem{hofmann2007gnss}
B.~Hofmann-Wellenhof, H.~Lichtenegger, and E.~Wasle, \emph{{GNSS}--global
  navigation satellite systems: {GPS, GLONASS, Galileo,} and more}.\hskip 1em
  plus 0.5em minus 0.4em\relax Springer Science \& Business Media, 2007.

\bibitem{giorgi2013low}
G.~Giorgi and P.~J.~G. Teunissen, ``Low-complexity instantaneous ambiguity
  resolution with the affine-constrained {GNSS} attitude model,'' \emph{IEEE
  Transactions on Aerospace and Electronic Systems}, vol.~49, no.~3, pp.
  1745--1759, 2013.

\bibitem{crassidis1999global}
J.~L. Crassidis, F.~L. Markley, and E.~G. Lightsey, ``Global positioning system
  integer ambiguity resolution without attitude knowledge,'' \emph{Journal of
  Guidance, Control, and Dynamics}, vol.~22, no.~2, pp. 212--218, 1999.

\bibitem{ChuMay2001}
C.~Chun and F.~C. Park, ``Dynamics-based attitude determination using the
  global positioning system,'' \emph{Journal of Guidance, Control and
  Dynamics}, vol.~24, no.~3, pp. 466--473, May. 2001.

\bibitem{wang2010motion}
B.~Wang, Z.~Deng, S.~Wang, and M.~Fu, ``A motion-based integer ambiguity
  resolution method for attitude determination using the global positioning
  system ({GPS}),'' \emph{Measurement Science and Technology}, vol.~21, no.~6,
  p. 065102, 2010.

\bibitem{psiaki2006batch}
M.~L. Psiaki, ``Batch algorithm for global-positioning-system attitude
  determination and integer ambiguity resolution,'' \emph{Journal of guidance,
  control, and dynamics}, vol.~29, no.~5, pp. 1070--1079, 2006.

\bibitem{Teunissen1995}
P.~J.~G. Teunissen, ``The least-squares ambiguity decorrelation adjustment: a
  method for fast {GPS} integer ambiguity estimation,'' \emph{Journal of
  Geodesy}, vol.~70, no.~1, p. 65–82, 1995.

\bibitem{ziebart2003leo}
M.~Ziebart and P.~Cross, ``{LEO GPS} attitude determination algorithm for a
  micro-satellite using boom-arm deployed antennas,'' \emph{GPS Solutions},
  vol.~6, no.~4, pp. 242--256, 2003.

\bibitem{LiJul2004}
Y.~Li, K.~Zhang, C.~Roberts, and M.~Murata, ``On-the-fly {GPS}-based attitude
  determination using single- and double-differenced carrier phase
  measurements,'' \emph{GPS Solutions}, vol.~8, no.~2, pp. 93--102, Jul. 2004.

\bibitem{PurApr2010}
S.~Purivigraipong, S.~Hodgart, M.~Unwin, and S.~Kuntanapreeda, ``Resolving
  integer ambiguity of {GPS} carrier phase difference,'' \emph{IEEE
  Transactions on Aerospace and Electronic Systems}, vol.~46, no.~2, pp.
  832--847, 2010.

\bibitem{GunHenJul2012}
C.~Gunther and P.~Henkel, ``Integer ambiguity estimation for satellite
  navigation,'' \emph{IEEE Transactions on Signal Processing}, vol.~60, no.~7,
  pp. 3387--3393, Jul. 2012.

\bibitem{ballal2014gnss}
T.~Ballal and C.~J. Bleakley, ``{GNSS} instantaneous ambiguity resolution and
  attitude determination exploiting the receiver antenna configuration,''
  \emph{IEEE Transactions on Aerospace and Electronic Systems}, vol.~50, no.~3,
  pp. 2061--2069, 2014.

\bibitem{Counselman1981AFM}
C.~C. {Counselman} and S.~A. {Gourevitch}, ``Miniature interferometer terminals
  for earth surveying: Ambiguity and multipath with global positioning
  system,'' \emph{IEEE Transactions on Geoscience and Remote Sensing}, vol.
  GE-19, no.~4, pp. 244--252, 1981.

\bibitem{han1996improving}
S.~Han and C.~Rizos, ``Improving the computational efficiency of the ambiguity
  function algorithm,'' \emph{Journal of Geodesy}, vol.~70, no.~6, pp.
  330--341, 1996.

\bibitem{yang2016rotation}
Y.~Yang, X.~Mao, and W.~Tian, ``Rotation matrix method based on ambiguity
  function for {GNSS} attitude determination,'' \emph{Sensors}, vol.~16, no.~6,
  p. 841, 2016.

\bibitem{teunissen2017springer}
P.~Teunissen and O.~Montenbruck, \emph{Springer handbook of global navigation
  satellite systems}.\hskip 1em plus 0.5em minus 0.4em\relax Springer, 2017.

\bibitem{6491499}
N.~Nadarajah, P.~J.~G. Teunissen, and N.~Raziq, ``Instantaneous {GPS–Galileo}
  attitude determination: Single-frequency performance in satellite-deprived
  environments,'' \emph{IEEE Transactions on Vehicular Technology}, vol.~62,
  no.~7, pp. 2963--2976, 2013.

\bibitem{monikes2001modified}
R.~Monikes, J.~Wendel, and G.~F. Trommer, ``A modified {LAMBDA} method for
  ambiguity resolution in the presence of position domain constraints,'' in
  \emph{Proceedings of the 18th International Technical Meeting of the
  Satellite Division of The Institute of Navigation (ION GNSS 2005)}, September
  2005, pp. 81--87.

\bibitem{chang2005mlambda}
X.-W. Chang, X.~Yang, and T.~Zhou, ``{MLAMBDA}: a modified {LAMBDA} method for
  integer least-squares estimation,'' \emph{Journal of Geodesy}, vol.~79,
  no.~9, pp. 552--565, 2005.

\bibitem{Wang2009}
B.~Wang, L.~Miao, S.~Wang, and J.~Shen, ``A constrained {LAMBDA} method for
  {GPS} attitude determination,'' \emph{GPS Solutions}, vol.~13, no.~2, pp.
  97--107, Mar. 2009.

\bibitem{teunissen2010integer}
P.~J.~G. Teunissen, ``Integer least-squares theory for the {GNSS} compass,''
  \emph{Journal of Geodesy}, vol.~84, no.~7, pp. 433--447, 2010.

\bibitem{liu2018integrated}
H.~Liu, L.~Xu, X.~Meng, X.~Chen, and J.~Li, ``Integrated {GNSS} attitude and
  position determination based on an affine constrained model,'' \emph{The
  Journal of Navigation}, vol.~71, no.~1, pp. 134--150, 2018.

\bibitem{Ahmed2020RieOpt}
A.~{Douik}, X.~{Liu}, T.~{Ballal}, T.~Y. {Al-Naffouri}, and B.~{Hassibi},
  ``Precise {3-D GNSS} attitude determination based on {Riemannian} manifold
  optimization algorithms,'' \emph{IEEE Transactions on Signal Processing},
  vol.~68, pp. 284--299, 2020.

\bibitem{liu2018gnss}
X.~Liu, T.~Ballal, and T.~Y. Al-Naffouri, ``{GNSS} ambiguity resolution and
  attitude determination by leveraging relative baseline and frequency
  information,'' in \emph{Proceedings of the 31st International Technical
  Meeting of the Satellite Division of The Institute of Navigation (ION GNSS+
  2018)}, 2018, pp. 1647--1660.

\bibitem{teunissen2007general}
P.~J. Teunissen, ``A general multivariate formulation of the multi-antenna
  {GNSS} attitude determination problem,'' \emph{Artificial Satellites},
  vol.~42, no.~2, pp. 97--111, 2007.

\bibitem{giorgi2010testing}
G.~Giorgi, P.~J. Teunissen, S.~Verhagen, and P.~J. Buist, ``Testing a new
  multivariate {GNSS} carrier phase attitude determination method for remote
  sensing platforms,'' \emph{Advances in Space Research}, vol.~46, no.~2, pp.
  118--129, 2010.

\bibitem{park2009integer}
C.~Park and P.~J. Teunissen, ``Integer least squares with quadratic equality
  constraints and its application to {GNSS} attitude determination systems,''
  \emph{International Journal of Control, Automation and Systems}, vol.~7,
  no.~4, pp. 566--576, 2009.

\bibitem{li2002new}
Y.~Li, M.~Murata, and B.~Sun, ``New approach to attitude determination using
  global positioning system carrier phase measurements,'' \emph{Journal of
  guidance, control, and dynamics}, vol.~25, no.~1, pp. 130--136, 2002.

\bibitem{cheng2014direct}
J.~Cheng, J.~Wang, and L.~Zhao, ``A direct attitude determination approach
  based on {GPS} double-difference carrier phase measurements,'' \emph{Journal
  of Applied Mathematics}, vol. 2014, 2014.

\bibitem{zhao2017real}
L.~Zhao, N.~Li, L.~Li, Y.~Zhang, and C.~Cheng, ``Real-time {GNSS-based}
  attitude determination in the measurement domain,'' \emph{Sensors}, vol.~17,
  no.~2, p. 296, 2017.

\bibitem{teunissen2012gps}
P.~J. Teunissen and A.~Kleusberg, \emph{{GPS} for Geodesy}.\hskip 1em plus
  0.5em minus 0.4em\relax Springer Science \& Business Media, 2012.

\bibitem{wahba1965problem}
G.~Wahba, ``Problem 65-1: {A} least square estimate of spacecraft attitude,''
  \emph{SIAM Review}, vol.~7, no.~3, p. 409, 1965.

\bibitem{markley1988attitude}
F.~L. Markley, ``Attitude determination using vector observations and the
  singular value decomposition,'' \emph{Journal of the Astronautical Sciences},
  vol.~36, no.~3, pp. 245--258, 1988.

\bibitem{verhagen2006visual}
S.~Verhagen, \emph{Visualization of {GNSS}-related design parameters: manual
  for the {MATLAB} user interface {VISUAL}}, Delft University of Technology,
  2006.

\bibitem{giorgi2008search}
G.~Giorgi, P.~Teunissen, and P.~Buist, ``A search and shrink approach for the
  baseline constrained {LAMBDA} method: Experimental results,'' in
  \emph{Proceedings International Symposium GPS/GNSS}.\hskip 1em plus 0.5em
  minus 0.4em\relax Curran Associates, 2008.

\bibitem{anavs}
\BIBentryALTinterwordspacing
\emph{ANavS® Multi-Sensor Modules}, ANavS GmbH – Advanced Navigation
  Solutions, Munich, Germany. [Online]. Available:
  \url{https://anavs.com/multi-sensor-rtk-module/}
\BIBentrySTDinterwordspacing

\bibitem{jin1996relationship}
X.~X. Jin and C.~D. de~Jong, ``Relationship between satellite elevation and
  precision of {GPS} code observations,'' \emph{The Journal of Navigation},
  vol.~49, no.~2, pp. 253--265, 1996.

\end{thebibliography}

\end{document}